\documentclass[a4paper,UKenglish,cleveref, autoref, thm-restate]{lipics-v2021}

\pdfoutput=1 
\hideLIPIcs

\usepackage{rotate}
\usepackage{multirow}
\usepackage{multicol}
\usepackage{bussproofs}
\usepackage{amssymb}
\usepackage{mathtools}
\usepackage{graphics}
\usepackage{tikz-cd}
\usepackage{proof}
\usepackage{color}
\usepackage[title]{appendix}
\usepackage{cmll}  
\usetikzlibrary{positioning}
\usetikzlibrary{matrix}
\usetikzlibrary{calc}
\usetikzlibrary{backgrounds}
\usepackage{cmll} 
\usepackage{stmaryrd}
\usepackage{amsthm}
\usepackage{amsmath}
\usepackage{dsfont}
\usepackage{cite}

\newcommand{\bang}{\mathop{!}}
\newcommand{\quest}{\mathop{?}}
\newcommand\bb{\mathcal{B}}
\newcommand\bc{\mathcal{C}}
\newcommand\bd{\mathcal{D}}
\newcommand\be{\mathcal{E}}
\newcommand{\At}{\sf{At}}
\newcommand{\Rule}{\sf{Rule}}
\newcommand{\forms}{\sf{Form}}
\newcommand{\Fact}{\sf{Fact}}
\newcommand{\bs}{\mathbb{B}}
\newcommand{\ps}{\mathbb{P}}
\newcommand{\brl}{\llbracket}
\newcommand{\brr}{\rrbracket}
\newcommand{\myv}{\Vdash_\bb^\varnothing}

\newenvironment{bprooftree}
  {\leavevmode\hbox\bgroup}
  {\DisplayProof\egroup}

\makeatletter
\def\namedlabel#1#2{\begingroup
    #2%
    \def\@currentlabel{#2}%
    \phantomsection\label{#1}\endgroup
}
\makeatother

\title{From Phase Semantics to Base-extension Semantics (and back)}

\titlerunning{From Phase Semantics to Base-extension Semantics (and back)} 

\author{Ekaterina Piotrovskaya}{Department of Computer Science, University College London, UK}{kate.piotrovskaya.21@ucl.ac.uk}{https://orcid.org/0009-0009-4217-6948}{}

\authorrunning{E. Piotrovskaya} 

\Copyright{Ekaterina Piotrovskaya} 

\ccsdesc[500]{Theory of computation~Linear logic}
\ccsdesc[300]{Theory of computation~Proof theory}
\ccsdesc[300]{Theory of computation~Algebraic semantics}

\keywords{Linear logic, Phase semantics, Base-extension semantics, Equivalence of semantics} 

\nolinenumbers 

\begin{document}

\maketitle

\begin{abstract}
\noindent Linear logic admits a wide range of semantic presentations reflecting its resource-sensitive notion of consequence. One well-known example is phase semantics: an algebraic semantics in which formulas are interpreted in phase models, consisting of a phase space formed by a commutative monoid and a fixed subset, with respect to which an orthogonality relation is defined, and a valuation. A rather different and much more recent approach is given by base-extension semantics, which defines validity by inductively extending a provability relation on a base -- a set of inference rules over atomic propositions. We establish an equivalence between the two semantics by first defining bidirectional maps between bases and phase models, and then constructing an isomorphism between a phase model (resp. base) and its image under the composition of these maps. As a further contribution, we define the base-extension semantics clauses for the exponentials of linear logic. 

\end{abstract}

\section{Introduction}
Linear logic~\cite{DBLP:journals/tcs/Girard87} admits a wide range of semantic presentations, reflecting the fact that its notion of consequence is sensitive not only to formula formation but also to the structural behaviour of contexts. Unlike classical and intuitionistic systems, where semantic clauses can often be stated over sets of valuations or Kripke-style structures with relatively familiar closure conditions, the semantics of linear logic must account for the controlled use of assumptions.\footnote{Linear logic can also be seen as an abstract logic programming language~\cite{DBLP:conf/oopsla/AndreoliP90}, since it is sound and non-deterministically complete with respect to the logical interpretation of programs and has a proof-search strategy attached to it~\cite{DBLP:journals/logcom/Andreoli92,DBLP:journals/apal/MillerNPS91}. } This has led to several distinct semantic frameworks, each capturing different aspects of proofs and provability in the logic. These include, but are not limited to, coherent spaces~\cite{DBLP:journals/tcs/Girard87}, relational semantics~\cite{DBLP:journals/mscs/Falco03,DBLP:journals/tcs/Ehrhard12}, Kripke-style semantics~\cite{DBLP:journals/jsyml/AllweinD93,DBLP:journals/iandc/HodasM94}, categorical models~\cite{DBLP:journals/tcs/Abramsky93,DBLP:journals/mscs/Ehrhard93,DBLP:conf/fossacs/MaiettiPR00,Mellies09panorama,mellies:hal-00154229} and game semantics~\cite{DBLP:journals/jsyml/AbramskyJ94,DBLP:conf/lics/AbramskyM99}. 

Among the standard semantics for linear logic is \emph{phase semantics}~\cite{DBLP:journals/tcs/Girard87,DBLP:journals/iandc/FagesRS01,DBLP:conf/fsttcs/0001JS22}. In this semantics, formulas are interpreted over a \emph{phase space} consisting of a commutative monoid (whose elements are called \emph{phases}) and a fixed subset, with respect to which an orthogonality relation is defined. A pair of a phase space and a valuation function that induces the interpretation is called a \emph{phase model}. Each atom is interpreted as a \emph{fact} -- a set of phases that is equal to its closure induced by orthogonality. The interpretation is then inductively extended to formulas via the semantic clauses that mirror the resource-sensitive behaviour of the logical connectives. Phase semantics yields the standard soundness-and-completeness theorem for provability in linear logic via a canonical phase model built from contexts (i.e.~multisets of formulas)~\cite{DBLP:journals/apal/Girard93,okada1998introduction}. Phase semantics has, for instance, been used to prove decidability~\cite{dal2004phase, lafont1997finite,lafont1996undecidability} and cut admissibility results~\cite{okada1999phase,okada1998introduction}.

A rather different and much more recent approach is given by \emph{base-extension semantics}~\cite{DBLP:journals/synthese/Schroeder-Heister06,DBLP:journals/igpl/Sandqvist15,piecha2016completeness,DBLP:journals/jphil/PiechaSS15}. Base-extension semantics (B-eS) is a strand of \emph{proof-theoretic semantics}~\cite{sep-proof-theoretic-semantics,wansing2000idea} that provides an alternative approach to the meaning of logical operators. It is based on inferentialism -- a view according to which the meaning of expressions arises from their use in the system of inference and is defined by the rules of said system. The meaning is thus expressed in terms of proofs and provability: the concept of truth inherent to model-theoretic semantics is substituted with that of \emph{proofs}. 

In base-extension semantics, the starting point is a fixed atomic system -- a set of atomic propositions together with a collection of inference rules over them -- called a \emph{base}. The characterisation of consequence in B-eS is given via a judgement called \emph{support}, which is inductively defined on the structure of formulas, with the base case (i.e. support of atoms) determined by \emph{provability in a base}. In this sense, base-extension semantics replaces the usual starting point of model-theoretic semantics -- assignment of denotations or truth values -- with an inferential base, while retaining a compositional notion of semantic validity; hence, soundness and completeness resemble Tarskian semantics, but express correspondence between derivability in a logic and provability in bases rather than truth in models. For linear logic, however, this approach is not immediate. Because the logic is substructural, the shape of contexts cannot be treated as a secondary matter: multiplicity and structural behaviour of exponentials must already be reflected at the level of the semantic clauses, requiring B-eS to be capable of tracking the same structural distinctions as the proof system itself~\cite{DBLP:conf/tableaux/GheorghiuGP23,DBLP:journals/corr/abs-2504-08349,DBLP:journals/corr/abs-2402-01982}.

In this work, we establish an equivalence between these two semantics. At first sight, phase semantics and base-extension semantics are formulated in rather different terms. However, the two exhibit certain similarities: for instance, both are provability semantics, i.e., concerned with the provability of formulas, rather than the nature, shape, or distinct instances of proofs. Further, in both of them, the set of operational units of the semantics has a structure to it: algebraic in the case of phase semantics -- elements of the monoid in a phase space have a binary operation defined over them, and proof-theoretic in the case of B-eS -- atoms in a base are related by inference rules. Furthermore, both begin with fixing the interpretation of $\bot$ -- a chosen subset of the monoid for phase semantics, and a chosen atom for base-extension semantics. This makes it natural to ask whether they can be compared directly, and in particular whether one can establish a precise correspondence between them.

Such a comparison is useful for several reasons. First, an equivalence result shows that the two semantics not only define the same notion of validity, but are also intertranslatable on a structural level. This gives a correctness check on both frameworks: a semantics formulated by induction over proofs and a semantics formulated over phase spaces are able to express exactly the same inferences. Second, once a correspondence is established, results obtained in one setting can be transferred to the other. Completeness arguments, countermodel constructions, and structural properties that are transparent in phase semantics may then be reformulated in base-extension terms, while inductive arguments native to base-extension semantics may clarify the proof-theoretic content of phase model constructions. Lastly, it places base-extension semantics within the established semantic theory of linear logic, and thereby makes it possible to compare it systematically with other standard frameworks.

More generally, relating different semantics for the same logic is important because it separates essential structure from details of presentation. This provides a basis for viewing bases and phase models as categories and establishing the equivalence presented in this paper on a more abstract level, as well as connecting this work to~\cite{DBLP:journals/sLogica/PymRR25}. 

\medskip

\noindent
{\bf Contributions.} \ \ The main contribution of this paper is establishing the equivalence between phase semantics and base-extension semantics for classical linear logic. We observe that there is a systematic way of converting a phase model into a base and, similarly, constructing a phase model from a given base. We formalise this as bidirectional maps. We then introduce the notions of structure-preserving maps between bases -- \emph{base morphisms} -- and phase models -- \emph{phase model morphisms} -- and construct an isomorphism of phase models (resp. bases). As a contribution to the field of base-extension semantics, we extend the multiplicative-additive fragment of linear logic in~\cite{DBLP:journals/corr/abs-2504-08349} with the clauses for {exponentials $(!)$ and $(?)$.} 

\medskip

\noindent
{\bf Structure of the paper.} \ \ The paper is organised as follows. In Section~\ref{sec:back}, we give a brief overview of linear logic, phase semantics and base-extension semantics with relevant definitions and properties. In Section~\ref{sec:corr}, we define bidirectional maps between bases and phase models and construct an isomorphism between a phase model (resp. base) and its image under the composition of these maps. Finally, we conclude in Section~\ref{sec:conclusion}, discussing the results of this paper and proposing directions for future work. Two appendices complement the paper: proofs of soundness and completeness w.r.t. linear logic of the $(!)$ and $(?)$ base-extension semantics clauses are provided in Appendix~\ref{app:a}, and omitted proofs of lemmas in Section~\ref{sec:corr} are provided in Appendix~\ref{app:proofs}.

\medskip

\noindent
{\bf Notation.} \ \ Lowercase Latin letters ($p, q$) denote atoms; capital Latin letters ($L,K$) denote finite multisets of atoms; lowercase Greek letters ($\phi, \psi$) denote formulas; capital Greek letters ($\Gamma, \Delta$) denote multisets of formulas; and commas between multisets denote multiset union.

\section{Background} \label{sec:back}
\subsection{Linear Logic}
Classical linear logic~\cite{DBLP:journals/tcs/Girard87} (LL) is a resource-sensitive logic, meaning that formulas are consumed when used in proofs unless explicitly marked with the exponential modalities $\bang$ (bang) and $\quest$ (quest). Formulas marked with these exponentials behave \emph{classically}, i.e.~they can be contracted (duplicated) and weakened (erased) during proofs.

The propositional connectives of LL include the additive conjunction $\with$ and disjunction $\oplus$, as well as their multiplicative counterparts, tensor $\otimes$ and par $\parr$, along with their respective units. While in the intuitionistic setting implication is considered a separate connective, in the classical setting it can be expressed using $\parr$ and negation; we hence do not treat it as primitive. Note, however, that the results of this paper can be easily extended to implication, if one wishes to do so.

Below is the syntax of linear logic:

\par\nobreak\bgroup%
\begin{tikzpicture}[node distance=1ex]
  \node [matrix of math nodes] (gr) {
     \phi, \psi  & \Coloneq &
    \node(a){p}; & \mid & \node(imp){\ \ \phi^\bot}; & \mid & \node(tens){\phi \otimes  \psi}; & \mid & \mathsf{1} & \mid &
    \node(plus){\phi \oplus  \psi}; & \mid & \mathsf{0} &
    \mid & \node(bang){\mathsf{!} \phi}; \\
    & \node[right] {} ; & 
     \node(a'){ }; &  & 
     \node(imp'){ }; & \mid &
    \node(par){\phi \parr  \psi}; & \mid & \node(bot){\bot}; & \mid &
    \node(with){\phi \with  \psi}; & \mid & \node(top){\top}; &
     \mid & \node(qm){\mathsf{?} \phi}; \\
  } ;
  \node at ($(bang.north east)!.5!(qm.south east)+(1.8,0)$) {
    \refstepcounter{equation}
    \label{eq:gram}
  } ;
  \begin{scope}[on background layer]
    \fill[rounded corners,color=green!5!white]
       ($(a.north west) - (0.1,0)$) rectangle ($(a'.south east)+(.25,-.5)$) ;
    \node[align=flush left] at ($(a'.south west)!.5!(a'.south east)+(.15,-.2)$){
      \tiny\scshape \hspace{-1em} atoms
    } ;
      \fill[rounded corners,color=yellow!5!white]
      (imp.north west) rectangle ($(imp'.south east)+(.4,-.5)$) ;
    \node[align=flush left] at ($(imp'.south west)!.5!(imp'.south east)+(.15,-.2)$){
      \tiny\scshape \hspace{-0.85em} negation
    } ;
    \fill[rounded corners,color=blue!5!white]
       (tens.north west) rectangle ($(bot.south east)-(0,.5)$) ;
    \node at ($(par.south west)!.5!(bot.south east)-(0,.2)$) {
      \tiny\scshape multiplicatives
    } ;
    \fill [rounded corners,color=red!5!white]
       (plus.north west) rectangle ($(top.south east)-(0,.5)$) ;
    \node at ($(with.south west)!.5!(top.south east)-(0,.2)$) {
      \tiny\scshape additives
    } ;
    \fill [rounded corners,color=cyan!10!white]
       (bang.north west) rectangle ($(qm.south east)+(.2,-.5)$) ;
    \node[align=flush left] at ($(qm.south west)!.5!(qm.south east)+(.15,-.2)$) {
      \tiny\scshape exp.
    } ;
  \end{scope}
\end{tikzpicture}
\egroup\par\nobreak\noindent
\setcounter{equation}{0}

We denote the set of linear logic formulas over the set of atoms $\At$ by ${\forms}_{\At}$; the set of finite multisets of atoms by $\mathcal{M}(\At)$; and the set of finite multisets of formulas by $\mathcal{M}(\forms_{\At})$.

\subsection{Phase Semantics}
Phase semantics provides a provability-based interpretation of linear logic, where each formula is identified with a set of resource contexts (facts) that are sufficient to establish its proof. Hence, phase semantics is concerned with the provability of formulas, rather than the context or shape of the proofs (unlike, e.g.~coherent spaces -- a denotational semantics~\cite{DBLP:journals/tcs/Girard87}).%
\begin{definition}[Phase space]
    A \emph{phase space} is a pair $(M, \bot)$, where $(M,\cdot,1_M)$ is a commutative monoid and $\bot$ is a subset of $M$.
\end{definition}

For $X,Y \subseteq M$, define the following operations: ${X.Y \coloneqq \{x\cdot y \ | \ x \in X, y \in Y\}}$ and $X^\bot \coloneqq \{m \in M \ | \ \forall x \in X \ \ m \cdot x \in \bot \}$. For any subset $X\subseteq M$, it is the case that $X\subseteq X^{\bot\bot}$; and $X$ is called a \emph{fact} if $X^{\bot\bot} = X$. Note that $Y^\bot$ for any $Y \subseteq M$ is a fact. The set of facts is denoted by ${\Fact}_M$. By $J$ we denote a chosen submonoid of $M$ satisfying the weak idempotent property: $\{j\}^{\bot\bot} \subseteq \{j \cdot j \}^{\bot\bot}$ for all $j\in J$.

We define the following operations on facts, where $I \coloneqq 1 \cap J$:
\begin{multicols}{2}
\begin{description} \setlength\itemsep{0.5em}
    \item[{($\otimes$)}] $X \otimes Y \coloneqq (X.Y)^{\bot \bot}$
    \item[{($\parr$)}] $X \parr Y \coloneqq (X^{\bot}.Y^{\bot})^{\bot}$
    \item[{($1$)}] $1 \coloneqq \{1_M\}^{\bot\bot} (= \bot^\bot )$
    \item[{($\oplus$)}] $X \oplus Y \coloneqq (X \cup Y)^{\bot \bot}$
    \item[{($\with$)}] $X \with Y \coloneqq X \cap Y \qquad \qquad \qquad \qquad$ 
    \item[{($\top$)}] $\top \coloneqq M$
    \item[{($0$)}] $0 \coloneqq  \varnothing^{\bot \bot} (= \top^\bot)$
    \item[{($!$)}] $!X \coloneqq (X \cap I)^{\bot \bot}$
    \item[{($?$)}] $?X \coloneqq (X^\bot \cap I)^\bot$
\end{description}
\end{multicols}

\begin{example}
    Consider $(\mathbb{N}, +, 0)$ as an additive monoid over natural numbers, and fix $\bot = \{1\}$. For any $S \subseteq \mathbb{N}$, $S^\bot = \{ n \in \mathbb{N} \; | \; \forall s \in S \ \ n+s =1 \}$. Then $\{0\}^{\bot\bot} = \{1\}^\bot = \{0\}$ is a fact, but $\{3\}^{\bot\bot} = \varnothing^\bot = \mathbb{N} \neq \{3\}$ is not. Overall, ${\Fact}_\mathbb{N} = \{\varnothing,\{0\},\{1\}, \mathbb{N}\}$.
\end{example}

\begin{proposition} \label{prop:psproperties}
    Given a phase space $(M,\bot)$, the following hold for all $X, Y \in {\Fact}_M$:
    \begin{multicols}{2}
    \begin{enumerate}
        \item $X.X^\bot \subseteq \bot$
        \item $X \subseteq Y \Rightarrow Y^\bot \subseteq X^\bot$
        \item $X \subseteq Y^\bot \Leftrightarrow X.Y \subseteq \bot$
        \item $(X \cup Y)^\bot = X^\bot \cap Y^\bot$
    \end{enumerate}
    \end{multicols}
\end{proposition}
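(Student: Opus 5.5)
All four items follow directly from the definition $X^\bot = \{m \in M \mid \forall x \in X,\ m\cdot x \in \bot\}$ and commutativity of $\cdot$. None of them uses that $X,Y$ are facts; they hold for arbitrary subsets of $M$. The plan is to unfold this definition in each case, proving item 3 first, since items 1 and 2 are quick consequences of it.

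\emph{Item 3.} I will argue by chasing elements. Suppose $X \subseteq Y^\bot$ and take $x\cdot y \in X.Y$ with $x\in X$, $y \in Y$. Then $x \in Y^\bot$, so $x \cdot y \in \bot$. Conversely, suppose $X.Y \subseteq \bot$ and take $x \in X$. For every $y\in Y$ we have $x\cdot y \in X.Y\subseteq \bot$, hence $x \in Y^\bot$. No commutativity is needed here, given the order in which the product is written in the definition of $Y^\bot$.

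\emph{Item 1.} Apply item 3 with $Y$ instantiated as $X$ and $X$ instantiated as $X^\bot$. Since $X^\bot \subseteq X^\bot$ trivially, item 3 gives $X^\bot . X \subseteq \bot$. Commutativity of the monoid gives $X^\bot.X = X.X^\bot$, which yields the claim.

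\emph{Item 2.} Let $m \in Y^\bot$ and $x \in X$. Then $x\in Y$, so $m\cdot x\in\bot$; hence $m \in X^\bot$.

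\emph{Item 4.} I will again chase elements. We have $m \in (X\cup Y)^\bot$ iff $m\cdot z\in\bot$ for all $z \in X\cup Y$. This holds iff $m\cdot z \in \bot$ for all $z\in X$ and for all $z \in Y$, that is, iff $m\in X^\bot\cap Y^\bot$. Alternatively, item 2 gives the inclusion $\subseteq$, and the reverse inclusion is the element argument just given.

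There is no real obstacle. The only point requiring care is the order of factors in item 1, which is handled by commutativity.
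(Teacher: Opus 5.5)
Your proof is correct: each item is a direct unfolding of the definition $X^\bot = \{m \in M \mid \forall x \in X,\ m\cdot x \in \bot\}$. Commutativity is needed only to turn $X^\bot.X$ into $X.X^\bot$ in item 1, and you are right that the hypothesis that $X,Y$ are facts is never used. The paper states this proposition as standard phase-space background without proof, so your element-chasing argument is the routine verification it leaves implicit.
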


A phase model is a tuple $(M,\bot,v)$ of a phase space $(M,\bot)$ and a \emph{valuation} function $v: {\At} \rightarrow {\Fact}_M$. Linear logic formulas over $\At$ are interpreted in a phase model as follows.
\begin{definition}[Interpretation] \label{def:interpretation}
    Given a phase model $(M,\bot,v)$, the interpretation $(-)^v: {\forms}_{\At} \rightarrow {\Fact}_M$ is recursively defined as follows:
    \begin{align*}
        & a^v = v(a) \text{ for } a \in {\At} & & 0^v = \top^\bot \\
        & (A^\bot)^v = (A^v)^\bot & & (A \odot B)^v = A^v \odot B^v \text{ for } \odot \in \{ \otimes,\parr,\with,\oplus\} \\
        & \bot^v = \bot & & (\bang A)^v = \bang (A^v) \\
        & 1^v = \bot^\bot & & (\quest A)^v = \quest(A^v) \\
        & \top^v = M & & 
    \end{align*}
\end{definition}

\begin{definition}[Validity]
    Given a phase model $(M,\bot,v)$, a formula $A$ is \emph{valid} if $1_M \in A^v$ (equivalently, $1 \subseteq A^v$).
\end{definition}

Under the interpretation in Definition~\ref{def:interpretation}, linear logic is sound and complete with respect to phase models. We refer the reader to~\cite{DBLP:journals/tcs/Girard87,okada1998introduction} for the details.

\begin{definition}[Definable facts]\label{def:definablefact}
    Given a phase model $(M,\bot,v)$, a fact $X \in {\Fact}_M$ is \emph{definable} if $X \in im((-)^v)$; equivalently, $X = A^v$ for some $A \in \forms_{\At}$.
\end{definition}
We denote the set of all definable facts of a phase model $(M,\bot,v)$ by $D_M$.

The primary objects of interest are facts definable in a given phase model: in Definition~\ref{def:interpretation}, we interpret all linear logic formulas over such facts. We, therefore, denote a phase model by $P = (D_M, I, \bot, v)$, leaving the monoid structure and the choice of $J$ implicit.

\subsection{Base-extension Semantics}
Base-extension semantics (B-eS) is founded on an inductively defined judgment called {\em support}, which mirrors the syntactic structure of formulas. The inductive definition begins with a base case: the support of atomic propositions is determined by derivability in a given {\em base} -- a specified collection of inference rules that govern atomic propositions. Sandqvist~\cite{DBLP:journals/igpl/Sandqvist15} introduced a sound and complete formulation of B-eS for intuitionistic propositional logic.

Ideally, one wishes to represent rules in a tree-based, sequent-style format that aligns naturally with the structure of multiplicative and additive rules of LL. However, such a presentation hides the difference between discharged and undischarged hypotheses, which is crucial in natural deduction. We opt for the best of both worlds and use the sequent presentation, while keeping hypotheses being discharged explicit via splitting the premises of each sequent into two zones: $A;B$. Formulas in $A$ denote context, while those in $B$ are discharged during an application of a given rule. Note that this notational convention differs from its usual meaning in the linear logic setting, where such a split of contexts normally refers to separating (multisets of) linear formulas from the (sets of) classical or intuitionistic ones (see, e.g.,~\cite{DBLP:journals/apal/Girard93}). We proceed to make this formal.

Fix a set $\At$. A \emph{sequent} $L;K \vdash p$ over $\At$ is a triple consisting of the multisets $L$ and $K$ with elements from $\At$ and an element $p \in \At$. If $K = \varnothing$, we write $L \vdash p$ for $L;\varnothing \vdash p$. A \emph{rule} over $\At$ is a finite non-empty list of sequents $\{L_i;K_i \vdash p_i\}_{i=1,\dots,n} \cup \{L_{n+1}\vdash p_{n+1}\}$ such that $L_{n+1} \subseteq \biguplus_{i=1}^n L_i$. We write a rule as
\begin{center}
    \begin{bprooftree}
    \AxiomC{$L_1;K_1 \vdash p_{1}$}
    \AxiomC{$\ldots$}
    \AxiomC{$L_n;K_n \vdash p_{n}$}
    \TrinaryInfC{$L_{n+1} \vdash p_{n+1}$}
    \end{bprooftree}
\end{center}

\begin{definition}[Base] \label{def:base}
A {\em base} $\mathcal{B}$ is a triple $(\At_\bb, \Rule_\bb, \flat)$, where $\At_\bb$ is a set whose elements are called atoms, $\Rule_\bb$ is a set of rules over $\At_\bb$ and $\flat: {\At} \rightarrow {\At}_\bb$ is a valuation function. 

A base is closed under rules of the following shape for all $p,r \in \At_\bb$, $K,L \in \mathcal{M}(\At_\bb)$:

\begin{center}
    \begin{bprooftree}
    \def\ScoreOverhang{0.5pt}
        \AxiomC{}
        \noLine
        \UnaryInfC{}
        \RightLabel{\small{Ax}}
        \UnaryInfC{$p \vdash p$}
    \end{bprooftree}
    \quad
    \begin{bprooftree}
    \def\ScoreOverhang{0.5pt}
        \AxiomC{$K \vdash p$}
        \AxiomC{$L; p \vdash r$}
        \RightLabel{\small{Subs}}
        \BinaryInfC{$K,L \vdash r$}
    \end{bprooftree}
\end{center}

Finally, $\bot \in \At_\bb$ for every base $\bb$.
\end{definition}

Linear logic formulas over $\At$ are interpreted in a base as follows.
\begin{definition}[Interpretation] \label{def:baseinterpretation}
    Given a base $({\At}_\bb, {\Rule}_\bb,\flat)$, the interpretation $(-)^\flat: {\forms}_{\At} \rightarrow {\forms}_{\At_\bb}$ is recursively defined as follows:
    \begin{align*}
        & a^\flat = \flat(a) \text{ for } a \in {\At} & & (A \odot B)^\flat = A^\flat \odot B^\flat \text{ for } \odot \in \{ \otimes,\parr,\with,\oplus\} \\
        & u^\flat = u \text{ for } u \in \{ \bot,1,\top,0\} & & (\bang A)^\flat = \bang (A^\flat) \\
        & (A^\bot)^\flat = (A^\flat)^\bot & & (\quest A)^\flat = \quest(A^\flat)
    \end{align*}
\end{definition}

\begin{remark}
    The notion of a valuation does not normally appear in B-eS literature, and interpretation is left implicit. However, one may notice that the usual type of bases considered is those over $\At$, in which case the valuation is simply the identity.
\end{remark}

In order to fully capture the behaviour of LL in our semantics, we aspire to distinguish between atoms that can behave structurally (i.e.~be weakened and contracted) and those that cannot, mimicking the distinction between exponential and non-exponential formulas of LL. As one would expect, an atom behaves structurally if the rules of both weakening and contraction for it are present in a given base. If an atom happens to satisfy these conditions, we call it \emph{structural}.
\begin{definition}[Structural atoms] \label{def:structuralatoms}
    Given a base $\bb$, we say that $p \in \At_\bb$ is \emph{structural in $\bb$} if the following rules are present in $\bb$ for all $L,K$:
    \begin{center}
        \begin{bprooftree}
        \def\ScoreOverhang{0.5pt}
            \AxiomC{$L \vdash p$}
            \AxiomC{$K \vdash \bot$}
            \RightLabel{\small Wk}
            \BinaryInfC{$L,K \vdash \bot$}
        \end{bprooftree}
        \quad
        \begin{bprooftree}
        \def\ScoreOverhang{0.5pt}
            \AxiomC{$L \vdash p$}
            \AxiomC{$K;p,p \vdash \bot$}
            \RightLabel{\small Ctr}
            \BinaryInfC{$L,K \vdash \bot$}
        \end{bprooftree}
    \end{center}
    A multiset $S$ is \emph{structural in $\bb$} if each $s \in S$ is.
\end{definition}

Note that we do not introduce a separate type of atoms in $\At_\bb$ -- any atom may (or may not) become structural subject to $\Rule_\bb$.

\begin{definition}[Extensions]
A base $\mathcal{C}$ is an {\em extension} of a base $\mathcal{B}$ (written $\mathcal{C}\supseteq\mathcal{B}$), if $\At_\bc = \At_\bb$ and $\Rule_\bc \supseteq \Rule_\bb$ and $\flat_\bc = \flat_\bb$.
\end{definition}

\begin{definition}[Derivability in a base] Given a base $\mathcal{B}$, the \emph{atomic derivability} relation $\vdash_{\mathcal{B}} \; \subseteq \mathcal{M}({\At}_\mathcal{B}) \times {\At}_\mathcal{B}$ is defined as follows:
\begin{itemize}
    \item $p \vdash_\mathcal{B} p$ for all $p \in \At_\bb$;
    \item Given the following rule in $\mathcal{B}$:
\begin{center}
    \begin{prooftree}
    \AxiomC{$L_1;K_1 \vdash p_{1}$}
    \AxiomC{$\ldots$}
    \AxiomC{$L_n;K_n \vdash p_{n}$}
    \TrinaryInfC{$L_{n+1} \vdash p_{n+1}$}
    \end{prooftree}
\end{center}
if $L_{i},K_i \vdash_{\mathcal{B}} p_{i}$ hold for all $1 \leq i \leq n$, then $L_{n+1} \vdash_{\mathcal{B}} p_{n+1}$ holds.
\end{itemize}
\end{definition}

\begin{example}
    Let $p=$ {\em ``There is electricity outage''}, $q=$ {\em ``Wi-fi is not working''}, $r=$ {\em ``Eduroam is down''}, $s=$ {\em ``I will miss the CSL deadline''} and $\bb = (\{p,q,r,s\}, {\Rule}_\bb, \flat)$ be the base, whose ${\Rule}_\bb$ consists of the rules below: 
    \begin{center}
        \begin{bprooftree}
        \def\ScoreOverhang{0.5pt}
            \AxiomC{}
            \UnaryInfC{$p \vdash q$}
        \end{bprooftree}
        \quad
        \begin{bprooftree}
        \def\ScoreOverhang{0.5pt}
            \AxiomC{}
            \UnaryInfC{$q \vdash r$}
        \end{bprooftree}
        \quad
        \begin{bprooftree}
        \def\ScoreOverhang{0.5pt}
            \AxiomC{$p \vdash r$}
            \UnaryInfC{$\vdash s$}
        \end{bprooftree}
    \end{center}
    Then the following is a deduction in $\bb$ concluding $\vdash_\bb s$:
    \begin{prooftree}
    \AxiomC{}
    \UnaryInfC{$p \vdash q$}
        \AxiomC{}
    \UnaryInfC{$q \vdash r$}
         \RightLabel{\scriptsize{Subs}}
    \BinaryInfC{$p \vdash r$}
    \UnaryInfC{$\vdash s$}
\end{prooftree}
\end{example}

We next define the support relation, which is reducible to derivability in a given base and its extensions.
\begin{definition}[Support] \label{def:support}
    Given a base $\bb$, the \emph{support relation} $\Vdash^{L}_{\mathcal{B}} \; \subseteq \mathcal{M}({\forms}_{{\At}_\mathcal{B}}) \times {\forms}_{{\At}_\mathcal{B}}$ is defined as follows, where all multisets of atoms and of formulas are assumed to be finite:
    
    \begin{description} \setlength\itemsep{0.7em}
        \item[\namedlabel{eq:supp-at}{(At)}] $\Vdash^{L}_{\mathcal{B}} p$ $\; \Leftrightarrow \;$ for all $\mathcal{C} \supseteq \mathcal{B}$ and $K$, if $p, K \vdash_{\mathcal{C}} \bot$ then $L, K \vdash_{\mathcal{C}} \bot$, for atomic $p$;
     
        \item[\namedlabel{eq:supp-tensor}{$(\otimes)$}] $\Vdash^{L}_{\mathcal{B}} \phi \otimes \psi$ $\; \Leftrightarrow \;$ for all $\mathcal{C} \supseteq \mathcal{B}$ and $K$, if $\phi,\psi \Vdash^{K}_{\mathcal{C}} \bot$ then $\Vdash^{L,K}_{\mathcal{C}} \bot$;
    
        \item[\namedlabel{eq:supp-imply}{$(^\bot)$}] $\Vdash^{L}_{\mathcal{B}} \phi^\bot $ $\; \Leftrightarrow \;$ $\phi \Vdash^{L}_{\mathcal{B}} \bot$;
    
        \item[\namedlabel{eq:supp-1}{$(1)$}] $\Vdash^{L}_{\mathcal{B}} 1$ $\; \Leftrightarrow \;$ for all $\mathcal{C} \supseteq \mathcal{B}$ and $K$, if $\Vdash^{K}_{\mathcal{C}} \bot$ then $\Vdash^{L, K}_{\mathcal{C}} \bot$; 
    
        \item[\namedlabel{eq:supp-parr}{$(\parr)$}] $\Vdash^{L}_{\mathcal{B}} \phi \parr \psi$ $\; \Leftrightarrow \;$ for all $\mathcal{C} \supseteq \mathcal{B}$ and $K, M$, if $\phi \Vdash^{K}_{\mathcal{C}} \bot$ and $\psi \Vdash^{M}_{\mathcal{C}} \bot$ then $\Vdash^{L, K, M}_{\mathcal{C}} \bot$;

        \item[\namedlabel{eq:supp-and}{$(\with)$}] $\Vdash^{L}_{\mathcal{B}} \phi \with \psi$ $\; \Leftrightarrow \;$ $\Vdash^{L}_{\mathcal{B}} \phi$ and $\Vdash^{L}_{\mathcal{B}} \psi$;
        
        \item[\namedlabel{eq:supp-plus}{$(\oplus)$}] $\Vdash^{L}_{\mathcal{B}} \phi \oplus \psi$ $\; \Leftrightarrow \;$ for all $\mathcal{C} \supseteq \mathcal{B}$ and $K$, if $\phi \Vdash^{K}_{\mathcal{C}} \bot$ and $\psi \Vdash^{K}_{\mathcal{C}} \bot$ then $\Vdash^{L, K}_{\mathcal{C}} \bot$;

        \item[\namedlabel{eq:supp-top}{$(\top)$}] $\Vdash^{L}_{\mathcal{B}} \top$ for all $\mathcal{B}$ and $L$;

        \item[\namedlabel{eq:supp-0}{$(0)$}] $\Vdash^{L}_{\mathcal{B}} 0$ $\; \Leftrightarrow \;$ $\Vdash^{L, K}_{\mathcal{B}} \bot$ for all $K$;

        \item[\namedlabel{eq:supp-bang}{$(\bang)$}] $\Vdash^{L}_{\mathcal{B}} \bang\phi$ $\; \Leftrightarrow \;$ for all $\mathcal{C} \supseteq \mathcal{B}, \mathcal{D} \supseteq \mathcal{C}$, all $K$ and all $S$ structural in $\bb$, if $(\Vdash^{S}_{\mathcal{D}} \phi \Rightarrow \ \Vdash^{S,K}_{\mathcal{D}} \bot)$ then $\Vdash^{L,K}_{\mathcal{C}} \bot$;
            
        \item[\namedlabel{eq:supp-quest}{$(\quest)$}] $\Vdash^{L}_{\mathcal{B}} \quest\phi$ $\; \Leftrightarrow \;$ for all $\mathcal{C} \supseteq \mathcal{B}$ and all $S$ structural in $\bb$, if $\phi \Vdash^{S}_{\mathcal{C}} \bot$ then $\Vdash^{S,L}_{\mathcal{C}} \bot$;

        \item[\namedlabel{eq:supp-inf}{(Inf)}] $\Gamma \Vdash^{L}_{\mathcal{B}} \phi$ $\; \Leftrightarrow \;$ for all $\mathcal{C} \supseteq \mathcal{B}$ and all $K_i$, if $\Gamma = \{\psi_{1}, \dots , \psi_{n}\}$ and $\Vdash^{K_i}_{\mathcal{C}} \psi_{i}$ for $1 \leq i \leq n$, then $\Vdash^{L, K_1, \dots, K_n}_{\mathcal{C}} \phi$. 
    \end{description}
\end{definition}

The clauses~\ref{eq:supp-bang} and~\ref{eq:supp-quest} for classical linear logic have hitherto not appeared in the literature and are introduced here for the first time. We compare these to the existing $(\bang)$ clause for intuitionistic linear logic~\cite{DBLP:journals/corr/abs-2402-01982} in Section~\ref{sec:conclusion}.

\begin{definition}[Validity in a base] \label{def:validity}
    Given a base $\bb$, we say that an inference from $\Gamma$ to $\phi$ is \emph{valid in $\bb$} if $\Gamma \Vdash_\bb^\varnothing \phi$.
\end{definition}

\begin{definition}[Validity] 
    An inference from $\Gamma$ to $\phi$ is \emph{valid} if $\Gamma \Vdash^{\varnothing}_{\mathcal{B}} \phi$ for all $\mathcal{B}$.
\end{definition}

We read $\Gamma \Vdash^{K}_{\mathcal{B}} \phi$ as ``the base $\mathcal{B}$ supports an inference from $\Gamma$ to $\phi$ relative to the multiset~$K$''. Note that $\Gamma \Vdash^{K}_{\mathcal{B}} \phi$ is equivalent to $\Gamma, K \Vdash^{\varnothing}_{\mathcal{B}} \phi$~\cite{DBLP:journals/corr/abs-2504-08349}. When $\phi \Vdash_\bb^\varnothing \psi$ and $\psi \Vdash_\bb^\varnothing \phi$, we say that $\phi$ and $\psi$ are \emph{intersupported} in $\bb$ and write $\phi \dashv \vdash_\bb \psi$.

Finally, we show how structural atoms allow us to mimic the behaviour of the introduction rule for $(\bang)$.

\begin{lemma} \label{lemma:phitobangphi}
    Let $S$ be structural in $\bb$. Then, $\Vdash_\bb^S \phi$ implies $\Vdash_\bb^S \bang \phi$.
\end{lemma}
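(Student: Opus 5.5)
We argue directly from the clause \ref{eq:supp-bang}. Assume $\Vdash_\bb^S \phi$ with $S$ structural in $\bb$. To show $\Vdash_\bb^S \bang\phi$, fix $\mathcal{C} \supseteq \bb$, $\mathcal{D} \supseteq \mathcal{C}$, a multiset $K$, and a multiset $S'$ structural in $\bb$, and assume the hypothesis $(\Vdash^{S'}_{\mathcal{D}} \phi \Rightarrow \Vdash^{S',K}_{\mathcal{D}} \bot)$. We must conclude $\Vdash^{S,K}_{\mathcal{C}} \bot$.

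\textbf{Instantiation.} Since the clause quantifies over all $\mathcal{D} \supseteq \mathcal{C}$ and all structural $S'$, we take $\mathcal{D} = \mathcal{C}$ and $S' = S$; the latter is allowed because $S$ is structural in $\bb$ by assumption. The hypothesis then reads $\Vdash^{S}_{\mathcal{C}} \phi \Rightarrow \Vdash^{S,K}_{\mathcal{C}} \bot$.

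\textbf{Monotonicity.} We need the standard fact that support is closed under base extension: $\Vdash^{L}_{\bb}\psi$ and $\mathcal{C} \supseteq \bb$ imply $\Vdash^{L}_{\mathcal{C}}\psi$. This is checked by induction on $\psi$. Almost every clause quantifies over all extensions of the current base, so the claim is immediate from transitivity of $\supseteq$. The remaining cases are handled separately:
\begin{itemize}
\item for $\with$, $^\bot$ and \ref{eq:supp-inf}, it follows from the induction hypothesis;
\item for $0$, it follows from the $\bot$ case;
\item for \ref{eq:supp-bang} and \ref{eq:supp-quest}, note that the side condition ``$S$ structural in $\bb$'' refers to the original base. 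Structurality is preserved upward, since rules in $\bb$ remain in $\mathcal{C}$. So every multiset structural in $\bb$ is structural in $\mathcal{C}$, and the quantification over structural multisets in $\mathcal{C}$ only ranges over more multisets. The required direction, from the clause at $\bb$ to the clause at $\mathcal{C}$, therefore has to be argued carefully.
\end{itemize}
This last case is where we expect the main technical care. If the quantification in \ref{eq:supp-bang}/\ref{eq:supp-quest} is read as referring to the base at which the clause is evaluated, an extension may create new structural multisets not covered by the $\bb$-level hypothesis. In that case we instead restrict the monotonicity claim to what we actually need, namely $\phi$ itself, or simply cite monotonicity as established for the MALL fragment in~\cite{DBLP:journals/corr/abs-2504-08349} together with its exponential extension in Appendix~\ref{app:a}.

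\textbf{Conclusion.} Granting monotonicity, $\Vdash^S_{\mathcal{C}}\phi$ follows from $\Vdash^S_\bb\phi$. The instantiated hypothesis then yields $\Vdash^{S,K}_{\mathcal{C}} \bot$, which is exactly what is required. Structurality of $S$ is used only to license the choice $S'=S$. The weakening and contraction rules themselves are not needed for this direction; they matter for the converse behaviour, namely dereliction, weakening and contraction of $\bang\phi$.
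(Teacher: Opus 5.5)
Your argument is the paper's: from the hypothesis of \ref{eq:supp-bang}, choose the extension $\mathcal{C}$ itself and the structural multiset $S$, lift $\Vdash^S_\bb\phi$ to $\Vdash^S_{\mathcal{C}}\phi$ by monotonicity (Lemma~\ref{lemma:monotonicity}, which the paper simply invokes), and obtain $\Vdash^{S,K}_{\mathcal{C}}\bot$.

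There are two corrections to your write-up.

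\textbf{Quantifier scope.} You should not fix $\mathcal{D}$ and $S'$ at the outset. In \ref{eq:supp-bang} they are universally quantified inside the antecedent; the paper's proof assumes the implication ``for all $\mathcal{D}\supseteq\mathcal{C}$ and all structural $T$''. That reading is exactly what lets you pick $\mathcal{D}=\mathcal{C}$ and $S'=S$. If they were fixed and arbitrary, as in your first sentence, that choice would not be allowed.

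\textbf{Monotonicity.} Your fallback of restricting monotonicity ``to $\phi$ itself'' restricts nothing, since $\phi$ is arbitrary. The step therefore rests on the cited lemma, just as in the paper. Your worry about new structural multisets is also harmless for \ref{eq:supp-bang}: there they sit in the antecedent, so at an extension they only strengthen it.
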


\begin{proof}
    Assume, for an arbitrary $\bc \supseteq \bb$, all $\bd \supseteq \bc$, all structural $T$ and an arbitrary $M$, that $\Vdash_\bd^T \phi$ implies $\Vdash_\bd^{T,M} \bot$. Since $\Vdash_\bb^S \phi$, hence by monotonicity, $\Vdash_\bc^S \phi$, $S$ structural and $\bc$ its own extension, conclude $\Vdash_\bc^{S,M} \bot$. Since $\bc \supseteq \bb$ was arbitrary and $\bd \supseteq \bc$ was such that $\Vdash_\bd^T \phi$ implies $\Vdash_\bd^{T,M} \bot$, obtain $\Vdash_\bb^S \bang \phi$ by~\ref{eq:supp-bang}, as required.
\end{proof}

Linear logic is sound and complete with respect to B-eS. We refer the reader to~\cite{DBLP:journals/corr/abs-2504-08349} (multiplicative-additive fragment) and to Appendix~\ref{app:a} (exponentials) for details.

\section{Correspondence} \label{sec:corr}
In this section, we establish the equivalence of the two semantics. First, we observe that there is a systematic way of converting a phase model into a base and, similarly, constructing a phase model from a given base. Formally, this amounts to defining the bidirectional maps between a set of bases and a set of phase models.  Then, we introduce two types of structure-preserving maps: \emph{base morphisms} and \emph{phase model morphisms}. Finally, we construct an isomorphism between a phase model and its image under the composition of the bidirectional maps. We proceed to do the same in the base setting, where the resulting isomorphism is up to \emph{intersupport} -- as expected from a system founded on proof theory.

We write $\bs$ for the set of bases and $\ps$ for the set of phase models.
\subsection{From Phase semantics to Base-extension semantics}
As the very first step, given a phase model, we wish to construct a base out of it. To do so, according to Definition~\ref{def:base}, we have to specify a set of atoms, a set of rules, and a valuation.
\begin{definition}[Phase to Base] \label{def:phasetobase}
    We define the map $(-)_\mathbb{B}: \ps \rightarrow \bs$ as $(D_M, I, \bot, v) = P \mapsto P_\mathbb{B} = (\At_\mathbb{B},\Rule_\mathbb{B},\flat)$ by letting $\flat \coloneqq v$, ${\At}_\mathbb{B} \coloneqq D_M$ and $\Rule_\mathbb{B}$ consist of the following rules (for each atomic multiset $L,L_i,K$ and $S$), where $a,a_i,b_j,b \in D_M$ and $n,m \geq 1$:
    \begin{align*}
        & & & \begin{bprooftree}
        \def\ScoreOverhang{0.5pt}
            \AxiomC{$L; a^\bot \vdash \bot$}  
            \UnaryInfC{$L \vdash a$}
        \end{bprooftree}, \\
        \text{if } & \bigotimes_{i=1}^n a_i \subseteq \bigotimes_{j=1}^m b_j \text{ then} & & \begin{bprooftree}
        \def\ScoreOverhang{0.5pt}
            \AxiomC{$L_1 \vdash {a_1}$} 
            \AxiomC{$\dots$} 
            \AxiomC{$L_n \vdash {a_n}$} 
            \AxiomC{$K;b_1,\dots,b_m \vdash \bot$} 
            \QuaternaryInfC{$L_1,\dots,L_n,K \vdash \bot$}
        \end{bprooftree}, \\
        \text{if } & \bigcap_{i=1}^n a_i \subseteq b \text{ then} & & \begin{bprooftree}
        \def\ScoreOverhang{0.5pt}
            \AxiomC{$L \vdash {a_1}$} 
            \AxiomC{$\dots$} 
            \AxiomC{$L \vdash {a_n}$} 
            \AxiomC{$K;b \vdash \bot$} 
            \QuaternaryInfC{$L,K \vdash \bot$}
        \end{bprooftree}, \\
        \text{if } & b = \bang a \text{ then} & & \begin{bprooftree}
        \def\ScoreOverhang{0.5pt}
            \AxiomC{$L \vdash b$}
            \AxiomC{$K \vdash \bot$}
            \BinaryInfC{$L, K \vdash \bot$}
        \end{bprooftree} \; \text{ and } \begin{bprooftree}
        \def\ScoreOverhang{0.5pt}
            \AxiomC{$S \vdash a$}  
            \UnaryInfC{$S \vdash b$}
        \end{bprooftree}.
    \end{align*}
\end{definition}
We refer to the rule of the first shape as \textit{reductio ad absurdum}.

\begin{remark}
    Note that if $J$ is maximal in a given phase model $P$, the definition above can be simplified by removing the weakening rule in the $(\bang)$ case and letting $n,m \geq 0$.
\end{remark}

From now on, we will slightly abuse notation and write ${\forms}_\bb$ for ${\forms}_{\At_\bb}$ to avoid stacking subscripts for, e.g., $\bb = P_\mathbb{B}$. We proceed to give two examples of the effect of the map from Definition~\ref{def:phasetobase} on phase models. We consider two phase spaces: one, whose set of definable facts is the same across all corresponding models, and one which gives rise to phase models with distinct sets of definable facts. 

\begin{example}
    Consider $(\mathbb{N}, +, 0)$ as an additive monoid over natural numbers, and fix $\bot = \{0,1\}$. It forms a phase space $(\mathbb{N}, \bot)$ whose set of facts is ${\Fact}_\mathbb{N} = \{\varnothing,\{0\},\{0,1\}, \mathbb{N}\}$. Now, for any valuation $v$ we have that $\bot^v = \{0,1\}, 1^v = \{0\}, \top^v = \mathbb{N}$ and $0^v = \varnothing$. Hence, $D_\mathbb{N} = {\Fact_\mathbb{N}}$ for any choice of $v$. For a fixed $v$, denote $(D_\mathbb{N},I,\bot,v) \eqqcolon P$; then $P_\mathbb{B}$ is a base whose set of atoms is $D_\mathbb{N}$ and valuation is $v$. Below are some examples of the rules in $P_\mathbb{B}$ (where $\bot = \{0,1\}$) and the support statements they result in (shown underneath):

    \begin{center}
        \begin{bprooftree}\small
    \def\ScoreOverhang{0.5pt}
        \AxiomC{$L \vdash \varnothing$} 
        \AxiomC{$K;\{0\} \vdash \bot$} 
        \BinaryInfC{$L,K \vdash \bot$}
    \end{bprooftree}   
    \
    \begin{bprooftree}\small
    \def\ScoreOverhang{0.5pt}
        \AxiomC{$L_1 \vdash \bot$}  
        \AxiomC{$L_2 \vdash \mathbb{N}$} 
        \AxiomC{$K;\mathbb{N} \vdash \bot$} 
        \TrinaryInfC{$L_1,L_2,K \vdash \bot$}
    \end{bprooftree}
    \
    \begin{bprooftree}\small
    \def\ScoreOverhang{0.5pt}
        \AxiomC{$L \vdash \{0\}$}  
        \AxiomC{$L \vdash \bot$} 
        \AxiomC{$K;\{0\} \vdash \bot$} 
        \TrinaryInfC{$L,K \vdash \bot$}
    \end{bprooftree}
    \end{center}

    \begin{center}
        $\varnothing \Vdash_{P_\mathbb{B}} \{0\}$
        \qquad \qquad \quad \quad
        $\bot \otimes \mathbb{N} \Vdash_{P_\mathbb{B}} \mathbb{N}$
        \qquad \qquad \qquad \qquad 
        $\{0\} \with \bot \Vdash_{P_\mathbb{B}} \{0\}$
    \end{center}
\end{example}

\begin{example}
    Consider a commutative monoid over the power set $\mathcal{P}(\{a,b\})$, whose operation is given by intersection, i.e.~$(\mathcal{P}(\{a,b\}), \cap, \{a,b\})$. Fix $\bot = \{\varnothing\}$. It forms a phase space $P = (\mathcal{P}(\{a,b\}), \bot)$ whose set of facts is ${\Fact}_P = \{\{\varnothing\},\{\varnothing, \{a\}\},\{\varnothing, \{b\}\}, \mathcal{P}(\{a,b\})\}$. 
    
    Now, let $v_1$ be defined as $a \mapsto \mathcal{P}(\{a,b\})$ for all $a$. Denote the resulting phase model $(D_{P_1},I,\bot, v_1) \eqqcolon P_1$. We have that $D_{P_1} = \{\{\varnothing\},\mathcal{P}(\{a,b\})\}$: notice that $\{\varnothing\} = \mathcal{P}(\{a,b\})^\bot$, but facts $\{\varnothing, \{a\}\}$ and $\{\varnothing, \{b\}\}$ cannot be obtained via the interpretation induced by $v_1$, hence are not definable. Then $(P_1)_\mathbb{B}$ is a base whose set of atoms is $D_{P_1}$. We give an example of a rule in $(P_1)_\mathbb{B}$ and the resulting support statement (recall that $\bot = \{\varnothing\}$):
    \begin{center}
    \begin{bprooftree}
    \def\ScoreOverhang{0.5pt}
        \AxiomC{$L; \bot \vdash \bot$}  
        \UnaryInfC{$L \vdash \mathcal{P}(\{a,b\})$}
    \end{bprooftree} \qquad \qquad \qquad
    $\bot^\bot \Vdash_{(P_1)_\mathbb{B}} \mathcal{P}(\{a,b\})$
    \end{center}

    On the other hand, if we consider a valuation $v_2$ given by $a \mapsto \{\varnothing, \{a\}\}$ for all $a$, with the corresponding phase model $(D_{P_2},I,\bot, v_1) \eqqcolon P_1$, we have that $D_{P_2} = {\Fact}_P$: notice that $\{\varnothing, \{b\}\} = \{\varnothing, \{a\}\}^\bot$ and $\bot = \{\varnothing\}$ and $\mathcal{P}(\{a,b\}) = \{\varnothing\}^\bot$. Then $(P_2)_\mathbb{B}$ is a base whose set of atoms is $D_{P_2}$. We give an example of a rule in $(P_2)_\mathbb{B}$ and the resulting support statement:
    \begin{center}
    \begin{bprooftree}
    \def\ScoreOverhang{0.5pt}
        \AxiomC{$L \vdash  \{\varnothing, \{a\}\}$} 
        \AxiomC{$K; \{\varnothing, \{a\}\},  \{\varnothing, \{a\}\} \vdash \bot$} 
        \BinaryInfC{$L,K \vdash \bot$}
    \end{bprooftree} \qquad 
    $\{\varnothing, \{a\}\} \Vdash_{(P_2)_\mathbb{B}} \{\varnothing, \{a\}\} \otimes \{\varnothing, \{a\}\}$ 
    \end{center}
    \vspace{0.5em}
\end{example}

Given $P \in \mathbb{P}$, formulas over the induced base $P_\mathbb{B}$ can be interpreted in the original phase model $P$: the function $\brl-\brr: {{\forms}_{P_\mathbb{B}}} \rightarrow P$ is defined on atoms (i.e.~definable facts of $P$) by $m \mapsto m$ and inductively extended to all formulas by interpreting the connectives in the phase space in the usual way (cf.~Definition~\ref{def:interpretation}). By slight abuse of notation, we also define $\brl-\brr$ on multisets of atoms by $\brl L \brr \coloneqq \bigotimes_{l \in L} \brl l \brr$, where the tensor is in $P$. As a special case, we obtain $\brl \varnothing \brr = 1$.

The benefit of structural atoms lies in their exponential-like behaviour, which plays a crucial role in defining the semantics for modalities of LL. This behaviour becomes even more apparent in a base obtained from a phase space in the lemma below.

\begin{lemma}\label{lmm:structatinb}
    An atom $b \in P_\mathbb{B}$ is structural if and only if $b = \brl \bang \alpha \brr$ for some $\alpha \in {\forms}_{P_\mathbb{B}}$.
\end{lemma}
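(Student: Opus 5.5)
The plan is to prove the two directions separately, reading ``present in $P_\mathbb{B}$'' literally as membership in $\Rule_{P_\mathbb{B}}$, closed under Ax and Subs as in Definition~\ref{def:base}. Throughout, I use that $\brl - \brr$ is the identity on atoms, that is, on definable facts, and that it commutes with the connectives.

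\emph{($\Leftarrow$)} Suppose $b = \brl \bang\alpha \brr = \bang \brl \alpha \brr$, and put $a \coloneqq \brl \alpha\brr$.
\begin{enumerate}
\item First I check that $a$ is an atom of $P_\mathbb{B}$. Since $\alpha$ is a formula over definable facts and each definable fact is $A^v$ for some $A\in\forms_{\At}$, I substitute these $A$'s into $\alpha$. This gives a formula over $\At$ whose interpretation is $a$, so $a\in D_M$. The same argument shows $b \in D_M$.
\item Since $b = \bang a$, Definition~\ref{def:phasetobase} puts the rule with premises $L\vdash b$ and $K\vdash\bot$ into $P_\mathbb{B}$. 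This is exactly Wk.
\item For Ctr, I use the tensor-rule clause with $n=1$, $a_1=b$, $m=2$ and $b_1=b_2=b$. This produces precisely the Ctr rule, provided $b \subseteq b\otimes b$ in $P$.
\item It remains to show $\bang a \subseteq \bang a \otimes \bang a$. For $j \in a\cap I$, weak idempotence of $J$ gives $j\in\{j\}^{\bot\bot}\subseteq \{j\cdot j\}^{\bot\bot}\subseteq ((a\cap I).(a\cap I))^{\bot\bot}$. Hence $a\cap I \subseteq ((a\cap I).(a\cap I))^{\bot\bot}$.
\item Next, $(a\cap I).(a\cap I) \subseteq \bang a . \bang a$, so by monotonicity of $(-)^{\bot\bot}$ (Proposition~\ref{prop:psproperties}), $((a\cap I).(a\cap I))^{\bot\bot} \subseteq (\bang a.\bang a)^{\bot\bot} = \bang a\otimes\bang a$.
\item Finally, $\bang a\otimes\bang a$ is a fact, so taking the $\bot\bot$-closure of $a\cap I$ yields $\bang a \subseteq \bang a\otimes \bang a$.
\end{enumerate}

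\emph{($\Rightarrow$)} Suppose $b$ is structural. Then in particular the Wk rule, with premises $L\vdash b$ and $K\vdash\bot$ (empty discharge) and conclusion $L,K\vdash\bot$, lies in $\Rule_{P_\mathbb{B}}$. I inspect every rule shape that can occur:
\begin{itemize}
\item Ax has no premises.
\item Subs has a second premise with nonempty discharged part $p$.
\item Reductio has one premise.
\item Tensor rules have a last premise $K;b_1,\dots,b_m\vdash\bot$ with $m\ge 1$, so a nonempty discharge.
\item Intersection rules have last premise $K;b\vdash\bot$, again with a nonempty discharge.
\item The dereliction-like rule $S\vdash a \,/\, S\vdash b$ has one premise.
\end{itemize}
Hence the only possible source of Wk is the weakening rule of the $\bang$ clause. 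That rule is present for $b$ only if $b=\bang a$ for some $a\in D_M$. Taking $\alpha\coloneqq a$, which is an atom of $P_\mathbb{B}$ with $\brl a\brr=a$, gives $b=\brl\bang\alpha\brr$.

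I expect the main obstacle to be this exhaustive case analysis in ($\Rightarrow$). It depends on Wk being required for \emph{all} $L,K$, and on the reading that ``present'' means literally a member of $\Rule_{P_\mathbb{B}}$ rather than admissible. If the intended reading were admissibility, I would instead argue semantically through $\brl-\brr$. That route needs soundness of the rules of $P_\mathbb{B}$ in $P$, and I would then have to show that a fact $b$ satisfying $b\subseteq 1$ and $b\subseteq b\otimes b$ is of the form $\bang a$. That is delicate and may need $J$ maximal, so I would first try to settle the literal reading. The inclusion $\bang a\subseteq\bang a\otimes\bang a$ via weak idempotence is the only genuinely algebraic step, and it is routine.
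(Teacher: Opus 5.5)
Your proof is correct and follows essentially the same route as the paper: Wk is supplied by the weakening rule of the $\bang$-clause of Definition~\ref{def:phasetobase}, and Ctr by the tensor rule with $n=1$, $m=2$ once $\bang a \subseteq \bang a \otimes \bang a$ holds; conversely, that $\bang$-clause weakening rule is the only rule of $P_\mathbb{B}$ with two premises and no discharged hypotheses. Your extra checks (that $\brl \alpha \brr \in D_M$, and the weak-idempotence derivation of $\bang a \subseteq \bang a \otimes \bang a$, which the paper only asserts) fill in details the paper leaves implicit, and your literal reading of ``present'' is the one the paper uses.
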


\begin{proof}
    In order for an atom $b$ to be structural, the rules of weakening and contraction have to be present for it in $P_\mathbb{B}$ (see Definition~\ref{def:structuralatoms}). Notice that for any $b = \brl \bang \alpha \brr$ the rule on the left is present in $P_\mathbb{B}$. Further, since $\brl \bang \alpha \brr \subseteq \brl \bang \alpha \brr \otimes_{P_\mathbb{B}} \brl \bang \alpha \brr$, the rule on the right is present in $P_\mathbb{B}$:
    \begin{center}
        \begin{bprooftree}
        \def\ScoreOverhang{0.5pt}
            \AxiomC{$L \vdash \brl \bang \alpha \brr$}
            \AxiomC{$K \vdash \bot$}
            \BinaryInfC{$L, K \vdash \bot$}
        \end{bprooftree}
        \quad
        \begin{bprooftree}
        \def\ScoreOverhang{0.5pt}
            \AxiomC{$L \vdash \brl \bang \alpha \brr$}
            \AxiomC{$K;\brl \bang \alpha \brr,\brl \bang \alpha \brr \vdash \bot$}
            \BinaryInfC{$L,K \vdash \bot$}
        \end{bprooftree}
    \end{center}
    Thus, $\brl \bang \alpha \brr$ are structural in $P_\mathbb{B}$ for any $\alpha$. Next, we show that atoms of such shape are the only structural atoms in $P_\mathbb{B}$.
    
    By the definition of $P_\mathbb{B}$, the rule of weakening is introduced for $b$ only if $b = \bang_{P_\mathbb{B}} a$ for some $a \in D_M$ (this is the only type of rule in the base with two premises and no hypothesis discharge), which is equivalent to $b = \brl \bang \alpha \brr$ for some $\alpha \in {\forms}_{P_\mathbb{B}}$ by the definition of $\brl - \brr$. By the definition of $P_\mathbb{B}$, the rule of contraction is introduced for an atom $b$ if $b \subseteq b \otimes_{P_\mathbb{B}} b$, which does hold for $b = \brl \bang \alpha \brr$. Thus, the only atoms for whom both structural rules exist in the base must be of shape $\brl \bang \alpha \brr$ for some $\alpha \in {\forms}_{P_\mathbb{B}}$.
\end{proof}

In the definition of the support relation, the atomic derivability need not coincide with the atomic support. In the lemma below, we demonstrate that in a base obtained from a phase space, it does. 

\begin{lemma} \label{lmm:deriveqsupport}
    Given $P \in \mathbb{P}$, $\Vdash^L_{P_\mathbb{B}} p$ if and only if $L \vdash_{P_\mathbb{B}} p$ for atomic $p$.
\end{lemma}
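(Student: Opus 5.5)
The plan is to prove the two directions separately. The backward direction holds in any base; the forward direction uses the reductio ad absurdum rule that $P_\mathbb{B}$ contains for every atom.

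\emph{($\Leftarrow$)} Assume $L \vdash_{P_\mathbb{B}} p$. Let $\bc \supseteq P_\mathbb{B}$ and $K$ be arbitrary, and suppose $p, K \vdash_\bc \bot$. Derivability is monotone under extension, since every rule of $P_\mathbb{B}$ is a rule of $\bc$, so $L \vdash_\bc p$. Now apply the Subs rule, which every base is closed under (Definition~\ref{def:base}), to the premises $L \vdash p$ and $K; p \vdash \bot$. This gives $L, K \vdash_\bc \bot$. By \ref{eq:supp-at} we conclude $\Vdash^L_{P_\mathbb{B}} p$.

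\emph{($\Rightarrow$)} Assume $\Vdash^L_{P_\mathbb{B}} p$. Specialise \ref{eq:supp-at} to $\bc = P_\mathbb{B}$ and $K = \{p^\bot\}$, where $p^\bot$ denotes the fact $\brl p \brr^\bot$. This is an atom of $P_\mathbb{B}$, because $D_M$ is closed under orthogonal: if $p = A^v$ then $p^\bot = (A^\bot)^v$. It remains to check the premise $p, p^\bot \vdash_{P_\mathbb{B}} \bot$. For this, use the tensor rule with $n = 2$, $m = 1$, $a_1 = p$, $a_2 = p^\bot$, $b_1 = \bot$. Its side condition is $p \otimes p^\bot \subseteq \bot$ (the fact $\bot$ is definable as $\bot^v$). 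This follows from Proposition~\ref{prop:psproperties}(1): we have $p.p^\bot \subseteq \bot$, and since $\bot = \{1_M\}^\bot$ is a fact, taking biorthogonals gives $(p.p^\bot)^{\bot\bot} \subseteq \bot$. The rule's premises are $p \vdash p$ and $p^\bot \vdash p^\bot$, both by Ax, together with $\varnothing; \bot \vdash \bot$, also by Ax. So $p, p^\bot \vdash_{P_\mathbb{B}} \bot$. The support hypothesis then yields $L, p^\bot \vdash_{P_\mathbb{B}} \bot$, and the reductio ad absurdum rule for $a = p$ concludes $L \vdash_{P_\mathbb{B}} p$.

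The only delicate point is the forward direction's reliance on the rule shapes. First, one must check that $p^\bot$ and $\bot$ really are atoms of $P_\mathbb{B}$, i.e.\ definable facts. Second, one must match the tensor rule's discharge format exactly: the context $K$ is empty and the discharged multiset is $\{\bot\}$, so the third premise is just the axiom $\bot \vdash \bot$. Everything else is routine.
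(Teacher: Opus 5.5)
Your proof is correct and follows the paper's argument. For the forward direction, you instantiate \ref{eq:supp-at} at $\bc = P_\mathbb{B}$ and $K=\{p^\bot\}$, obtain $p, p^\bot \vdash \bot$ from the tensor rule with all premises axioms, and close with reductio ad absurdum, exactly as the paper does; your Subs-based backward direction and your checks that $p^\bot$ and $\bot$ are definable facts just spell out details the paper calls immediate or leaves implicit.
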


\begin{proof}
    Denote $\bb \coloneqq P_\mathbb{B}$. The right-to-left direction is immediate. Now, assume that $\Vdash^L_\bb p$. Since $\brl p \brr \otimes_M \brl p^\bot \brr \subseteq \brl \bot \brr$, the following is a rule in $\bb$:
    \begin{prooftree}
    \def\ScoreOverhang{0.5pt}
    \AxiomC{$p \vdash p$}
    \AxiomC{$p^\bot \vdash p^\bot$}
    \AxiomC{$\varnothing;\bot \vdash \bot$}
    \TrinaryInfC{$p, p^\bot \vdash \bot$}
\end{prooftree}
All premises are axioms, hence $p, p^\bot \vdash_\bb \bot$. Since also $\Vdash^L_\bb p$, by~\ref{eq:supp-at}, we conclude $L, p^\bot \vdash_\bb \bot$. By applying \textit{reductio ad absurdum}, conclude $L \vdash_\bb p$.
\end{proof}

Definition~\ref{def:phasetobase} guarantees that inclusion of facts in a phase space persists to the base -- if $a \subseteq b$, then $b$ is derivable from $a$ in the resulting base. We show that the converse holds too -- derivability in a base cannot introduce inferences that have not already been present in the phase space. 

\begin{lemma} \label{lmm:inclusion}
    Given $P \in \mathbb{P}$, if $L \vdash_{P_\mathbb{B}} p$ then $\llbracket L \rrbracket \subseteq \brl p \brr$.
\end{lemma}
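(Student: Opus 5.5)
Since $\vdash_{P_\mathbb{B}}$ is inductively defined, we argue by induction on the derivation of $L \vdash_{P_\mathbb{B}} p$. Throughout we use that $\brl L \brr = \bigotimes_{l\in L} l$ is a fact, and that the tensor of facts is associative, commutative and monotone with unit $1$. We also use the residuation property $X \otimes Y \subseteq Z \Leftrightarrow X \subseteq (Y \otimes Z^\bot)^\bot$ for facts, which follows from Proposition~\ref{prop:psproperties}(3). The key reformulation is that, for a fact $Y$,
\[
  X \otimes Y \subseteq \bot \iff X \subseteq Y^\bot .
\]
This lets us treat every premise of the shape $K; c_1,\dots,c_k \vdash \bot$ as the inclusion $\brl K\brr \otimes c_1\otimes\dots\otimes c_k \subseteq \bot$.

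\textbf{Base cases and closure rules.} For the axiom $p \vdash p$ the claim $\brl p\brr \subseteq \brl p \brr$ is trivial. For Subs, the induction hypotheses give $\brl K\brr \subseteq p$ and $\brl L\brr \otimes p \subseteq r$. Monotonicity of $\otimes$ then yields $\brl K\brr\otimes\brl L\brr \subseteq \brl L\brr \otimes p \subseteq r$.

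\textbf{Rules of Definition~\ref{def:phasetobase}.}
\begin{itemize}
\item \emph{Reductio ad absurdum.} From $\brl L\brr \otimes a^\bot \subseteq \bot$ we get $\brl L\brr \subseteq a^{\bot\bot} = a$, since $a$ is a fact.
\item \emph{Tensor rule.} We have $\brl L_i\brr \subseteq a_i$ and $\brl K\brr \otimes \bigotimes_j b_j \subseteq \bot$. Then
\[
  \textstyle \bigotimes_i \brl L_i\brr \otimes \brl K\brr \subseteq \bigotimes_i a_i \otimes \brl K\brr \subseteq \bigotimes_j b_j \otimes \brl K\brr \subseteq \bot .
\]
\item \emph{Intersection rule.} From $\brl L\brr\subseteq a_i$ for all $i$ we get $\brl L\brr \subseteq \bigcap_i a_i \subseteq b$, and then $\brl L\brr \otimes\brl K\brr \subseteq b\otimes\brl K\brr\subseteq\bot$.
\item \emph{Promotion rule} $S\vdash a \,/\, S \vdash \bang a$. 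Here we must use that $S$ is structural. By Lemma~\ref{lmm:structatinb} each $s\in S$ is of the form $\bang c_s$, so $s = (c_s\cap I)^{\bot\bot}$. By the standard phase-space facts, $\bang X\otimes \bang Y = \bang(X\with Y)$ and $\bang$ is idempotent, so $\brl S\brr = (Z \cap I)^{\bot\bot}$ for some fact $Z$, i.e.\ $\brl S\brr = \bang Z$. Using $\brl S\brr\subseteq a$, we get $Z\cap I \subseteq \brl S\brr \cap I \subseteq a \cap I$. Taking biorthogonals gives $\brl S\brr \subseteq \bang a$. For empty $S$ we need $1\subseteq a \Rightarrow 1\subseteq \bang a$; this follows because $1_M\in I$ (it lies in $1$ and in the submonoid $J$).
\item \emph{Weakening rule for $b=\bang a$.} From $\brl L\brr\subseteq \bang a$ and $\brl K\brr\subseteq\bot$ we need $\brl L\brr\otimes\brl K\brr\subseteq\bot$. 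It suffices that $\bang a \subseteq 1$, i.e.\ $I\subseteq 1$, which holds by definition of $I = 1\cap J$. Then $\brl L\brr \otimes \brl K\brr \subseteq 1\otimes\bot = \bot$.
\end{itemize}

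\textbf{Main obstacle.} The delicate step is promotion. It needs the identity $\bang X\otimes\bang Y=\bang(X\with Y)$, which relies on $I$ being closed under products and on the weak idempotence of $J$. It also needs the observation, via Lemma~\ref{lmm:structatinb}, that a structural $S$ denotes a fact of the form $\bang Z$. I would first establish the auxiliary inclusion $(X\cap I).(Y\cap I)\subseteq (X\cap Y\cap I)$ up to biorthogonal closure, using $I\subseteq 1$ and the closure of $J$. From this, $\brl S \brr\cap I$-type bounds follow directly. All other cases are routine monotonicity arguments.
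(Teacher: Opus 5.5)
Your proof is correct and follows the paper's approach. Both argue by induction over the rules of $P_\mathbb{B}$ with the same case split (Ax, Subs, reductio, the tensor and intersection rules, and the two $\bang$ rules). Where the paper just cites soundness of phase semantics for weakening and promotion, you carry out the phase-space computations explicitly: $\bang a\subseteq 1$ because $I\subseteq 1$, and $\brl S\brr=\bang Z$ via $\bang X\otimes\bang Y=\bang(X\with Y)$. This is a faithful unpacking of that citation.
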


\begin{proof}
    Denote $\bb \coloneqq P_\mathbb{B}$ and recall that $\brl p \brr = p$ on atoms. We prove the statement by induction on the rules in $\bb$.
    \begin{description}
        \item[Ax:] $L = p$, then $\llbracket p \rrbracket \subseteq \llbracket p \rrbracket$ is immediate.
        
        \item[Subs:] The rule is of shape 
        \begin{bprooftree}
        \def\ScoreOverhang{0.5pt}
            \AxiomC{$L_1 \vdash q$} 
            \AxiomC{$L_2;q \vdash p$} 
            \BinaryInfC{$L \vdash p$}
        \end{bprooftree}.
        By inductive hypothesis, $\brl L_1 \brr \subseteq q$ and $\brl L_2 \brr \otimes_M q \subseteq p$. Hence, $\brl L_2 \brr \otimes_M \brl L_1 \brr \subseteq p$,~i.e.~$\brl L \brr \subseteq p$.
        
        \item[Raa:] The rule is of shape 
        \begin{bprooftree}
        \def\ScoreOverhang{0.5pt}
            \AxiomC{$L; p^\bot \vdash \bot$}  
            \UnaryInfC{$L \vdash p$}
        \end{bprooftree}.
        By inductive hypothesis, $\brl L \brr \otimes_M \brl p^\bot \brr \subseteq \bot$. By Proposition~\ref{prop:psproperties}, this is equivalent to $\brl L \brr \subseteq \brl p^\bot \brr^\bot$, i.e.~$\brl L \brr \subseteq \brl p\brr^{\bot\bot}$. Hence, $\brl L \brr \subseteq p$ since $p$ is a fact.
        
        \item[$\otimes:$] The rule is of shape 
        \begin{bprooftree}
        \def\ScoreOverhang{0.5pt}
            \AxiomC{$L_1 \vdash {a_1}$} 
            \AxiomC{$\dots$} 
            \AxiomC{$L_{n-1} \vdash {a_{n-1}}$} 
            \AxiomC{$L_n;b_1,\dots,b_m \vdash \bot$} 
            \QuaternaryInfC{$L \vdash \bot$}
        \end{bprooftree}.
        By inductive hypothesis, $\brl L_i \brr \subseteq a_i$ for all $i$ and $\brl L_n \brr \otimes_M b_1 \otimes_M \dots \otimes_M b_m \subseteq \bot$. We also know that $\bigotimes_{i=1}^n a_i \subseteq \bigotimes_{i=j}^m b_j$. Hence $\bigotimes_{i=1}^{n-1} \brl L_i \brr \subseteq \bigotimes_{i=j}^m b_j$. Thus $\bigotimes_{i=1}^{n-1} \brl L_i \brr \otimes_M \brl L_n \brr \subseteq \bot$, i.e.~$\brl L \brr \subseteq \bot$.
        
        \item[$\cap:$] The rule is of shape 
        \begin{bprooftree}
        \def\ScoreOverhang{0.5pt}
            \AxiomC{$L_1 \vdash {a_1}$} 
            \AxiomC{$\dots$} 
            \AxiomC{$L_1 \vdash {a_n}$} 
            \AxiomC{$L_2;b \vdash \bot$} 
            \QuaternaryInfC{$L \vdash \bot$}
        \end{bprooftree}.
        By inductive hypothesis, $\brl L_1 \brr \subseteq a_i$ for all $i$ and $\brl L_2 \brr \otimes_M b \subseteq \bot$. We also know that $\bigcap_{i=1}^n a_i \subseteq b$. Then $\brl L_1 \brr \subseteq \bigcap_{i=1}^n a_i \subseteq b$. Hence, $\brl L_1 \brr \otimes_M \brl L_2 \brr \subseteq \bot$,~i.e.~$\brl L \brr \subseteq \bot$.

        \item[$\bang_1:$] The rule is of shape
        \begin{bprooftree}
        \def\ScoreOverhang{0.5pt}
            \AxiomC{$L \vdash b$}
            \AxiomC{$K \vdash \bot$}
            \BinaryInfC{$L, K \vdash \bot$}
        \end{bprooftree}.
        By inductive hypothesis, $\brl L \brr \subseteq b$ and $\brl K \brr \subseteq \bot$. We also know that $b = \bang a$ for some $a$. By soundness of phase semantics (specifically, weakening), we have that $\brl \bang a \brr \otimes_M \brl K \brr \subseteq \bot$. Hence, $b \otimes_M \brl K \brr \subseteq \bot$, and, finally, $\brl L \brr \otimes_M \brl K \brr \subseteq \bot$.

        \item[$\bang_2:$] The rule is of shape
        \begin{bprooftree}
        \def\ScoreOverhang{0.5pt}
            \AxiomC{$S \vdash a$}  
            \UnaryInfC{$S \vdash b$}
        \end{bprooftree}. 
        By inductive hypothesis, $\brl S \brr \subseteq a$. We also know that $b = \bang a$ and $S$ is structural, i.e.~each $s\in S$ is of shape $\bang m$ for some $m$ by Lemma~\ref{lmm:structatinb}. By soundness of phase semantics (specifically, promotion), if $\brl S \brr \subseteq a$ then $\brl S \brr \subseteq \brl \bang a \brr$, i.e.~$\brl S \brr \subseteq b$.
    \end{description}
\end{proof}

Finally, we show that any formula in $P_\mathbb{B}$ is intersupported with its ``atomic'' version, i.e.~its corresponding fact in $P$ (which is, in turn, an atom in $P_\mathbb{B}$).

\begin{lemma} \label{lmm:intersupport}
    Given $P \in \mathbb{P}$, $\phi \Vdash_{P_\mathbb{B}}^\varnothing \brl \phi \brr$ and $\brl \phi \brr \Vdash_{P_\mathbb{B}}^\varnothing \phi$. That is, $\phi \dashv \vdash_{P_\mathbb{B}} \brl \phi \brr$.
\end{lemma}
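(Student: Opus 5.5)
We prove the statement by structural induction on $\phi\in\forms_{P_\mathbb{B}}$, writing $\bb\coloneqq P_\mathbb{B}$. Throughout we use the remark that $\Gamma\Vdash^K_\bb\psi$ is equivalent to $\Gamma,K\Vdash^\varnothing_\bb\psi$. By \ref{eq:supp-inf}, the goal $\phi\Vdash^\varnothing_\bb\brl\phi\brr$ amounts to the following: for every $\bc\supseteq\bb$ and $K$, if $\Vdash^K_\bc\phi$ then $\Vdash^K_\bc\brl\phi\brr$, and conversely for the other direction.

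A first worry is that the induction would have to pass to extensions $\bc$, where Lemma~\ref{lmm:deriveqsupport} is no longer available. We therefore formulate the induction hypothesis directly at the level of all extensions: for every $\bc\supseteq\bb$ and every $K$,
\[
\Vdash^K_\bc\phi \iff \Vdash^K_\bc\brl\phi\brr .
\]
This is the form in which it will be used, since every support clause quantifies over extensions. Because extensions retain all the rules of $\bb$, the key tool is the following observation.

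\textbf{Atomic reduction.} In any $\bc\supseteq\bb$, support of an atom $a$ relative to $K$ is equivalent to $K,a^\bot\vdash_\bc\bot$. The proof is exactly that of Lemma~\ref{lmm:deriveqsupport}: the $\otimes$-rule gives $a,a^\bot\vdash_\bc\bot$, and \emph{reductio ad absurdum} closes the argument. In particular, for an atom $c$,
\[
\Vdash^K_\bc c^\bot \iff c\Vdash^K_\bc\bot \iff K,c\vdash_\bc\bot .
\]

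With this in hand, each connective is handled by comparing its support clause with the rules of Definition~\ref{def:phasetobase}, and treating the subformulas as atoms via the induction hypothesis.

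\textbf{Base case.} Atoms are trivial, since $\brl p\brr=p$.

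\textbf{Units.} For the units, the required facts are the ones already used: $\brl\bot\brr=\bot$, and $\brl\top\brr=M$ is supported by \emph{reductio}, because $\top^\bot$ is absorbing.

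\textbf{Tensor.} Set $a=\brl\phi\brr$ and $b=\brl\psi\brr$. Suppose $\Vdash^K_\bc\phi\otimes\psi$. Since $\phi,\psi\Vdash^{(a\otimes b)^\bot}_\bc\bot$ holds by the induction hypothesis together with the $\otimes$-rule (from the inclusion $a\otimes b\subseteq a\otimes b$), we obtain $K,(a\otimes b)^\bot\vdash_\bc\bot$, and hence support of $a\otimes b$ by \emph{reductio}. Conversely, suppose $\Vdash^K_\bc a\otimes b$, and assume $\phi,\psi\Vdash^{L}_\bd\bot$ in some $\bd\supseteq\bc$. The induction hypothesis turns this into $L;a,b\vdash\bot$ being derivable. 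The $\otimes$-rule for the inclusion $a\otimes b\subseteq a\otimes b$, with a single premise $K\vdash a\otimes b$ and $m=2$, then yields $K,L\vdash_\bd\bot$.

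\textbf{Par and plus.} These are handled in the same way. For $\oplus$ we use the inclusions $a\subseteq a\oplus b$ and $b\subseteq a\oplus b$, which give rules of the $\cap$ or $\otimes$ shape with $n=m=1$. For the reverse direction we need the rule stemming from $a\oplus b\subseteq\dots$. Here a combined $\otimes$-rule with hypotheses $a^\bot$ and $b^\bot$ is not directly available, so we route through negation: we use $(a\oplus b)^\bot=a^\bot\cap b^\bot$, which is captured by the $\cap$-rule.

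\textbf{With.} This is direct from the $\cap$-rule with $n=2$, together with the rules for $a\cap b\subseteq a$ and $a\cap b\subseteq b$.

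\textbf{Negation.} This follows from the atomic reduction above.

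\textbf{Exponentials.} For $!$, the rules $S\vdash a/S\vdash !a$ together with Lemma~\ref{lemma:phitobangphi} and Lemma~\ref{lmm:structatinb} give one direction. The other direction uses the weakening and contraction rules for $!a$ to simulate the elimination in \ref{eq:supp-bang}. The case of $?$ is dual, using $?a=(!(a^\bot))^\bot$.

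\textbf{Main obstacle.} The hardest point is the exponential case. In \ref{eq:supp-bang} and \ref{eq:supp-quest}, $S$ ranges over multisets structural in $\bb$, whereas the induction passes to extensions $\bc$. We must show that structurality can be taken in $\bb$, and that the only structural atoms are of the form $!x$, which is Lemma~\ref{lmm:structatinb}. From there, promotion in phase semantics, as used in Lemma~\ref{lmm:inclusion}, matches the base rule. If the direct argument fails, we would first try to prove that $\Vdash^S_\bd\phi$ implies $S\vdash_\bd\brl\phi\brr$, and then instantiate the $!$ clause with $K=(!a)^\bot$.
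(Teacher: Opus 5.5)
Your multiplicative--additive cases are correct, and they take a somewhat different route from the paper. You lift Lemma~\ref{lmm:deriveqsupport} to every extension $\bc \supseteq P_{\mathbb{B}}$ (so $\Vdash^K_\bc a$ iff $K, a^\bot \vdash_\bc \bot$ for an atom $a$) and then argue only with the rules of $P_{\mathbb{B}}$. The paper instead routes several cases through B-eS soundness facts such as $(\otimes I)'$, $(\otimes E)'$ and Lemmas~\ref{lemma:generictensor} and~\ref{lemma:genericplus}. Your version is more self-contained, and it makes explicit the use of Lemma~\ref{lmm:deriveqsupport} inside extensions, which the paper leaves tacit ($1$ and $0$ are omitted but routine). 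The genuine gap is in the exponentials, where the lemma has its new content. There you only list ingredients, and you attach the promotion rule and Lemma~\ref{lemma:phitobangphi} to the same direction, although they belong to opposite ones.

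For $\brl \bang\phi\brr \Vdash \bang\phi$ (with $a = \brl\phi\brr$), the missing idea is dereliction, not a simulation of the elimination by Wk/Ctr. Since $\bang a \subseteq a$, the base derives $\bang a \vdash a$ (an $\otimes$-rule with $n=m=1$, then \emph{reductio}). Hence $\Vdash^{\bang a}_{P_{\mathbb{B}}} a$, and by induction $\Vdash^{\bang a}_{P_{\mathbb{B}}} \phi$. Because $\bang a$ is structural (Lemma~\ref{lmm:structatinb}), $S=\{\bang a\}$ is an admissible instance in the antecedent of the $(\bang)$ clause. This yields $\bang a, K \vdash_{\bd} \bot$ in the extension $\bd$ under consideration, and cutting against $L \vdash_\bc \bang a$ gives $\Vdash^{L,K}_{\bd}\bot$. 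Lemma~\ref{lemma:phitobangphi} could replace the last step, but it too needs $\Vdash^{\bang a}\phi$, i.e.\ dereliction.

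For $\bang\phi \Vdash \brl\bang\phi\brr$, your fallback is exactly the paper's argument: induction hypothesis, the promotion rule $S\vdash a / S \vdash \bang a$, the derivation $\bang a, (\bang a)^\bot \vdash \bot$, the $(\bang)$ clause with $K = (\bang a)^\bot$, and \emph{reductio}. However, the preliminary step you propose, showing that structurality can be taken in $\bb$, is not available. An extension $\bc$ may add Wk/Ctr rules for any atom, and Lemma~\ref{lmm:structatinb} only describes $P_{\mathbb{B}}$ itself, so a newly structural $S$ has no promotion rule. The paper proves no such reduction either. Its argument simply lets $S$ range over multisets structural in $P_{\mathbb{B}}$, where promotion is available, and you should adopt that reading explicitly rather than try to derive it.

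Finally, $\quest$ is not automatically dual. $(\quest)$ is a primitive clause, so the phase identity $\quest a = (\bang(a^\bot))^\bot$ settles only the atomic side. You have two options.
\begin{itemize}
\item Obtain $\quest\phi \dashv\vdash_{P_{\mathbb{B}}} (\bang(\phi^\bot))^\bot$ from Theorem~\ref{thm:soundness}, then apply the negation and $\bang$ cases as closure steps to $\phi^\bot$, which is not a subformula of $\quest\phi$.
\item Argue directly, as the paper does. Since $\bang(a^\bot)$ is structural and $\bang(a^\bot)\subseteq a^\bot$, the negation case gives $\Vdash^{\bang(a^\bot)}_{P_{\mathbb{B}}}\phi^\bot$. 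The $(\quest)$ clause with $S=\{\bang(a^\bot)\}$ then gives $L,\bang(a^\bot)\vdash_\bc\bot$. Since $(\quest a)^\bot=\bang(a^\bot)$, \emph{reductio} yields $L\vdash_\bc\quest a$. Conversely, promotion turns $S\vdash_\bd a^\bot$ into $S\vdash_\bd\bang(a^\bot)$, which cuts against $\quest a$; this again needs $S$ structural in $P_{\mathbb{B}}$.
\end{itemize}
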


\begin{proof}
     Denote $\bb \coloneqq P_\mathbb{B}$. We prove the statement by induction on the structure of $\phi$. We will illustrate the proof for $(\otimes)$ and $(\quest)$. The remaining cases are similar and can be found in Appendix~\ref{app:proofs}.

     \begin{description}
         \item[$\phi = \alpha \otimes \beta:$] Suppose for an arbitrary $\bc \supseteq \bb$ and $L$ that $\Vdash_\bc^L \alpha \otimes \beta$. Since $\alpha \Vdash_\bb^\varnothing \brl \alpha \brr$ and $\beta \Vdash_\bb^\varnothing \brl \beta \brr$ by inductive hypothesis, obtain $\alpha,\beta \Vdash_\bb^\varnothing \brl \alpha \brr \otimes \brl \beta \brr$ by~$(\otimes I)'$~\cite[Theorem 4.2]{DBLP:journals/corr/abs-2504-08349}. Since also $\Vdash_\bc^L \alpha \otimes \beta$, obtain $\Vdash_\bc^L \brl \alpha \brr \otimes \brl \beta \brr$ by Lemma~\ref{lemma:generictensor}. Since $\brl \alpha \brr \otimes_M \brl \beta \brr \subseteq \brl \alpha \otimes \beta \brr$, we have that $\brl \alpha \brr, \brl \beta \brr \Vdash_\bb^\varnothing \brl \alpha \otimes \beta \brr$. By Lemma~\ref{lemma:generictensor}, obtain $\Vdash_\bc^L \brl \alpha \otimes \beta \brr$. Hence, by~\ref{eq:supp-inf}, $\alpha \otimes \beta \Vdash_\bb^\varnothing \brl \alpha \otimes \beta \brr$.

         Now, assume for an arbitrary $\bc \supseteq \bb$ and $L$ that $\Vdash_\bc^L \brl \alpha \otimes \beta \brr$. Further, for an arbitrary $\bd \supseteq \bc$ and $K$, assume that $\alpha,\beta \Vdash_\bd^K \bot$. By inductive hypothesis, $\brl \alpha \brr,\brl \beta \brr \Vdash_\bd^K \bot$. Since $\brl \alpha \otimes \beta \brr \subseteq \brl \alpha \brr \otimes_M \brl \beta \brr$, we have that $\brl \alpha \otimes \beta \brr \Vdash_\bb^\varnothing \brl \alpha \brr \otimes \brl \beta \brr$. By~\ref{eq:supp-inf}, obtain $\Vdash_\bc^L \brl \alpha \brr \otimes \brl \beta \brr$. Since also $\brl \alpha \brr,\brl \beta \brr \Vdash_\bd^K \bot$, obtain $\Vdash_\bd^{K,L} \bot$ by~\ref{eq:supp-tensor}. Since $\bd \supseteq \bc$ was such that $\alpha,\beta \Vdash_\bd^K \bot$, obtain $\Vdash_\bc^L \alpha \otimes \beta$ by~\ref{eq:supp-tensor}. By~\ref{eq:supp-inf} again, $\brl \alpha \otimes \beta \brr \Vdash_\bb^\varnothing \alpha \otimes \beta$.
    
         \item[$\phi = \quest \alpha:$] Suppose for an arbitrary $\bc \supseteq \bb$ and $L$ that $\Vdash_\bc^L \quest \alpha$. Notice that $\brl \bang(\alpha^\bot) \brr \subseteq \brl \alpha^\bot \brr$, hence $\brl \bang(\alpha^\bot) \brr \myv \brl \alpha^\bot \brr$, By inductive hypothesis, $\brl \bang(\alpha^\bot) \brr \myv \alpha ^\bot$,~i.e.~$\Vdash_\bb^{\brl \bang(\alpha^\bot) \brr} \alpha^\bot$. Since $\brl \bang (\alpha^\bot) \brr$ is structural in $\bb$ (see Lemma~\ref{lmm:structatinb}), obtain $\Vdash_\bc ^{L,\brl \bang(\alpha^\bot) \brr} \bot$,~i.e.~$\brl \bang(\alpha^\bot) \brr \Vdash_\bc ^{L} \bot$ by~\ref{eq:supp-quest}. By soundness of phase semantics, $\brl (\quest \alpha)^\bot \brr \subseteq \brl \bang(\alpha^\bot) \brr$, hence $\brl (\quest \alpha)^\bot \brr \myv \brl \bang(\alpha^\bot) \brr$. By~\ref{eq:supp-inf}, $\brl (\quest \alpha)^\bot \brr \Vdash_\bc ^{L} \bot$, and by \textit{reductio ad absurdum}, $\Vdash_\bc ^{L} \brl \quest \alpha \brr$. By~\ref{eq:supp-inf} then, $\quest \alpha \Vdash_\bb^\varnothing \brl \quest \alpha \brr$.
        
         Now, assume for an arbitrary $\bc \supseteq \bb$ and $L$ that $\Vdash_\bc^L \brl \quest \alpha \brr$. Further assume, for an arbitrary $\bd \supseteq \bc$ and structural $S$, that $\Vdash_\bd^S \alpha^\bot$. By inductive hypothesis, $\Vdash_\bd^S \brl \alpha^\bot \brr$. Since there is a rule in $\bb$ concluding $S \vdash \brl \bang (\alpha^\bot) \brr$  from $S \vdash \brl \alpha^\bot \brr$, we obtain $\Vdash_\bd^S \brl \bang (\alpha^\bot) \brr$. By soundness of phase semantics, $\brl \bang(\alpha^\bot) \brr \subseteq \brl (\quest \alpha)^\bot \brr$, hence $\brl \bang(\alpha^\bot) \brr \myv \brl (\quest \alpha)^\bot \brr$. By~\ref{eq:supp-inf}, $\Vdash_\bd^S \brl (\quest \alpha)^\bot \brr$, and since $\brl \quest \alpha \brr, \brl (\quest \alpha)^\bot \brr \myv \bot$, by \ref{eq:supp-inf} again, $\Vdash_\bd^{S,L} \bot$. Since $\bd \supseteq \bc$ was such that $\Vdash_\bd^S \alpha^\bot$, conclude $\Vdash_\bc^L \quest \alpha$ by~\ref{eq:supp-quest}. By~\ref{eq:supp-inf} then, $\brl \quest \alpha \brr \Vdash_\bb^\varnothing \quest \alpha$.
    \end{description}
\end{proof}

\subsection{From Base-extension semantics to Phase semantics}
Now that we know how to construct a base from a phase model, we wish to do the opposite. We hence define a map from the set of bases to the set of phase models.

\begin{definition}[Base to Phase] \label{def:basetophase}
    We define the map $(-)_\mathbb{P}: \bs \rightarrow \ps$ as $({\At}_\bb, {\Rule}_\bb, \flat) = \bb \mapsto \bb_\mathbb{P} =(D_{M}, I, \bot, v)$, where:
    \begin{itemize}
        \item $(M,\uplus,\varnothing)$ is the monoid of multisets of formulas on $\At_\bb$ (i.e.~$\uplus$ is multiset union),
        \item $J \coloneqq \{\bang \Gamma \; | \; \Gamma \text{ is a multiset of formulas} \}$,
        \item $\bot \coloneqq \{\Gamma \; | \; \Gamma \Vdash_\bb^\varnothing \bot\}$,
        \item $v$ is given by $a \mapsto \{\Gamma \; | \; \Gamma \Vdash_\bb^\varnothing \flat(a) \}$.
    \end{itemize}
\end{definition}

Note that $\bang \phi \myv 1$ for any $\phi$ (see Theorem~\ref{thm:soundness}), hence $J \subseteq 1$ by Lemma~\ref{lmm:validphasespace}; hence $I = J \cap 1 = J$.

Every $X \subseteq M$ of the shape $X = \{\Gamma \; | \; \Gamma \Vdash_\bb^\varnothing \phi\}$ for some formula $\phi \in {\forms}_\bb$ is a fact in $\bb_\mathbb{P}$: observe that $X^{\bot\bot} = \{\Theta \; | \; \Theta,\Delta \myv \bot \text{ for all } \Delta \text{ satisfying } \Gamma \myv \phi \Rightarrow \Gamma,\Delta \myv \bot \text{ for } \Gamma \in X\}$. Thus, if $\Sigma \in X$, i.e.~$\Sigma \myv \phi$, then $\Sigma,\Delta \myv \bot$, hence $\Sigma \in X^{\bot\bot}$. Conversely, if $\Sigma \in X^{\bot\bot}$, choose $\Delta =  \{\phi^\bot\}$. Since $\phi, \phi^\bot \myv \bot$, we have that $\Gamma \myv \phi \Rightarrow \Gamma,\phi^\bot \myv \bot$, hence we can conclude $\Sigma, \phi^\bot \myv \bot$. By Lemma~\ref{lmm:raa}, $\Sigma \myv \phi$, i.e.~$\Sigma \in X$. Thus, $X = X^{\bot\bot} \in \bb_\mathbb{P}$.

\begin{example}
    Let $\bb$ be a base over atoms $\{p,q,\bot\}$, whose valuation $\flat$ is a surjection, with the following set of rules: $$\left\{\begin{bprooftree}
            \AxiomC{}
            \UnaryInfC{$\vdash p$}
        \end{bprooftree}\right\} \cup \left\{\begin{bprooftree}
            \AxiomC{$L \vdash q$}
            \UnaryInfC{$L \vdash p$}
        \end{bprooftree}\right\}_L$$
Then $\bb_\mathbb{P}$ is the monoid of multisets of linear logic formulas inductively defined on $\{p,q,\bot\}$ with a fixed subset given by $\bot_\mathbb{P} = \{\Gamma \; | \; \Gamma \Vdash_\bb \bot\}$ and a valuation $v$ given by $a \mapsto \{\Gamma \; | \; \Gamma \Vdash_\bb^\varnothing \flat(a) \}$, where $\flat(a) \in \{p,q,\bot\}$. By Lemma~\ref{lmm:validphasespace} it will follow that every definable fact of $\bb_\mathbb{P}$ is of shape $\{\Gamma \; | \; \Gamma \Vdash_\bb \phi\}$ for some formula $\phi$. Below we list some examples of definable facts:
    \begin{itemize}
        \item $\{\Gamma \; | \; \Gamma \Vdash_\bb p\} = \{\varnothing,\{p\},\{q\},\{p\with\bot\},\dots\}$
        \item $\{\Gamma \; | \; \Gamma \Vdash_\bb q\} = \{\{q\},\{p, p^\bot \parr q\},\dots\}$
        \item $\{\Gamma \; | \; \Gamma \Vdash_\bb p \otimes q\} = \{\{q\},\{p,q\},\{p \otimes q\},\dots\}$
    \end{itemize}
\end{example}

Formulas over a base can be interpreted in the induced phase space: the function $(-)^*: \forms_\bb \rightarrow {\Fact}_{\bb_\mathbb{P}}$ is defined on atoms by $p \mapsto \{\Gamma \; | \; \Gamma \Vdash_\bb^\varnothing p \}$ and inductively extended to all formulas by interpreting the connectives in the phase space in the usual way (cf.~Definition~\ref{def:interpretation}). This gives us an alternative definition of a valuation function: $v(a) = (\flat(a))^*$.

From the following lemma, we conclude that the equality persists to the interpretations, i.e.~extensions of $\flat$ and $v$. 

\begin{lemma}
    Let $A \in {\forms}_{\At}$. For every base valuation $\flat: {\At} \rightarrow \At_\bb$ and the induced phase model valuation $v: {\At} \rightarrow {\Fact}_{\bb_\mathbb{P}}$ we have $A^v = (A^\flat)^*$.
\end{lemma}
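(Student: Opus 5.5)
The proof goes by structural induction on $A \in \forms_{\At}$. Both interpretations are compositional, and they are evaluated in the same phase space $\bb_\mathbb{P}$. The map $(-)^*$ interprets the connectives of $\forms_\bb$ by the very same phase operations of Definition~\ref{def:interpretation} that $(-)^v$ uses. The map $(-)^\flat$ of Definition~\ref{def:baseinterpretation} commutes with every connective. So the only real content is the atomic case, and every other case is bookkeeping.

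\emph{Atoms.} For $a \in \At$ we have $a^v = v(a) = \{\Gamma \mid \Gamma \Vdash_\bb^\varnothing \flat(a)\}$ by Definition~\ref{def:basetophase}. On the other side, $a^\flat = \flat(a) \in \At_\bb$, and by the definition of $(-)^*$ on atoms, $(\flat(a))^* = \{\Gamma \mid \Gamma \Vdash_\bb^\varnothing \flat(a)\}$. These are equal, which is exactly the remark $v(a) = (\flat(a))^*$ made just before the lemma.

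\emph{Units.} Each unit $u \in \{\bot,1,\top,0\}$ satisfies $u^\flat = u$. For $\bot$ we need $\bot^v = \bot^*$, and here lies the one point needing care. In the phase model, $\bot^v$ is the distinguished subset $\{\Gamma \mid \Gamma \Vdash_\bb^\varnothing \bot\}$. In the base, $\bot \in \At_\bb$ is an atom, so $\bot^*$ is given by the atomic clause, namely $\{\Gamma \mid \Gamma \Vdash_\bb^\varnothing \bot\}$. The two coincide. For $1$, both sides are $\bot^\bot$ computed in the same space. For $\top$, both sides are $M$, and for $0$, both are $\top^\bot$.

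\emph{Connectives.} For $A^\bot$ we compute
\[
(A^\bot)^v = (A^v)^\bot = ((A^\flat)^*)^\bot = ((A^\flat)^\bot)^* = ((A^\bot)^\flat)^*,
\]
using the induction hypothesis and the fact that $(-)^*$ interprets negation as phase orthogonality. The same chain of equalities works for $\odot \in \{\otimes,\parr,\with,\oplus\}$, with $(A\odot B)^v = A^v \odot B^v = (A^\flat)^* \odot (B^\flat)^* = (A^\flat \odot B^\flat)^* = ((A\odot B)^\flat)^*$. It also works for $\bang$ and $\quest$, since both interpretations apply $\bang X = (X\cap I)^{\bot\bot}$ and $\quest X = (X^\bot \cap I)^\bot$ with the same $I$ of $\bb_\mathbb{P}$.

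I expect no genuine obstacle. The only point to verify is that $(-)^*$ and $(-)^v$ use the identical phase space: the same monoid, the same $\bot$, and the same $J$ and hence the same $I$. The constant $\bot$ must also be read consistently as the distinguished fact, which the atomic clause for $\bot \in \At_\bb$ guarantees.
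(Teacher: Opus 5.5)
Your proof is correct and follows essentially the same route as the paper: a structural induction in which the atomic case is the definitional identity $v(a) = (\flat(a))^*$, and every other case follows from the induction hypothesis, because $(-)^\flat$ commutes with all connectives and $(-)^v$ and $(-)^*$ apply the same phase operations in $\bb_\mathbb{P}$. Your remark that $\bot^*$ equals the distinguished subset of $\bb_\mathbb{P}$ whether $\bot$ is read as an atom of $\bb$ or as a unit makes explicit what the paper covers in one line with ``same definition on units''.
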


\begin{proof}
    We prove the statement by induction on the structure of $A$, with $\odot \in \{\otimes,\parr,\with,\oplus\}$.
    \begin{description}
        \item[$A = p:$] $v(a) = (\flat(a))^*$ by definition.

        \item[$A \in \{\bot, 1,\top,0\}:$] $A^v = (A^\flat)^*$ since $A^\flat = A$ and $(-)^*$ and $(-)^v$ have the same definition on units (Definition~\ref{def:interpretation}).

        \item[$A = \alpha^\bot:$] $(\alpha^\bot)^v = (\alpha^v)^\bot$. Now, by inductive hypothesis $(\alpha^v)^\bot = ((\alpha^\flat)^*)^\bot = ((\alpha^\flat)^\bot)^* = ((\alpha^\bot)^\flat)^*$.

        \item[$A = \alpha \odot \beta:$] $(\alpha \odot \beta)^v = (\alpha)^v \odot_{\bb_\mathbb{P}} (\beta)^v$. Now, by inductive hypothesis $(\alpha)^v \odot_{\bb_\mathbb{P}} (\beta)^v = ((\alpha)^\flat)^* \odot_{\bb_\mathbb{P}} ((\beta)^\flat)^* = ((\alpha)^\flat \odot_{\bb} (\beta)^\flat)^* = ((\alpha \odot \beta)^\flat)^*$.

        \item[$A = \bang \alpha:$] $(\bang \alpha)^v = \bang_{\bb_\mathbb{P}} (\alpha)^v$. Now, by inductive hypothesis $\bang_{\bb_\mathbb{P}} (\alpha)^v = \bang_{\bb_\mathbb{P}} ((\alpha)^\flat)^* = (\bang_{\bb} (\alpha)^\flat)^* = ((\bang \alpha)^\flat)^*$.

        \item[$A = \quest \alpha:$] $(\quest \alpha)^v = \quest_{\bb_\mathbb{P}} (\alpha)^v$. Now, by inductive hypothesis $\quest_{\bb_\mathbb{P}} (\alpha)^v = \quest_{\bb_\mathbb{P}} ((\alpha)^\flat)^* = (\quest_{\bb} (\alpha)^\flat)^* = ((\quest \alpha)^\flat)^*$.
    \end{description}
\end{proof}

Next, we show that facts obtained via operations in the phase space correspond to facts induced by formulas, e.g.~$\{\Gamma \; | \; \Gamma \Vdash_\bb^\varnothing \alpha \} \parr_M  \{\Gamma \; | \; \Gamma \Vdash_\bb^\varnothing \beta \}  = \{\Gamma \; | \; \Gamma \Vdash_\bb^\varnothing \alpha \parr \beta\}$.

\begin{lemma} \label{lmm:validphasespace}
    For every base $\bb$ and formula $\phi \in \forms_\bb$ we have $\phi^* = \{\Gamma \; | \; \Gamma \Vdash_\bb^\varnothing \phi \}$ in $\bb_\mathbb{P}$.
\end{lemma}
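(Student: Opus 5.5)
We argue by induction on the structure of $\phi$, writing $\langle\phi\rangle \coloneqq \{\Gamma \mid \Gamma \myv \phi\}$. We use the standard facts about support: monotonicity, cut through \ref{eq:supp-inf}, $\Gamma \Vdash^K_\bb \phi \iff \Gamma,K \myv \phi$, soundness and completeness of LL with respect to B-eS (including Appendix~\ref{app:a} for the exponentials), and the \emph{reductio} Lemma~\ref{lmm:raa} ($\Sigma,\phi^\bot \myv \bot$ implies $\Sigma \myv \phi$). Soundness gives that every LL-provable sequent $\Gamma\vdash\phi$ yields $\Gamma \myv \phi$.

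The argument rests on one uniform observation, established before the induction: for every formula $\psi$,
\[ \langle\psi\rangle^\bot = \langle\psi^\bot\rangle. \]
If $\Delta \in \langle\psi\rangle^\bot$, then $\psi \in \langle\psi\rangle$, so $\Delta,\psi \myv \bot$, which is $\Delta \myv \psi^\bot$ by \ref{eq:supp-imply}. Conversely, if $\Delta \myv \psi^\bot$ and $\Gamma \myv \psi$, then cutting against $\psi,\psi^\bot \myv \bot$ gives $\Gamma,\Delta \myv \bot$. From this we get two consequences. First, every $\langle\psi\rangle$ is a fact, as the paper already observed. Second, there is a \emph{closure principle}: if $X \subseteq M$ satisfies $\{\psi\} \subseteq X \subseteq \langle\psi\rangle$, then $X^{\bot\bot} = \langle\psi\rangle$. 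The inclusion $\subseteq$ holds because $\langle\psi\rangle$ is a fact. For $\supseteq$, we have $X^\bot \subseteq \{\psi\}^\bot = \langle\psi^\bot\rangle$, so $X^{\bot\bot} \supseteq \langle\psi^\bot\rangle^\bot = \langle\psi^{\bot\bot}\rangle = \langle\psi\rangle$, using $\psi^{\bot\bot} \dashv\vdash \psi$.

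The induction then runs as follows.
\begin{itemize}
\item \emph{Atoms.} The claim holds by the definition of $(-)^*$.
\item \emph{Negation.} This is the observation above together with the induction hypothesis.
\item \emph{$\otimes$.} Apply the closure principle with $X = \langle\alpha\rangle.\langle\beta\rangle$ and $\psi = \alpha\otimes\beta$. The multiset $\{\alpha\otimes\beta\}$ lies in $X$ only up to intersupport. To repair this, note that $\{\alpha,\beta\} \in X$ and $\{\alpha,\beta\}^\bot = \langle(\alpha\otimes\beta)^\bot\rangle$, because $\alpha,\beta,\Delta \myv \bot$ iff $\alpha\otimes\beta,\Delta \myv \bot$ (by \ref{eq:supp-tensor} together with $(\otimes I)'$). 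The inclusion $X \subseteq \langle\alpha\otimes\beta\rangle$ is $(\otimes I)'$.
\item \emph{$\parr$.} Here $X^\bot.Y^\bot = \langle\alpha^\bot\rangle.\langle\beta^\bot\rangle$, so $(X^\bot.Y^\bot)^\bot = \langle(\alpha^\bot\otimes\beta^\bot)^\bot\rangle = \langle\alpha\parr\beta\rangle$ by the tensor case and De Morgan intersupport.
\item \emph{$\with$.} Intersection matches \ref{eq:supp-and} directly.
\item \emph{$\oplus$.} By Proposition~\ref{prop:psproperties}, $(X\cup Y)^\bot = \langle\alpha^\bot\rangle \cap \langle\beta^\bot\rangle = \langle\alpha^\bot\with\beta^\bot\rangle$, and negating again gives $\langle\alpha\oplus\beta\rangle$.
\item \emph{Units.} For $\bot$ the claim is the definition. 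For $1 = \bot^\bot$ it follows from $1 \dashv\vdash \bot^\bot$; $\top$ is everything by \ref{eq:supp-top}, and $0 = \top^\bot$.
\end{itemize}

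The exponentials are the main obstacle. Here $I = J = \{\bang\Gamma\}$, and the paper's claim $J \subseteq 1$ itself depends on the present lemma, so we must avoid circularity. We show $\bang\Gamma \in \langle 1\rangle$ directly from weakening soundness, which gives $\bang\Gamma \myv 1$. For $\bang\alpha$ we apply the closure principle with $X = \langle\alpha\rangle \cap J$ and $\psi = \bang\alpha$. The element $\{\bang\alpha\}$ lies in $J$ and satisfies $\bang\alpha \myv \alpha$ by dereliction, so $\{\bang\alpha\} \in X$. For $X \subseteq \langle\bang\alpha\rangle$ we need that $\bang\Gamma \myv \alpha$ implies $\bang\Gamma \myv \bang\alpha$. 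This is the promotion rule, so it follows from soundness of LL with respect to B-eS (Theorem~\ref{thm:soundness}); it cannot be obtained from Lemma~\ref{lemma:phitobangphi}, since $\bang\Gamma$ is not a multiset of structural atoms. The case $\quest\alpha = (\bang\alpha^\bot)^\bot$ then follows, because $\quest X = (X^\bot\cap I)^\bot$ equals $(\bang(\alpha^\bot))^{*\bot}$ by the $\bang$ case and negation, and $\quest\alpha \dashv\vdash (\bang\alpha^\bot)^\bot$. We also check that $J$ satisfies weak idempotence, $\{\bang\Gamma\}^{\bot\bot} \subseteq \{\bang\Gamma,\bang\Gamma\}^{\bot\bot}$; this follows from contraction soundness together with monotonicity of $(-)^{\bot\bot}$.
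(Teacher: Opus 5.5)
Your proof is correct, but it is organised differently from the paper's.

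The paper handles every connective by direct element-chasing inside the closure. For $\otimes$ and $\bang$ it tests against the witness $\{(\alpha\otimes\beta)^\bot\}$ (resp.\ $\{(\bang\alpha)^\bot\}$) and concludes with Lemma~\ref{lmm:raa}. For the converse it plugs in the generic element $\{\alpha,\beta\}$ (resp.\ $\{\bang\alpha\}$). The dual cases $\parr$, $\oplus$, $1$, $0$, $\quest$ are each proved separately, mostly by matching the phase operation against its own support clause together with the corresponding soundness item ($(\parr I)'$, $(\oplus I)'$, $(\multimap I)'$, \dots).

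You instead state that witness argument once, as the duality $\langle\psi\rangle^\bot=\langle\psi^\bot\rangle$ plus the closure principle. You then prove only $\otimes$, $\with$, $\bang$ and the units $\bot,\top$ directly, and obtain $\parr$, $\oplus$, $1$, $0$, $\quest$ from De Morgan intersupports. This is shorter, and it shows that the real content is just the duality, double negation, and introduction/elimination for $\otimes$ and $\bang$. The cost is that you lean on soundness of LL for the De Morgan equivalences in each base. You also lose the clause-by-clause match between phase operations and B-eS clauses, which the paper's version makes visible.

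When you reduce $\parr$ and $\quest$ to the $\otimes$ and $\bang$ cases, you apply those cases to $\alpha^\bot\otimes\beta^\bot$ and $\bang(\alpha^\bot)$, which are not subformulas. This is legitimate, because your arguments actually prove the general identities $(\langle\beta\rangle.\langle\gamma\rangle)^{\bot\bot}=\langle\beta\otimes\gamma\rangle$ and $(\langle\beta\rangle\cap I)^{\bot\bot}=\langle\bang\beta\rangle$ for arbitrary $\beta,\gamma$. You also silently use $Y^\bot=Y^{\bot\bot\bot}$. Both points should be stated explicitly.

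Finally, two parts of your proposal are genuine additions:
\begin{itemize}
\item You derive $J\subseteq 1$ directly from weakening. The paper obtains $J\subseteq 1$ from this very lemma, which is harmless only because just the $J$-free unit case is used there.
\item You check weak idempotence of $J$ via contraction. The paper never verifies this.
\end{itemize}
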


\begin{proof}
     We prove the statement by induction on the structure of $\phi$. We will illustrate the proof for $(\parr)$ and $(\bang)$. The remaining cases are similar and can be found in Appendix~\ref{app:proofs}.

     \begin{description}
         \item[$\phi = \alpha \parr \beta:$] Suppose $\Gamma \in (\alpha \parr \beta)^* = (\alpha)^* \parr_M (\beta)^* = \{\Gamma \; | \; \Gamma \Vdash_\bb^\varnothing \alpha \} \parr_M  \{\Gamma \; | \; \Gamma \Vdash_\bb^\varnothing \beta \} = (\{\Gamma \; | \; \Gamma \Vdash_\bb^\varnothing \alpha \}^\bot \uplus \{\Gamma \; | \; \Gamma \Vdash_\bb^\varnothing \beta \}^\bot)^{\bot} = (\{\Gamma \; | \; \Gamma \Vdash_\bb^\varnothing \alpha^\bot\} \uplus \{\Gamma \; | \; \Gamma \Vdash_\bb^\varnothing \beta^\bot \} \brr)^{\bot}$ by inductive hypothesis. Hence, $\Gamma,\Delta,\Sigma \myv \bot$ whenever $\Delta \myv \alpha^\bot$ and $\Sigma \myv \beta^\bot$. Choose $\Delta = \{\alpha^\bot\}$ and $\Sigma = \{\beta^\bot\}$. Hence, $\Gamma, \alpha^\bot, \beta^\bot \myv \bot$. By~$(\parr I)'$~\cite[Theorem 4.2]{DBLP:journals/corr/abs-2504-08349}, $\Gamma \myv \alpha \parr \beta$, hence $\Gamma \in \{\Gamma \; | \; \Gamma \Vdash_\bb^\varnothing \alpha \parr \beta\}$.

         Conversely, suppose that $\Gamma \in \{\Gamma \; | \; \Gamma \Vdash_\bb^\varnothing \alpha \parr \beta \}$ and assume that $\Delta \in \{\Gamma \; | \; \Gamma \Vdash_\bb^\varnothing \alpha^\bot \}$ and $\Sigma \in \{\Gamma \; | \; \Gamma \Vdash_\bb^\varnothing \beta^\bot \}$,~i.e.~that $\Delta \myv \alpha^\bot$ and $\Sigma \myv \beta^\bot$. Now, assume, for an arbitrary $\bc \supseteq \bb$ and atomic multisets $L,K,M$, that $\Vdash_\bc^L \Gamma$, $\Vdash_\bc^K \Delta$ and $\Vdash_\bc^M \Sigma$. Hence, $\Vdash_\bc^L \alpha \parr \beta$ and $\Vdash_\bc^K \alpha^\bot$ and $\Vdash_\bc^M \beta^\bot$. By~\ref{eq:supp-parr}, conclude $\Vdash_\bc^{L,K,M} \bot$. Hence, by~\ref{eq:supp-inf}, $\Gamma,\Delta,\Sigma \myv \bot$, thus $\Gamma \in (\{\Gamma \; | \; \Gamma \Vdash_\bb^\varnothing \alpha^\bot\} \uplus \{\Gamma \; | \; \Gamma \Vdash_\bb^\varnothing \beta^\bot \}r)^{\bot}$. Hence, $\Gamma \in (\{\Gamma \; | \; \Gamma \Vdash_\bb^\varnothing \alpha \}^\bot \uplus \{\Gamma \; | \; \Gamma \Vdash_\bb^\varnothing \beta \}^\bot)^{\bot} = \{\Gamma \; | \; \Gamma \Vdash_\bb^\varnothing \alpha \} \parr_M  \{\Gamma \; | \; \Gamma \Vdash_\bb^\varnothing \beta \} = (\alpha)^* \parr_M (\beta)^*$ by inductive hypothesis,~i.e.~$\Gamma \in (\alpha \parr \beta)^*$.

        \item[$\phi = \bang \alpha:$] Suppose $\Gamma \in (\bang \alpha)^* = \bang_M (\alpha)^* = \bang_M \{\Gamma \; | \; \Gamma \Vdash_\bb^\varnothing \alpha \} = (\{\Gamma \; | \; \Gamma \Vdash_\bb^\varnothing \alpha \} \cap I)^{\bot\bot}$ by inductive hypothesis. Hence, $\Gamma, \Delta \myv \bot$ for all $\Delta$ s.t. $\Sigma,\Delta \myv \bot$ whenever $\Sigma \myv \alpha$ and $\Sigma \in I$. Notice that, if $\Sigma \in I$, then every $\phi \in \Sigma$ is of the shape $\bang \psi$ for some $\psi$. Hence, if $\Sigma \myv \alpha$, then $\Sigma \myv \bang \alpha$ by~$(\bang I)'$ (Theorem~\ref{thm:soundness}). Further notice that $\bang \alpha, (\bang \alpha)^\bot \myv \bot$, thus $\Sigma, (\bang \alpha)^\bot \myv \bot$. Now, choose $\Delta = \{(\bang \alpha)^\bot\}$, so that we can conclude $\Gamma,(\bang \alpha)^\bot \myv \bot$. By Lemma~\ref{lmm:raa}, $\Gamma \myv \bang \alpha$, hence $\Gamma \in \{\Gamma \; | \; \Gamma \Vdash_\bb^\varnothing \bang \alpha \}$.

        Conversely, suppose that $\Gamma \in \{\Gamma \; | \; \Gamma \Vdash_\bb^\varnothing \bang \alpha \}$,~i.e.~$\Gamma \myv \bang \alpha$, and assume that $\Delta \in (\{\Gamma \; | \; \Gamma \Vdash_\bb^\varnothing \alpha \} \cap I)^\bot$,~i.e.~that $\Delta, \Sigma \myv \bot$ whenever $\Sigma \myv \alpha$ and $\Sigma \in I$. Now, choose $\Sigma = \{\bang \alpha\}$. Then $\Delta, \bang \alpha \myv \bot$, and since $\Gamma \myv \bang \alpha$, obtain $\Delta,\Gamma \myv \bot$. Hence, $\Gamma \in (\{\Gamma \; | \; \Gamma \Vdash_\bb^\varnothing \alpha \} \cap I)^{\bot\bot} = \bang_M \{\Gamma \; | \; \Gamma \Vdash_\bb^\varnothing \alpha \}r = \bang_M (\alpha)^*$ by inductive hypothesis. Thus, $\Gamma \in (\bang \alpha)^*$.
     \end{description}
\end{proof}

Recall that every definable fact $X \in D_M$ of $\bb_\mathbb{P}$ is of shape $A^v$ for some $A \in {\forms_{\At}}$ (cf.~Definition~\ref{def:definablefact}). Finally, we conclude that every such fact $A^v \in \bb_\mathbb{P}$ is exactly $\{\Gamma \; | \; \Gamma \Vdash_\bb^\varnothing A^\flat \}$. 

\begin{corollary}
    For every base $\bb$ and $A^v \in \bb_\mathbb{P}$ we have that $A^v = (A^\flat)^* = \{\Gamma \; | \; \Gamma \Vdash_\bb^\varnothing A^\flat \}$.
\end{corollary}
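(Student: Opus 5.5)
The corollary follows by chaining the two results immediately preceding it, so the plan is short. Fix a base $\bb = ({\At}_\bb, {\Rule}_\bb, \flat)$ and consider the induced phase model $\bb_\mathbb{P} = (D_M, I, \bot, v)$ of Definition~\ref{def:basetophase}. Take an arbitrary definable fact $X \in D_M$. By Definition~\ref{def:definablefact} there is a formula $A \in {\forms}_{\At}$ with $X = A^v$, so it suffices to prove the two equalities for an arbitrary $A \in {\forms}_{\At}$.

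For the first equality $A^v = (A^\flat)^*$, I will invoke the unnamed lemma stated just before Lemma~\ref{lmm:validphasespace}. That lemma says that for every base valuation $\flat$ and the induced phase valuation $v$, one has $A^v = (A^\flat)^*$. Nothing further is needed here beyond checking that its hypotheses are exactly our situation, namely that $v$ is the valuation $a \mapsto (\flat(a))^*$ built into $\bb_\mathbb{P}$.

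For the second equality, note that $A^\flat \in {\forms}_\bb$ by Definition~\ref{def:baseinterpretation}. Lemma~\ref{lmm:validphasespace} applies to every formula over $\At_\bb$, and instantiating it at $\phi \coloneqq A^\flat$ gives $(A^\flat)^* = \{\Gamma \mid \Gamma \Vdash_\bb^\varnothing A^\flat\}$. Composing the two equalities yields the claim.

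I expect no step to be a real obstacle, since the substantive work lies in Lemma~\ref{lmm:validphasespace}, particularly its exponential cases, which we take as given. The only point needing care is bookkeeping: $(-)^*$ is defined on ${\forms}_\bb$ while $(-)^v$ is defined on ${\forms}_{\At}$. One must therefore pass through $A^\flat$ rather than apply Lemma~\ref{lmm:validphasespace} to $A$ directly, and observe that the units and connectives are interpreted identically on both sides. That identical interpretation is exactly what the preceding lemma's induction secures.
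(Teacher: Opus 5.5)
Your proposal is correct and matches the paper's argument, which the paper leaves implicit: the corollary is exactly the composition of the preceding unnamed lemma, $A^v = (A^\flat)^*$, with Lemma~\ref{lmm:validphasespace} instantiated at $\phi = A^\flat$. The one point needing care is that you must pass through $A^\flat$ rather than apply Lemma~\ref{lmm:validphasespace} to $A$ directly, and you handle that correctly.
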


\begin{remark}
    Note that the choice of $J$ affects the definable facts in a given phase model: have we chosen $J \coloneqq \{\varnothing\}$ in Definition~\ref{def:basetophase}, a fact $\{\Gamma \; | \; \Gamma \Vdash_\bb^\varnothing \bang \phi \}$ would no longer be definable for those $\phi$ which are not valid in $\bb$, i.e.~if $\varnothing \notin \{\Gamma \; | \; \Gamma \Vdash_\bb^\varnothing \phi \}$. In other words, Theorem~\ref{lmm:validphasespace} would fail: $\bang_M \{\Gamma \; | \; \Gamma \Vdash_\bb^\varnothing \phi \} = (\{\Gamma \; | \; \Gamma \Vdash_\bb^\varnothing \phi \} \cap I)^{\bot\bot} = (\{\Gamma \; | \; \Gamma \Vdash_\bb^\varnothing \phi \} \cap \{\varnothing\})^{\bot\bot} = \varnothing^{\bot\bot} = \{\Gamma \; | \; \Gamma \in M \}^{\bot} = \{\Gamma \; | \; \Gamma \Vdash_\bb^\varnothing \top \}^{\bot} = \{\Gamma \; | \; \Gamma \Vdash_\bb^\varnothing 0 \} \neq \{\Gamma \; | \; \Gamma \Vdash_\bb^\varnothing \bang \phi \}$.
\end{remark}

\subsection{...and Back}
We now proceed to establish the equivalence between the two semantics. To do so, we first have to define a notion of a structure-preserving map between phase models. We call such a map a \emph{phase model morphism}:

\begin{definition}[Phase model morphism]
    We define a \emph{phase model morphism} $p:(D_M,I_M,\bot_M, v_M) \rightarrow (D_N,I_N,\bot_N, v_N)$ as a function $p: D_M \rightarrow {\Fact}_N$ that preserves the valuation: $p \circ v_M = v_N$; the inclusion order: $\alpha \subseteq \beta$ implies $p(\alpha) \subseteq p(\beta)$; and tensor unit: $p(1_M) = 1_N$.
\end{definition}
In particular, phase model morphisms preserve validity: $1_M \subseteq \beta$ implies $p(1_M) = 1_N \subseteq p(\beta)$.

\begin{remark}
    Note that the set of all facts of a phase space can be viewed as a residuated lattice. The phase model morphism we define does not claim to preserve the full algebraic structure of the lattice -- for the purposes of this work, the preservation of valuation, inclusion order and tensor unit suffices. 
\end{remark}

\begin{definition} \label{def:monoidmap}
    Let $P = (D_M, I, \bot,v)$ be a phase model. We define the map  $f_P: P \rightarrow (P_{\mathbb{B}})_{\mathbb{P}}$ as $m \mapsto \{\Gamma \; | \; \Gamma \Vdash_{P_{\mathbb{B}}}^\varnothing m\}$, and the map $g_P : (P_{\mathbb{B}})_{\mathbb{P}} \rightarrow P$ as $\alpha \mapsto \bigcup_{L\in\alpha} \brl L \brr$.
\end{definition}

Notice that $f_P(m) = m^*$.

\begin{remark}
    Note that, perhaps counterintuitively, the definition of $g_P$ could, but does not need to be $\alpha \mapsto (\bigcup_{L\in\alpha} \brl L \brr)^{\bot\bot}$. That is due to the nature of definable facts of $(P_{\mathbb{B}})_{\mathbb{P}}$: recall that all of them are of shape $\alpha = \{\Gamma \; | \; \Gamma \Vdash_{P_{\mathbb{B}}}^\varnothing \phi\}$ for some $\phi$. The purpose of $g_P$ is to map such $\alpha$ to $\brl \phi \brr \in D_P$. Since $\brl L \brr \subseteq \brl \phi \brr$ for all $L \in \alpha$ by Lemma~\ref{lmm:inclusion}, we are guaranteed that $\bigcup_{L\in\alpha} \brl L \brr \subseteq \brl \phi \brr$. At last, since $\brl \phi \brr \dashv \vdash_\bb \phi$ by Lemma~\ref{lmm:intersupport}, we know that there exists $L = \{\brl \phi \brr \}$, which turns an inclusion into an equality: $\bigcup_{L\in\alpha} \brl L \brr = \brl \phi \brr$.
\end{remark}

Note that $f_P$ and $g_P$ are phase model morphisms. The inclusion order is preserved: let $m,n \in P$ s.t.~$m \subseteq n$. Then $m \Vdash^\varnothing_{P_\mathbb{B}} n$ since $P_\mathbb{B}$ preserves inclusion by construction (Definition~\ref{def:phasetobase}). Now, suppose $\Delta \in  f_P(m) = \{\Gamma \; | \; \Gamma \Vdash_{P_{\mathbb{B}}}^\varnothing m\}$. Then $\Delta \Vdash_{P_{\mathbb{B}}}^\varnothing m$ and, by~\ref{eq:supp-inf}, $\Delta \Vdash_{P_{\mathbb{B}}}^\varnothing n$, hence $\Delta \in \{\Gamma \; | \; \Gamma \Vdash_{P_{\mathbb{B}}}^\varnothing n\} = f_P(n)$, i.e.~$f_P(m) \subseteq f_P(n)$. Similarly, let $\alpha, \beta \in (P_{\mathbb{B}})_{\mathbb{P}}$ s.t.~$\alpha \subseteq \beta$. That is, $\{\Gamma \; | \; \Gamma \Vdash_{P_{\mathbb{B}}}^\varnothing A\} \subseteq \{\Gamma \; | \; \Gamma \Vdash_{P_{\mathbb{B}}}^\varnothing B\}$. Thus, $L \in \alpha$ implies $L \in \beta$ for any atomic $L$. Hence, $\bigcup_{L\in g_P(\alpha)} \brl L \brr \subseteq \bigcup_{L\in g_P(\beta)} \brl L \brr$. That is, $g_P(\alpha) \subseteq g_P(\beta)$.

The tensor unit is preserved: $f_P(1_P) = \{\Gamma \; | \; \Gamma \Vdash_{P_{\mathbb{B}}}^\varnothing 1_P\} = \{\Gamma \; | \; \Gamma \Vdash_{P_{\mathbb{B}}}^\varnothing \bot^\bot\} = \{\Gamma \; | \; \Gamma \Vdash_{P_{\mathbb{B}}}^\varnothing \bot\}^\bot = 1_{(P_{\mathbb{B}})_{\mathbb{P}}}$ by Lemmas~\ref{lmm:intersupport} and~\ref{lmm:validphasespace}. Similarly, $g_P(1_{(P_{\mathbb{B}})_{\mathbb{P}}}) = g_P(\{\Gamma \; | \; \Gamma \Vdash_{P_{\mathbb{B}}}^\varnothing \bot\}^\bot) = g_P(\{\Gamma \; | \; \Gamma \Vdash_{P_{\mathbb{B}}}^\varnothing 1_P\}) = \bigcup_{L\in \{L \; | \; L \vdash_{P_{\mathbb{B}}} 1_P\}} \brl L \brr = 1_P$ since $1_P \vdash_{P_{\mathbb{B}}} 1_P$ and by Lemma~\ref{lmm:deriveqsupport} and~\ref{lmm:inclusion} .

The valuation is preserved: notice that $f_P \circ v$ is given by $a \mapsto (v(a))^* = \{\Gamma \; | \; \Gamma \Vdash_\bb^\varnothing v(a) \}$. Further, by Definitions~\ref{def:phasetobase} and~\ref{def:basetophase}, the valuation function $w: {\At} \rightarrow {\Fact}_{(P_{\mathbb{B}})_{\mathbb{P}}}$ is given by $a \mapsto \{\Gamma \; | \; \Gamma \Vdash_\bb^\varnothing \flat(a) \} = \{\Gamma \; | \; \Gamma \Vdash_\bb^\varnothing v(a) \}$. Hence, $f_P \circ v = w$. On the other hand, $g_P \circ w$ is given by $a \mapsto w(a) = \{\Gamma \; | \; \Gamma \Vdash_\bb^\varnothing v(a) \} \mapsto \bigcup_{L\in w(a)} \brl L \brr$. This is equal to $\bigcup_{L\in \{L \; | \; L \vdash_{P_{\mathbb{B}}} v(a)\}} \brl L \brr$ by Lemma~\ref{lmm:deriveqsupport}. Therefore $\bigcup_{L\in w(a)} \brl L \brr \subseteq v(a)$ by Lemma~\ref{lmm:inclusion}, and since $\{v(a)\} \in w(a)$, conclude that $\bigcup_{L\in w(a)} \brl L \brr =v(a)$. Hence, $g_P \circ w = v$.

Now, we show that the composition of these phase model morphisms yields an identity map, hence proving the desired equivalence.

\begin{theorem}
    For every $P \in \mathbb{P}$ we have $g_P \circ f_P = Id_P$ and $f_P \circ g_P = Id_{(P_{\mathbb{B}})_{\mathbb{P}}}$.
\end{theorem}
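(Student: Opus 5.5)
We verify the two identities separately, working with the definable facts of $P$ and of $(P_{\mathbb{B}})_{\mathbb{P}}$. Throughout, write $\bb \coloneqq P_\mathbb{B}$.

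\textbf{First identity, $g_P \circ f_P = Id_P$.} Fix a definable fact $m \in D_M$. Then $f_P(m) = \{\Gamma \mid \Gamma \Vdash^\varnothing_\bb m\}$, and $g_P(f_P(m)) = \bigcup_{L \in f_P(m)} \brl L \brr$, where only atomic multisets $L$ contribute, since $\brl - \brr$ on multisets is defined on atoms.

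\begin{itemize}
\item For an atomic $L$, $L \Vdash^\varnothing_\bb m$ is equivalent to $\Vdash^L_\bb m$, using the remark that $\Gamma \Vdash^K_\bb \phi$ iff $\Gamma,K \Vdash^\varnothing_\bb \phi$ together with \ref{eq:supp-inf} and Lemma~\ref{lmm:deriveqsupport} applied to atoms of $L$.
\item By Lemma~\ref{lmm:deriveqsupport} this gives $L \vdash_\bb m$, and by Lemma~\ref{lmm:inclusion} $\brl L \brr \subseteq m$. Hence the union is contained in $m$.
\item Conversely, $\{m\} \in f_P(m)$ by the axiom $m \vdash m$, and $\brl \{m\} \brr = m$. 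Hence the union equals $m$.
\end{itemize}

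This is exactly the argument already used above to show $g_P \circ w = v$, now carried out for an arbitrary definable $m$ rather than $v(a)$.

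\textbf{Second identity, $f_P \circ g_P = Id_{(P_{\mathbb{B}})_{\mathbb{P}}}$.} By the Corollary and Lemma~\ref{lmm:validphasespace}, every definable fact $\alpha$ of $(P_{\mathbb{B}})_{\mathbb{P}}$ has the form $\alpha = \{\Gamma \mid \Gamma \Vdash^\varnothing_\bb \phi\}$ for some $\phi \in \forms_\bb$ (namely $\phi = A^\flat$).

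\begin{itemize}
\item First show $g_P(\alpha) = \brl \phi \brr$, as in the remark following Definition~\ref{def:monoidmap}. For atomic $L \in \alpha$, Lemma~\ref{lmm:intersupport} gives $\phi \Vdash^\varnothing_\bb \brl \phi \brr$, so $\Vdash^L_\bb \brl\phi\brr$. Then Lemmas~\ref{lmm:deriveqsupport} and~\ref{lmm:inclusion} give $\brl L \brr \subseteq \brl \phi \brr$.
\item For the reverse inclusion, $\brl \phi \brr \Vdash^\varnothing_\bb \phi$ by Lemma~\ref{lmm:intersupport}, so $\{\brl\phi\brr\} \in \alpha$.
\item Then $f_P(g_P(\alpha)) = \{\Gamma \mid \Gamma \Vdash^\varnothing_\bb \brl \phi \brr\}$. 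Since $\phi \dashv\vdash_\bb \brl\phi\brr$, clause \ref{eq:supp-inf} (transitivity through intersupport: compose the supports in every extension) shows $\Gamma \Vdash^\varnothing_\bb \brl\phi\brr$ iff $\Gamma \Vdash^\varnothing_\bb \phi$. Hence $f_P(g_P(\alpha)) = \alpha$.
\end{itemize}

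\textbf{Main obstacle.} The delicate step is the passage between support of atomic multisets relative to the empty context and atomic derivability. Two points need care.

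\begin{itemize}
\item One must check that $L \Vdash^\varnothing_\bb p$ for atomic $L$ implies $\Vdash^L_\bb p$. This follows from \ref{eq:supp-inf} by instantiating $K_i = \{l_i\}$, using $\Vdash^{l_i}_\bb l_i$, which holds trivially by \ref{eq:supp-at}.
\item One must check that the union in $g_P$ ranges only over atomic multisets. Non-atomic $\Gamma \in \alpha$ contribute nothing, because $\brl - \brr$ is defined on atomic multisets.
\end{itemize}

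I would state the first point as a small auxiliary observation before the two main steps.
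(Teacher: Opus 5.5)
Your proof is correct and follows the paper's argument essentially step for step. The first identity uses Lemmas~\ref{lmm:deriveqsupport} and~\ref{lmm:inclusion} together with $\{m\}\in f_P(m)$. The second shows $g_P(\alpha)=\brl \phi \brr$ and then transports membership along the intersupport $\phi \dashv\vdash_{\bb} \brl \phi \brr$ of Lemma~\ref{lmm:intersupport}. The only difference is that you make explicit two steps the paper leaves tacit: the passage from $L \Vdash^\varnothing_\bb m$ to $\Vdash^L_\bb m$ via \ref{eq:supp-inf} and \ref{eq:supp-at}, and composing with Lemma~\ref{lmm:intersupport} before applying Lemma~\ref{lmm:inclusion} to bound $g_P(\alpha)$ by $\brl \phi \brr$.
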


\begin{proof}
    Let $m \in P$. Then $f_P(m) = \{\Gamma \; | \; \Gamma \Vdash_{P_{\mathbb{B}}}^\varnothing m\}$ and $g_P \circ f_P(m) = \bigcup_{L\in f_P(m)} \brl L \brr$. This is equal to $\bigcup_{L\in \{L \; | \; L \vdash_{P_{\mathbb{B}}} m\}} \brl L \brr$ by Lemma~\ref{lmm:deriveqsupport}. Therefore $\bigcup_{L\in f_P(m)} \brl L \brr \subseteq m$ by Lemma~\ref{lmm:inclusion}, and since $\{m\} \in f_P(m)$, conclude that $\bigcup_{L\in f_P(m)} \brl L \brr = g_P \circ f_P(m) = m$.

    Now, let $\alpha = \{\Gamma \; | \; \Gamma \Vdash_{P_{\mathbb{B}}}^\varnothing A\} \in (P_{\mathbb{B}})_{\mathbb{P}}$. Then $g_P(\alpha) = \bigcup_{L\in\alpha} \brl L \brr$ and $f_P \circ g_P(\alpha) = \{\Gamma \; | \; \Gamma \Vdash_{P_{\mathbb{B}}}^\varnothing g_P(\alpha)\}$. Notice that by Lemma~\ref{lmm:inclusion}, $g_P(\alpha) \subseteq \brl A \brr$, and since $\brl A \brr \in \alpha$ yields $\brl A \brr \subseteq g_P(\alpha)$, we conclude  $g_P(\alpha) = \brl A \brr$. Now, $\Gamma \in \alpha \Leftrightarrow \Gamma \Vdash_{P_{\mathbb{B}}}^\varnothing A \Leftrightarrow \Gamma \Vdash_{P_{\mathbb{B}}}^\varnothing \brl A \brr$ by Lemma~\ref{lmm:intersupport}, and $\Gamma \Vdash_{P_{\mathbb{B}}}^\varnothing \brl A \brr \Leftrightarrow \Gamma \Vdash_{P_{\mathbb{B}}}^\varnothing g_P(\alpha) \Leftrightarrow \Gamma \in f_P \circ g_P(\alpha)$. Hence $f_P \circ g_P(\alpha) = \alpha$.
\end{proof}

Next, we define a \emph{base morphism} -- a map between bases that preserves the support relation.  

\begin{definition}[Base morphism]
    We define a \emph{base morphism} $b:({\At}_\bb, {\Rule}_\bb, \sigma) \rightarrow ({\At}_\bc, {\Rule}_\bc, \tau)$ as a function $b: \sigma({\At}) \rightarrow {\At}_\bc$ such that the inductively induced map on formulas $b: ({\forms}_{\At})^\sigma \rightarrow \forms_\bc$ preserves the valuation: $b \circ \sigma(a) \dashv \vdash_{\bc} \tau(a)$; and the support relation: $\myv \phi$ implies $\Vdash_\bc^\varnothing b(\phi)$.
\end{definition}

\begin{definition} \label{def:basemap}
    Let $\bb = ({\At}_{\bb}, {\Rule}_{\bb}, \sigma)$ be a base. We define the map $f_B : \bb \rightarrow (\bb_\mathbb{P})_\mathbb{B}$ as $p \mapsto p^*$, and a map $g_B: (\bb_\mathbb{P})_\mathbb{B} \rightarrow \bb$ by mapping each atom $q \in {(\sigma_\mathbb{P})_\mathbb{B}}({\At})$ to some atom $p \in \sigma({\At})$ such that $p^* = \brl q \brr$.
\end{definition}

We next demonstrate that $g_B$ acts on formulas in line with its definition on atoms.

\begin{lemma} \label{lmm:starissquarebracket}
    Let $\bb = ({\At}_{\bb}, {\Rule}_{\bb}, \sigma)$ be a base and $\phi \in (\forms_{\At})^{(\sigma_\mathbb{P})_\mathbb{B}}$. Then $ (g_B(\phi))^* = \brl \phi \brr$.
\end{lemma}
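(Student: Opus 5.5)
Since both $(-)^*$ and $\brl - \brr$ are defined compositionally (the connectives are interpreted by the corresponding phase-space operations), the natural route is structural induction on $\phi \in (\forms_{\At})^{(\sigma_\mathbb{P})_\mathbb{B}}$, using that $g_B$ is extended homomorphically from atoms to formulas. Here $(-)^*$ is computed in $\bb_\mathbb{P}$. The bracket $\brl - \brr$ is computed in the phase model $\bb_\mathbb{P}$ itself, because the atoms of $(\bb_\mathbb{P})_\mathbb{B}$ are exactly the definable facts of $\bb_\mathbb{P}$. Thus both sides live in the same phase space, and the induction can compare them directly.

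\textbf{Base case.} For an atom $q$, $g_B(q)$ is by Definition~\ref{def:basemap} some $p\in\sigma(\At)$ with $p^* = \brl q\brr$, so $(g_B(q))^* = p^* = \brl q \brr$ holds by construction. Before this I will check that such a $p$ exists. Every atom in $(\sigma_\mathbb{P})_\mathbb{B}(\At)$ is of the form $v(a)$, where $v$ is the valuation of $\bb_\mathbb{P}$, and $v(a) = (\sigma(a))^*$ by Definition~\ref{def:basetophase}. Hence $p=\sigma(a)$ works, and $\brl v(a)\brr = v(a)$ since $\brl - \brr$ is the identity on atoms. The units are immediate, since $g_B$ fixes them and both $(-)^*$ and $\brl - \brr$ send $\bot,1,\top,0$ to $\bot,\bot^\bot,M,\top^\bot$ of $\bb_\mathbb{P}$.

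\textbf{Inductive step.} For $\phi = \alpha\odot\beta$ with $\odot\in\{\otimes,\parr,\with,\oplus\}$, I compute
\[
(g_B(\alpha\odot\beta))^* = (g_B(\alpha)\odot g_B(\beta))^* = (g_B(\alpha))^*\odot_M (g_B(\beta))^* = \brl\alpha\brr\odot_M\brl\beta\brr = \brl \alpha\odot\beta\brr .
\]
The first equality uses that $g_B$ commutes with connectives, the second and last use the compositional definitions of $(-)^*$ and $\brl-\brr$, and the third is the induction hypothesis. The cases $\alpha^\bot$, $\bang\alpha$ and $\quest\alpha$ are identical, using the phase operations $(-)^\bot$, $\bang_M$ and $\quest_M$ on both sides.

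\textbf{Main obstacle.} I expect the only delicate point to be bookkeeping rather than mathematics. I must make sure that both interpretations really take values in the same phase model $\bb_\mathbb{P}$, with the same choice of $J$ and hence the same $I$. This matters for $\bang_M$ and $\quest_M$, which depend on $I$. I also need the well-definedness of $g_B$ on atoms, namely that the choice of $p$ does not matter, because only $p^*$ enters the statement. If this alignment holds, Lemma~\ref{lmm:validphasespace} is not even needed, although it can be cited to rewrite both sides as $\{\Gamma \mid \Gamma\Vdash^\varnothing_\bb g_B(\phi)\}$ if desired.
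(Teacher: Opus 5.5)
Your proposal is correct and follows the paper's own proof essentially verbatim. Both use structural induction on $\phi$: the atomic case holds by the defining property $p^* = \brl q \brr$ of $g_B$, the units are checked directly, and every connective case ($(-)^\bot$, $\otimes,\parr,\with,\oplus$, $\bang$, $\quest$) reduces to the induction hypothesis, because $g_B$, $(-)^*$ and $\brl - \brr$ are all compositional into the same phase model $\bb_\mathbb{P}$. Your extra check that a suitable $p$ exists is a useful detail the paper leaves implicit: take $p = \sigma(a)$, since every atom occurring here has the form $v(a) = (\sigma(a))^*$.
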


\begin{proof}
    We prove the statement by induction on the structure of $\phi$, with $\odot \in \{\otimes,\parr,\with,\oplus\}$.
    \begin{description}
        \item[$\phi = p:$] $g_B(p) = q$ s.t. $q^* = p$, thus $(g_B(p))^* = q^* = \brl p \brr$ by definition.

        \item[$\phi \in \{\bot, 1,\top,0\}:$] $g_B(\phi) = \psi$ s.t.~$\psi^* = \brl \phi \brr$ since $(-)^{{(\sigma_\mathbb{P})_\mathbb{B}}}$ is identity on units. Hence, $(g_B(\phi))^* = \psi^* = \brl \phi \brr$.

        \item[$\phi = \alpha^\bot:$] $(g_B(\alpha^\bot))^* = (g_B(\alpha)^\bot)^* = (g_B(\alpha)^*)^\bot$. Now, by inductive hypothesis $(g_B(\alpha)^*)^\bot = \brl \alpha \brr^\bot = \brl \alpha^\bot \brr$.

        \item[$\phi = \alpha \odot \beta:$] $(g_B(\alpha \odot \beta))^* = (g_B(\alpha) \odot g_B(\beta))^* = (g_B(\alpha))^* \odot_{\bb_\mathbb{P}} (g_B(\beta))^*$. Now, by inductive hypothesis $(g_B(\alpha))^* \odot_{\bb_\mathbb{P}} (g_B(\beta))^* = \brl \alpha \brr \odot_{\bb_\mathbb{P}} \brl \beta \brr = \brl \alpha \odot \beta \brr$.

        \item[$\phi = \bang \alpha:$] $(g_B(\bang \alpha))^* = (\bang g_B(\alpha))^* = \bang_{\bb_\mathbb{P}} (g_B(\alpha))^*$. By inductive hypothesis $\bang_{\bb_\mathbb{P}} (g_B(\alpha))^* = \bang_{\bb_\mathbb{P}} \brl \alpha \brr = \brl \bang \alpha \brr$.

        \item[$\phi = \quest \alpha:$] $(g_B(\quest \alpha))^* = (\quest g_B(\alpha))^* = \quest_{\bb_\mathbb{P}} (g_B(\alpha))^*$. By inductive hypothesis $\quest_{\bb_\mathbb{P}} (g_B(\alpha))^* = \quest_{\bb_\mathbb{P}} \brl \alpha \brr = \brl \quest \alpha \brr$.
    \end{description}
\end{proof}

Note that the inductive extension of $f_B$ does not simplify to $(-)^*$, unlike in phase models (Lemma~\ref{lmm:validphasespace}); i.e.,~it is not the case that $f_B(\phi) = \phi^*$ for  $\phi \in (\forms_{\At})^\sigma$. However, it is the case that $\brl f_B(\phi) \brr = \phi^*$. The proof is identical to that of Lemma~\ref{lmm:starissquarebracket}, with the base case $\brl f_B(p) \brr = \brl p^* \brr = \brl \{\Gamma \; | \; \Gamma \Vdash_\bb^\varnothing p \} \brr = \{\Gamma \; | \; \Gamma \Vdash_\bb^\varnothing p \} = p^*$.

\begin{lemma} \label{lmm:starissquarebracket2}
    Let $\bb = ({\At}_{\bb}, {\Rule}_{\bb}, \sigma)$ be a base and $\phi \in (\forms_{\At})^\sigma$. Then $\brl f_B(\phi) \brr = \phi^*$.
\end{lemma}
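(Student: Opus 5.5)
The plan is a structural induction on $\phi \in (\forms_{\At})^\sigma$, exactly mirroring the proof of Lemma~\ref{lmm:starissquarebracket}, with the roles of the two interpretations exchanged. Two facts drive the argument. First, $f_B$ is defined on atoms and extended homomorphically to formulas, so $f_B(\alpha \odot \beta) = f_B(\alpha) \odot f_B(\beta)$, $f_B(\alpha^\bot) = f_B(\alpha)^\bot$, and similarly for $\bang$, $\quest$ and the units. Second, both $\brl - \brr$ (on formulas over $(\bb_\mathbb{P})_\mathbb{B}$) and $(-)^*$ (on formulas over $\bb$) are defined on compound formulas by the same phase-space operations of $\bb_\mathbb{P}$, following Definition~\ref{def:interpretation}. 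Each inductive step therefore reduces to rewriting by the inductive hypothesis.

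\textbf{Base case} ($\phi = p$ an atom in $\sigma(\At)$). Here $f_B(p) = p^*$, which is an atom of $(\bb_\mathbb{P})_\mathbb{B}$: it is a definable fact of $\bb_\mathbb{P}$, since $p = \sigma(a)$ and $p^* = v(a)$. On atoms $\brl - \brr$ is the identity, so $\brl f_B(p) \brr = p^* = \{\Gamma \mid \Gamma \Vdash_\bb^\varnothing p\}$, as required.

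\textbf{Units.} For $u \in \{\bot,1,\top,0\}$ we have $f_B(u) = u$, and $\brl u \brr$ and $u^*$ are the same fact of $\bb_\mathbb{P}$, because both interpretations send units to the same phase-space constants. For $\bot$ this uses the fact that $\bot$ of $\bb_\mathbb{P}$ is exactly $\{\Gamma \mid \Gamma \Vdash_\bb^\varnothing \bot\}$ (Definition~\ref{def:basetophase}), consistent with Lemma~\ref{lmm:validphasespace}.

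\textbf{Inductive steps.} For negation, $\brl f_B(\alpha^\bot) \brr = \brl f_B(\alpha)^\bot \brr = \brl f_B(\alpha) \brr^\bot = (\alpha^*)^\bot = (\alpha^\bot)^*$. For $\odot \in \{\otimes,\parr,\with,\oplus\}$, $\brl f_B(\alpha) \odot f_B(\beta) \brr = \brl f_B(\alpha) \brr \odot_{\bb_\mathbb{P}} \brl f_B(\beta) \brr = \alpha^* \odot_{\bb_\mathbb{P}} \beta^* = (\alpha \odot \beta)^*$. The cases for $\bang$ and $\quest$ use $\bang_{\bb_\mathbb{P}}$ and $\quest_{\bb_\mathbb{P}}$ in the same way.

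\textbf{Main obstacle.} The only delicate point is bookkeeping in the base case: we must check that $p^*$ really is an element of $\At_{(\bb_\mathbb{P})_\mathbb{B}} = D_{\bb_\mathbb{P}}$, so that $f_B$ is well-typed and $\brl - \brr$ acts on it as the identity. We must also make sure that the operations used by $\brl - \brr$ are those of the phase space $\bb_\mathbb{P}$ (the phase model from which $(\bb_\mathbb{P})_\mathbb{B}$ was built), in particular the same $I = J$ for the exponentials. Both follow directly from Definitions~\ref{def:phasetobase} and~\ref{def:basetophase}, after which the induction is routine.
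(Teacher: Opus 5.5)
Your proposal is correct and follows the same structural induction as the paper: the atomic case is $\brl p^* \brr = p^*$, and every other case is a rewrite using the homomorphic extension of $f_B$, the shared phase-space operations of $\bb_\mathbb{P}$, and the inductive hypothesis. Your explicit check that $p^* = v(a)$ is a definable fact of $\bb_\mathbb{P}$, which makes $f_B$ well-typed, and your remark that the two clauses for $\bot$ agree are small details the paper leaves implicit.
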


\begin{proof}
    We prove the statement by induction on the structure of $\phi$, with $\odot \in \{\otimes,\parr,\with,\oplus\}$.
    \begin{description}
        \item[$\phi = p:$] $\brl f_B(p) \brr = \brl p^* \brr = \brl \{\Gamma \; | \; \Gamma \Vdash_\bb^\varnothing p \} \brr = \{\Gamma \; | \; \Gamma \Vdash_\bb^\varnothing p \} = p^*$.

        \item[$\phi \in \{\bot, 1,\top,0\}:$] $f_B(\phi) = \phi$ since $(-)^\sigma$ is identity on units. Hence, $\brl f_B(\phi) \brr = \brl \phi \brr =  \{\Gamma \; | \; \Gamma \Vdash_\bb^\varnothing \phi \}  = \phi^*$.

        \item[$\phi = \alpha^\bot:$] $\brl f_B(\alpha^\bot) \brr = \brl f_B(\alpha)^\bot \brr = \brl f_B(\alpha)\brr^\bot$. Now, by inductive hypothesis $\brl f_B(\alpha)\brr^\bot = (\alpha^*)^\bot = (\alpha^\bot)^*$.

        \item[$\phi = \alpha \odot \beta:$] $\brl f_B(\alpha \odot \beta) \brr = \brl f_B(\alpha) \odot f_B(\beta) \brr = \brl f_B(\alpha) \brr \odot_{\bb_\mathbb{P}} \brl f_B(\beta) \brr$. Now, by inductive hypothesis $\brl f_B(\alpha) \brr \odot_{\bb_\mathbb{P}} \brl f_B(\beta) \brr = \alpha^* \odot_{\bb_\mathbb{P}} \beta^* = (\alpha \odot \beta)^*$.

        \item[$\phi = \bang \alpha:$] $\brl f_B(\bang \alpha) \brr = \brl \bang f_B(\alpha) \brr = \bang_{\bb_\mathbb{P}} \brl f_B(\alpha) \brr$. By inductive hypothesis $\bang_{\bb_\mathbb{P}} \brl f_B(\alpha) \brr = \bang_{\bb_\mathbb{P}} (\alpha^*) = (\bang \alpha)^*$.

        \item[$\phi = \quest \alpha:$] $\brl f_B(\quest \alpha) \brr = \brl \quest f_B(\alpha) \brr = \quest_{\bb_\mathbb{P}} \brl f_B(\alpha) \brr$. By inductive hypothesis $\quest_{\bb_\mathbb{P}} \brl f_B(\alpha) \brr = \quest_{\bb_\mathbb{P}} (\alpha^*) = (\quest \alpha)^*$.
    \end{description}
\end{proof}

Note that $f_B$ and $g_B$ are base morphisms. The support relation is preserved: let $\phi \in (\forms_{\At})^\sigma$ s.t. $\myv \phi$. Then $\varnothing \in \phi^* = f_B(\phi)$, hence $\Vdash^\varnothing_{(\bb_\mathbb{P})_\mathbb{B}} f_B(\phi)$. Similarly, let $\psi \in (\forms_{\At})^{(\sigma_\mathbb{P})_\mathbb{B}}$ s.t. $\Vdash_{(\bb_\mathbb{P})_\mathbb{B}}^\varnothing \psi$. By Lemma~\ref{lmm:intersupport}, $\psi \Vdash_{(\bb_\mathbb{P})_\mathbb{B}}^\varnothing \brl \psi \brr$, hence  $\Vdash_{(\bb_\mathbb{P})_\mathbb{B}}^\varnothing \brl \psi \brr$. By Lemmas~\ref{lmm:deriveqsupport},~\ref{lmm:inclusion} and~\ref{lmm:starissquarebracket}, $\varnothing \in \brl \psi \brr = (g_B(\psi))^* = \{\Gamma \; | \; \Gamma \Vdash_\bb^\varnothing g_B(\psi)\}$ by Lemma~\ref{lmm:validphasespace}, i.e.~$\myv g_B(\psi)$.

The valuation is preserved: denote ${(\sigma_\mathbb{P})_\mathbb{B}} \eqqcolon \tau$. Notice that $f_B \circ \sigma$ is given by $a \mapsto (\sigma(a))^* =  \{\Gamma \; | \; \Gamma \Vdash_\bb^\varnothing \sigma(a) \}$. Further, by Definitions~\ref{def:phasetobase} and~\ref{def:basetophase}, $\tau$ is given by $a \mapsto  \{\Gamma \; | \; \Gamma \Vdash_\bb^\varnothing \sigma(a) \}$. Hence, $f_B \circ \sigma = \tau$, implying  $f_B \circ \sigma(a) \dashv \vdash_{\bb} \tau(a)$. On the other hand, $g_B \circ \tau$ is given by $a \mapsto \{\Gamma \; | \; \Gamma \Vdash_\bb^\varnothing \sigma(a) \} \mapsto p$ such that $p^* = \{\Gamma \; | \; \Gamma \Vdash_\bb^\varnothing \sigma(a) \}$. Since $(\sigma(a))^* = \{\Gamma \; | \; \Gamma \Vdash_\bb^\varnothing \sigma(a) \} = p^*$, we have $p \dashv \vdash_{\bb} \sigma(a)$. Hence, $g_B \circ \tau(a) \dashv \vdash_{\bb} \sigma(a)$.

Finally, we show that the composition of base morphisms $f_B$ and $g_B$ yields an identity up to intersupport.

\begin{theorem}
    For every $\bb \in \mathbb{B}$, every $\phi \in (\forms_{\At})^\sigma$ and every $\psi \in (\forms_{\At})^{{(\sigma_\mathbb{P})_\mathbb{B}}}$ we have $g_B \circ f_B(\phi) \dashv \vdash_{\bb} \phi$ and $f_B \circ g_B(\psi) \dashv \vdash_{(\bb_\mathbb{P})_\mathbb{B}} \psi$.
\end{theorem}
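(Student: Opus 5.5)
The plan is to reduce both intersupport claims to equalities of the facts $(-)^*$ in $\bb_\mathbb{P}$, and then return to support via Lemma~\ref{lmm:validphasespace}. The key observation is that, for formulas $\chi,\chi'$ over $\bb$, $\chi^* = \chi'^*$ implies $\chi \dashv \vdash_\bb \chi'$. Indeed, by Lemma~\ref{lmm:validphasespace}, $\chi^* = \{\Gamma \mid \Gamma \Vdash_\bb^\varnothing \chi\}$. Since $\chi \Vdash_\bb^\varnothing \chi$ (Ax, or the identity property of support), we get $\{\chi\} \in \chi^* = \chi'^*$, hence $\chi \Vdash_\bb^\varnothing \chi'$, and symmetrically $\chi' \Vdash_\bb^\varnothing \chi$.

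For the first claim, take $\phi \in (\forms_{\At})^\sigma$. By Lemma~\ref{lmm:starissquarebracket} applied to $f_B(\phi)$, which lies in $(\forms_{\At})^{(\sigma_\mathbb{P})_\mathbb{B}}$ because $f_B\circ\sigma = \tau$, we obtain $(g_B(f_B(\phi)))^* = \brl f_B(\phi) \brr$. By Lemma~\ref{lmm:starissquarebracket2}, $\brl f_B(\phi)\brr = \phi^*$. Hence $(g_B\circ f_B(\phi))^* = \phi^*$, and the observation above gives $g_B\circ f_B(\phi) \dashv\vdash_\bb \phi$.

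For the second claim, write $\bc \coloneqq (\bb_\mathbb{P})_\mathbb{B}$ and let $P \coloneqq \bb_\mathbb{P}$, so that $\bc = P_\mathbb{B}$. By Lemma~\ref{lmm:intersupport}, $\psi \dashv\vdash_\bc \brl\psi\brr$ and $f_B(g_B(\psi)) \dashv\vdash_\bc \brl f_B(g_B(\psi))\brr$. By Lemma~\ref{lmm:starissquarebracket2} applied to $g_B(\psi)$, we have $\brl f_B(g_B(\psi))\brr = (g_B(\psi))^*$. By Lemma~\ref{lmm:starissquarebracket}, $(g_B(\psi))^* = \brl\psi\brr$. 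So both sides are intersupported in $\bc$ with the same atom $\brl\psi\brr$, and transitivity of intersupport (via \ref{eq:supp-inf} and monotonicity, i.e.\ cut for $\Vdash^\varnothing_\bc$) gives $f_B\circ g_B(\psi) \dashv\vdash_\bc \psi$.

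The main obstacle is the bookkeeping about domains. We must check that $f_B(\phi)$ lands in the set of formulas on which $g_B$ is defined, i.e.\ formulas built from atoms in $\tau(\At)$; this holds since $f_B\circ\sigma=\tau$ exactly. Conversely, $g_B(\psi)$ must land in $(\forms_{\At})^\sigma$, which holds by the choice of representatives in Definition~\ref{def:basemap}. We also need the atoms $\brl\psi\brr$ to be genuine atoms of $\bc$, i.e.\ definable facts of $P$; this follows from the earlier corollary identifying definable facts with the sets $(A^\flat)^*$. A minor point is that $g_B$ depends on a choice of representative $p$ with $p^* = \brl q\brr$. This choice is harmless precisely because the argument only uses $(-)^*$-values, which is why the result holds up to intersupport rather than on the nose.
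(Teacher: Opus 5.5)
Your proposal is correct and follows the paper's proof essentially step for step. For the first claim you use Lemmas~\ref{lmm:starissquarebracket} and~\ref{lmm:starissquarebracket2} to get $(g_B\circ f_B(\phi))^* = \phi^*$, and then Lemma~\ref{lmm:validphasespace} to turn this into intersupport; for the second you chain the two instances of Lemma~\ref{lmm:intersupport} through the common atom $\brl\psi\brr$. Your extra points (reflexivity and transitivity of support, and the domain bookkeeping for $f_B$ and $g_B$) only make explicit what the paper leaves implicit.
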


\begin{proof}
    Let $\phi \in (\forms_{\At})^\sigma$. Then $g_B \circ f_B(\phi) = \psi$ such that $\psi^* = \brl f_B(\phi) \brr = \phi^*$ by Lemma~\ref{lmm:starissquarebracket2}. That is, $\{\Gamma \; | \; \Gamma \Vdash_\bb^\varnothing \phi\} = \{\Gamma \; | \; \Gamma \Vdash_\bb^\varnothing \psi\}$. Thus, $\psi \dashv \vdash_\bb \phi$, yielding $\psi = g_B \circ f_B(\phi) \dashv \vdash_{\bb} \phi$.

    Now, let $\psi \in (\forms_{\At})^{{(\sigma_\mathbb{P})_\mathbb{B}}}$. Then $f_B \circ g_B(\psi) = f_B(\phi)$ s.t. $\phi^* = \brl \psi \brr$. Since $\brl f_B(\phi) \brr = \phi^*$ by Lemma~\ref{lmm:starissquarebracket2}, we have that $f_B(\phi) \dashv \vdash_{(\bb_\mathbb{P})_\mathbb{B}} \phi^*$ by Lemma~\ref{lmm:intersupport}; hence, $f_B(\phi) \dashv \vdash_{(\bb_\mathbb{P})_\mathbb{B}} \brl \psi \brr$. Since $\brl \psi \brr \dashv \vdash_{(\bb_\mathbb{P})_\mathbb{B}} \psi $ by Lemma~\ref{lmm:intersupport}, we have that $f_B \circ g_B(\psi) \dashv \vdash_{(\bb_\mathbb{P})_\mathbb{B}} \psi$.
\end{proof}

\section{Concluding Remarks} \label{sec:conclusion}
The main contribution of this work is establishing an equivalence between a well-known algebraic semantics -- phase semantics -- and a recently developed base-extension semantics for linear logic. As a novel contribution to the base-extension semantics of classical linear logic, we have introduced the semantic clauses for the exponentials $(\bang)$ and $(\quest)$. 

Interestingly enough, the definitions of base and phase model morphisms that allow us to prove the equivalence are very similar in spirit to the completeness proofs (w.r.t. LL) of both phase and base-extension semantics. In the case of the phase morphism $f_P$ in Definition~\ref{def:monoidmap}, a so-called canonical phase model is constructed, but relative to the support in a given base, as opposed to the provability relation of LL in the completeness proof of phase semantics~\cite{DBLP:journals/tcs/Girard87, okada1998introduction}. Similarly, in the case of the phase morphism $f_B$ in Definition~\ref{def:basemap}, a so-called simulation base~\cite{DBLP:journals/igpl/Sandqvist15} is constructed, but one that mimics the derivability relation of the underlying phase model (i.e.~that between facts), as opposed to the derivability of LL~\cite{DBLP:journals/corr/abs-2504-08349}.

A similar construction involving quantifying over all multisets of formulas proving another formula is used in defining canonical models, e.g.~for the logic programming language based on $\top,\with,\multimap$ and $\Rightarrow$ in~\cite{DBLP:journals/iandc/HodasM94} and for the minimal non-normal modal logic in~\cite{Porello03072015}.

The~\ref{eq:supp-bang} clause presented in this paper differs from the intuitionistic $(\bang)$ clause in~\cite{DBLP:journals/corr/abs-2402-01982}, even if translated to the classical setting as proposed in the conclusion of~\cite{DBLP:journals/corr/abs-2504-08349}. The difference lies in the atomic multisets indexing the support relation:~\cite {DBLP:journals/corr/abs-2402-01982} has $\varnothing$ in place of $S$ in Definition~\ref{def:support}. Therefore, the two clauses only coincide when considering a base whose set of structural atoms is empty. Yet completeness proofs, which utilise a base containing structural atoms, are presented for both semantics. The same, and even greater, dissimilarity occurs between the $(\quest)$ clause conjectured in the conclusion of~\cite{DBLP:journals/corr/abs-2504-08349} and the~\ref{eq:supp-quest} clause given in this work: the difference lies not only in the empty multiset in place of a structural one but also in an extra multiset present in the support of $\bot$. The soundness proofs differ as well: the current paper makes use of structural atoms in the proof of Theorem~\ref{thm:soundness}, as one can weaken and contract on them, while no such atoms seem to appear in the corresponding proof in~\cite{DBLP:journals/corr/abs-2402-01982}. As of now, the difference between these approaches remains subtle and requires further investigation. 

Morphisms of bases and phase models make both into categories. It is therefore natural to ask whether the translations presented in Section~\ref{sec:corr} extend to functors by defining them on morphisms. Likewise, it seems reasonable to expect that the functions $f_P,g_P$ and $f_B,g_B$ from Definitions~\ref{def:monoidmap} and~\ref{def:basemap} give natural isomorphisms, hence exhibiting an equivalence of categories of bases and phase models. A subtlety arises for the functions $f_B,g_B$, as the isomorphism is not strict, but rather holds up to intersupport, suggesting higher categorical structure. This observation opens up the possibility of connecting this work to the category-theoretic interpretation of base-extension semantics~\cite{DBLP:journals/sLogica/PymRR25}.

Another prospect for future work is to investigate how results native to one semantics can be transferred to the other, and vice versa. Similarly, it is worth exploring whether the established equivalence allows one to prove the syntactic properties about linear logic without actually ever going into its syntax.

Finally, it would be interesting to explore the possible equivalence between phase semantics~\cite{kanovich2006intuitionistic,abrusci1990sequent} and B-eS~\cite{DBLP:conf/tableaux/GheorghiuGP23,DBLP:journals/corr/abs-2402-01982} for intuitionistic linear logic. The semantics differ from the classical ones as follows: in phase semantics, intuitionistic phase spaces with a closure relation other than double orthogonality are used, while in B-eS, the clauses quantify over all atoms in place of $\bot$. Given the relation between classical and intuitionistic phase spaces presented in~\cite{kanovich2006intuitionistic} and the similarity of B-eS clauses for classical and intuitionistic linear logic, it is reasonable to expect an intuitionistic counterpart of this work's result to hold as well.


\bibliography{bibliography} 

\appendix

\section{Exponentials in B-eS: soundness and completeness} \label{app:a}
Note that the content of this appendix builds upon the work in~\cite{DBLP:journals/corr/abs-2504-08349}. In particular, Lemma~\ref{lmm:raa}, Theorem~\ref{thm:soundness}, Lemma~\ref{lemma:completeness} and Theorem~\ref{thm:completeness} are extensions of~\cite[Lemma 4.1]{DBLP:journals/corr/abs-2504-08349},~\cite[Theorem 4.2]{DBLP:journals/corr/abs-2504-08349},~\cite[Lemma 5.3]{DBLP:journals/corr/abs-2504-08349} and~\cite[Theorem 5.4]{DBLP:journals/corr/abs-2504-08349}, accordingly, with inductive cases for $(\bang)$ and $(\quest)$.

\begin{lemma}[Monotonicity~\cite{DBLP:journals/corr/abs-2504-08349}] \label{lemma:monotonicity}
    If $\Gamma \Vdash^{L}_{\mathcal{B}} \phi$ and $\mathcal{C} \supseteq \mathcal{B}$, then $\Gamma \Vdash^{L}_{\mathcal{C}} \phi$.
\end{lemma}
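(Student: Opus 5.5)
The plan is to argue by induction on the structure of $\phi$, but with a generalised statement. Because of~\ref{eq:supp-inf}, the claim for $\Gamma \neq \varnothing$ reduces to the claim for the consequent formula. Concretely, suppose $\Gamma = \{\psi_1,\dots,\psi_n\}$ and $\Gamma \Vdash^L_\bb \phi$, and let $\bc \supseteq \bb$. Take any $\bd \supseteq \bc$ and $K_i$ with $\Vdash^{K_i}_\bd \psi_i$. Since $\bd \supseteq \bb$ by transitivity of extension, the hypothesis gives $\Vdash^{L,K_1,\dots,K_n}_\bd \phi$ directly, so $\Gamma \Vdash^L_\bc \phi$. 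It therefore suffices to prove that $\Vdash^L_\bb \phi$ and $\bc \supseteq \bb$ imply $\Vdash^L_\bc \phi$. This also requires that extension is transitive, which is immediate from the definition: atoms and valuation are equal, and rule sets are included.

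For the atomic case, \ref{eq:supp-at} quantifies over all $\bd \supseteq \bb$. Every $\bd \supseteq \bc$ is such an extension, so the condition restricts to extensions of $\bc$. The same observation handles \ref{eq:supp-tensor}, \ref{eq:supp-1}, \ref{eq:supp-parr} and \ref{eq:supp-plus}: each is a universally quantified statement over extensions, so restricting the range preserves it. These cases do not even need the induction hypothesis. For \ref{eq:supp-imply}, \ref{eq:supp-and} and \ref{eq:supp-0}, the clause unfolds into support statements at the same base $\bb$. These are either of inferential form $\phi \Vdash^L_\bb \bot$, handled by the \ref{eq:supp-inf} argument above, or support of subformulas and of $\bot$, handled by the induction hypothesis. \ref{eq:supp-top} is trivial.

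The delicate cases are \ref{eq:supp-bang} and \ref{eq:supp-quest}, since they quantify over multisets $S$ that are structural \emph{in $\bb$}, and the relevant class changes when we pass to $\bc$. First I check the direction of the change. Structurality is defined by the presence of the Wk and Ctr rules, and $\Rule_\bc \supseteq \Rule_\bb$, so any $S$ structural in $\bb$ is structural in $\bc$. The class of structural multisets therefore grows under extension. This is the wrong direction for a naive restriction argument: to establish $\Vdash^L_\bc \quest\phi$ we must handle every $S$ structural in $\bc$, including atoms that only became structural in $\bc$. The hypothesis $\Vdash^L_\bb \quest\phi$ says nothing about these.

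I see two possible fixes. The first is to read the clauses as intended, with ``structural in $\bc$'' (the base being quantified over), which makes the \ref{eq:supp-quest} case go through. The second is to check whether the proof of Lemma~\ref{lemma:phitobangphi} and the later uses only need $S$ structural in the base at which the clause is evaluated. I would first try to keep the definition literal, with structurality relative to the fixed base in the clause. In \ref{eq:supp-bang} the condition ``$S$ structural in $\bb$'' then sits in an antecedent-like position inside the premise $(\Vdash^S_\bd \phi \Rightarrow \Vdash^{S,K}_\bd \bot)$, quantified universally over $S$. Under this reading, passing from $\bb$ to $\bc$ enlarges the set of $S$ over which the premise is required, which makes the premise stronger. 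The implication is then easier to satisfy, and monotonicity holds for \ref{eq:supp-bang}. For \ref{eq:supp-quest} the enlargement of the class is genuinely problematic, and I expect this to be the main obstacle. There I would argue that, for an $S$ structural in $\bc$ but not in $\bb$, one can use the Wk and Ctr rules of $\bc$ to reduce to the structural multisets of $\bb$. If that fails, I would adopt the first fix and state that structurality is taken relative to the extension being considered.
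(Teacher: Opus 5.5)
\textbf{Gap: the $(\quest)$ case.} Your reduction of (Inf) to the formula case and the MALL cases are correct. So is your analysis of $(\bang)$: with the quantifiers over $\bd$ and $S$ scoped inside the antecedent, enlarging the class of structural $S$ only strengthens that antecedent, so restricting to extensions of $\bc$ suffices. The paper gives no argument for the exponentials at all; it only cites the MALL result. The genuine gap is the $(\quest)$ case, which you leave open. Your proposed repair, using Wk and Ctr of $\bc$ to reduce to the structural multisets of $\bb$, cannot succeed, because for the clause as literally stated monotonicity is false. Here is a counterexample.
\begin{itemize}
\item Let $\bb$ have atoms $p,q,s,\bot$ and, besides Ax and Subs, the single rule $\rho$ with premises $q \vdash q$ and $\varnothing; p \vdash \bot$ and conclusion $q \vdash \bot$.
\item No atom is structural in $\bb$, so only $S=\varnothing$ enters the $(\quest)$ clause at $\bb$. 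Instantiating $p \Vdash^{\varnothing}_{\bc} \bot$ with $K=\{p\}$ gives $p \vdash_{\bc} \bot$, hence $q \vdash_{\bc} \bot$ by $\rho$. So $\Vdash^{q}_{\bb} \quest p$ holds.
\item Let $\bc'$ add the Wk and Ctr rules for $s$, and let $\bc \supseteq \bc'$ further add the rule with premises $p \vdash p$, $s \vdash s$ and conclusion $p,s \vdash \bot$. Then $p \Vdash^{s}_{\bc} \bot$ holds, via Subs and the atomic clause.
\item However, $s,q \vdash_{\bc} \bot$ is not derivable. Read $\Gamma \vdash r$ as $\mu(\Gamma) \subseteq \mu(r)$, where $\mu(\Gamma)$ is the product of the $\mu(\gamma)$. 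Work in the commutative monoid generated by $e,x,y$ subject only to $e^2=e$, with $\mu(s)=\{e\}$, $\mu(p)=\{x\}$, $\mu(q)=\{y\}$, $\mu(\bot)=\{ex\}$.
\item Every rule of $\bc$ is sound for this reading. Ax and Subs are sound for any such reading. Wk and Ctr are sound since $e \cdot ex = ex$ and $e = e \cdot e$. The new rule is sound since $xe = ex$, and $\rho$ is sound vacuously since $x \neq ex$. But $ey \neq ex$.
\item Hence $\Vdash^{s,q}_{\bc} \bot$ fails, since it would yield $s,q \vdash_{\bc} \bot$ using Ax for $\bot$. Taking $S=\{s\}$, which is structural in $\bc'$, shows that $\Vdash^{q}_{\bc'} \quest p$ fails.
\end{itemize}

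So the $(\quest)$ case cannot be proved with structurality taken relative to the base at which the clause is evaluated. The hypothesis at $\bb$ only constrains refutations whose structural part is structural in $\bb$. The clause at $\bc'$ also demands the conclusion for atoms that became structural later. Wk and Ctr for such an atom can add or merge copies of it but never remove it from a context, so you cannot get from $p \Vdash^{s} \bot$ back to the $S=\varnothing$ instance the hypothesis talks about. Your ``first fix'' is therefore not optional: some change to the clause is needed. Quantifying in $(\quest)$ over $S$ structural in the extension $\bc$ works, and then the case is the same restriction-of-quantifier argument as (At). You should present this explicitly as an amendment of the definition, and re-check the places where $(\quest)$ is used, rather than as a proof of the lemma as stated.
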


\begin{lemma} \label{lemma:notbang}
    $\Vdash_\bb^L \bang \phi^\bot$ if and only if, for all $\bc \supseteq \bb$ and all $S$ structural in $\bb$, $\Vdash_\bc^S \phi$ implies $\Vdash_\bc^{S,L} \bot$.
\end{lemma}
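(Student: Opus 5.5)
We read the statement as concerning $(\bang\phi)^\bot$. With that reading both directions should follow by unfolding the clauses \ref{eq:supp-imply}, \ref{eq:supp-inf} and \ref{eq:supp-bang}, together with Lemma~\ref{lemma:phitobangphi} and monotonicity (Lemma~\ref{lemma:monotonicity}). By \ref{eq:supp-imply}, $\Vdash_\bb^L (\bang\phi)^\bot$ is equivalent to $\bang\phi \Vdash_\bb^L \bot$. By \ref{eq:supp-inf}, this in turn means: for all $\bc \supseteq \bb$ and all $K$, if $\Vdash_\bc^K \bang\phi$ then $\Vdash_\bc^{L,K}\bot$. So the statement compares two universally quantified conditions. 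In the first, the antecedent is support of $\bang\phi$ relative to an arbitrary $K$. In the second, it is support of $\phi$ relative to a structural $S$.

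\emph{Left to right.} Assume $\bang\phi \Vdash_\bb^L \bot$, and fix $\bc\supseteq\bb$ and $S$ structural in $\bb$ with $\Vdash_\bc^S\phi$. The Wk and Ctr rules for the atoms of $S$ lie in $\Rule_\bb\subseteq\Rule_\bc$, so $S$ is also structural in $\bc$. Lemma~\ref{lemma:phitobangphi}, applied in $\bc$, then gives $\Vdash_\bc^S\bang\phi$. Instantiating the unfolded \ref{eq:supp-inf} statement with $K:=S$ yields $\Vdash_\bc^{L,S}\bot$, which is what we need.

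\emph{Right to left.} Assume the right-hand condition, and fix $\bc\supseteq\bb$ and $K$ with $\Vdash_\bc^K\bang\phi$; we must show $\Vdash_\bc^{L,K}\bot$. We apply the clause \ref{eq:supp-bang} for $\Vdash_\bc^K\bang\phi$, choosing the first extension to be $\bc$ itself and the side multiset to be $L$. Its conclusion is then exactly $\Vdash_\bc^{K,L}\bot$, provided we can discharge its premise: for every $\bd\supseteq\bc$ and every structural $S$, $\Vdash_\bd^S\phi$ implies $\Vdash_\bd^{S,L}\bot$. Since $\bd\supseteq\bc\supseteq\bb$, this premise is precisely an instance of the hypothesis, with $\bd$ in the role of the extension of $\bb$.

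The main obstacle is matching the range of the structural parameter $S$ in the two conditions. In \ref{eq:supp-bang} evaluated at $\bc$, $S$ ranges over multisets structural in the base being extended. A literal reading makes this $\bc$, and $\bc$ may have more structural atoms than $\bb$. Our hypothesis, however, only covers $S$ structural in $\bb$. I would first check that the clause, as used in Lemma~\ref{lemma:phitobangphi} and in the soundness proofs, fixes the structural multisets relative to the base at which the outer support judgement is stated, here $\bb$. The quantifier over $\bc$ in \ref{eq:supp-inf} would then keep $S$ structural in $\bb$, and the two ranges coincide.

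If instead $S$ must range over atoms structural in $\bc$, I would try to strengthen the hypothesis. Applying the right-to-left argument directly at the base $\bc$ requires the hypothesis there; it holds for $\bc$ by restricting the quantifier over extensions, but only for $S$ structural in $\bb$. Closing that gap is where I expect the real work. The first thing I would try is an induction showing that the support relation is stable under adding the Wk/Ctr rules of atoms that are already structural in some extension.
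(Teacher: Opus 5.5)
Both of your directions are the paper's argument. In $(\Rightarrow)$ you apply Lemma~\ref{lemma:phitobangphi} in $\bc$ and then \ref{eq:supp-inf} with $K:=S$. In $(\Leftarrow)$ you instantiate \ref{eq:supp-bang} at the extension itself with side multiset $L$. The paper's $(\Leftarrow)$ applies the hypothesis ``for all structural $S$'' at the extension $\mathcal{X}$ without comment, so it glosses over exactly the mismatch you isolate.

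Your worry is justified, but neither of your two exits works. The first is not available. $\Vdash^K_{\mathcal{X}}\bang\phi$ is a judgement at $\mathcal{X}$ in its own right, so by \ref{eq:supp-bang} its $S$ ranges over multisets structural in $\mathcal{X}$, whatever judgement it sits inside. The second cannot succeed, because with ``structural in $\bb$'' the right-to-left direction is false.

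\emph{Counterexample.} Let $\bb$ have, besides Ax and Subs, only the rules $\dfrac{q\vdash q}{q\vdash\bot}$ and $\dfrac{\vdash\bot\qquad t\vdash t}{t\vdash\bot}$. Take $\phi=q$ and $L=\{t\}$. No atom is structural in $\bb$, so only $S=\varnothing$ counts.
\begin{itemize}
\item The right-hand side holds. If $\Vdash^\varnothing_\bc q$, then since $q\vdash_\bc\bot$, \ref{eq:supp-at} yields $\vdash_\bc\bot$. The second rule then gives $t\vdash_\bc\bot$, and hence $\Vdash^t_\bc\bot$ via Subs.
\item Now let $\mathcal{X}\supseteq\bb$ add the Wk and Ctr rules for a further atom $s\in\At_\bb$, together with $\dfrac{M\vdash s}{M\vdash q}$ for all $M$. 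Then $s$ is structural in $\mathcal{X}$ and $s\vdash_{\mathcal{X}} q$. So $\Vdash^s_{\mathcal{X}} q$, and $\Vdash^s_{\mathcal{X}}\bang q$ by Lemma~\ref{lemma:phitobangphi}.
\item Yet $s,t\not\vdash_{\mathcal{X}}\bot$. Every sequent derivable in $\mathcal{X}$ has non-empty context, so the second rule never fires. An induction on derivations then shows that $t\vdash t$ is the only derivable sequent with $t$ in its context.
\item Instantiating \ref{eq:supp-at} with $K=\varnothing$ gives $\not\Vdash^{s,t}_{\mathcal{X}}\bot$. Hence $\bang q\not\Vdash^t_\bb\bot$, so the left-hand side fails.
\end{itemize}

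\emph{Repair.} The fix belongs in the statement: let $S$ range over multisets structural in the extension $\bc$ rather than in $\bb$. With that change your proof is complete as written. In $(\Rightarrow)$, Lemma~\ref{lemma:phitobangphi} applies in $\bc$ directly. In $(\Leftarrow)$, a multiset structural in $\bc$ is structural in every $\bd\supseteq\bc$. So the antecedent of \ref{eq:supp-bang} at $\bc$ is literally an instance of the hypothesis at $\bd$.
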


\begin{proof}
    $(\Rightarrow):$ Assume that $\Vdash_\bb^L \bang \phi^\bot$ and $\Vdash_\bc^S \phi$ for an arbitrary structural $S$. Conclude $\Vdash_\bc^S \bang \phi$ by Lemma~\ref{lemma:phitobangphi}. Since also $\Vdash_\bb^L \bang \phi^\bot$ (hence $\bang \phi \Vdash_\bb^L \bot$), obtain $\Vdash_\bc^{S,L} \bot$ by~\ref{eq:supp-inf}, as required.

    $(\Leftarrow):$ Assume that $\Vdash_\bc^S \phi$ implies $\Vdash_\bc^{S,L} \bot$ for all structural $S$ and all $\bc \supseteq \bb$. Let $K$ be arbitrary and $\mathcal{X} \supseteq \bb$ be such that $\Vdash_{\mathcal{X}}^K \bang \phi$. Since $\mathcal{X} \supseteq \bb$, we have that $\Vdash_\mathcal{X}^S \phi$ implies $\Vdash_\mathcal{X}^{S,L} \bot$ (take $\bc = \mathcal{X}$). Hence, by~\ref{eq:supp-bang}, obtain $\Vdash_\mathcal{X}^{K,L} \bot$. Since $\mathcal{X} \supseteq \bb$ was such that $\Vdash_{\mathcal{X}}^K \bang \phi$, obtain $\bang \phi \Vdash_\bb^L \bot$ by~\ref{eq:supp-inf}, as desired.
\end{proof}

\begin{corollary} \label{cor:notbang}
    $\bang \psi_1,\dots, \bang \psi_n \Vdash_\bb^L \bang \phi^\bot$ if and only if, for all $\bc \supseteq \bb$ and all $S$ structural in $\bb$, $\Vdash_\bc^S \phi$ implies $\bang \psi_1,\dots, \bang \psi_n \Vdash_\bc^{S,L} \bot$.
\end{corollary}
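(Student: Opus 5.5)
The plan is to derive Corollary~\ref{cor:notbang} from Lemma~\ref{lemma:notbang} by unfolding \ref{eq:supp-inf} on the left-hand context $\bang\psi_1,\dots,\bang\psi_n$. Both sides should reduce to statements about atomic multisets $K_i$ supporting the $\bang\psi_i$ in extensions, and then be matched.

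\textbf{Left to right.} Assume $\bang \psi_1,\dots,\bang\psi_n \Vdash_\bb^L \bang\phi^\bot$. Fix $\bc \supseteq \bb$ and $S$ structural in $\bb$ with $\Vdash_\bc^S \phi$.
\begin{enumerate}
\item To show $\bang\psi_1,\dots,\bang\psi_n \Vdash_\bc^{S,L}\bot$, take by \ref{eq:supp-inf} an arbitrary $\bd\supseteq\bc$ and $K_i$ with $\Vdash_\bd^{K_i}\bang\psi_i$.
\item By \ref{eq:supp-inf} applied to the hypothesis (note $\bd\supseteq\bb$), obtain $\Vdash_\bd^{L,K_1,\dots,K_n}\bang\phi^\bot$.
\item By monotonicity (Lemma~\ref{lemma:monotonicity}), $\Vdash_\bd^S\phi$.
\item Apply the forward direction of Lemma~\ref{lemma:notbang} at base $\bd$ with the structural $S$. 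This gives $\Vdash_\bd^{S,L,K_1,\dots,K_n}\bot$, which is what is required.
\end{enumerate}
Step 4 has a subtlety: Lemma~\ref{lemma:notbang} quantifies over $S$ structural in the base at which it is applied. Since $\bd\supseteq\bb$ contains all rules of $\bb$, anything structural in $\bb$ is structural in $\bd$, so the lemma applies.

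\textbf{Right to left.} Assume the right-hand condition. Fix $\bd\supseteq\bb$ and $K_i$ with $\Vdash_\bd^{K_i}\bang\psi_i$; the goal is $\Vdash_\bd^{L,K_1,\dots,K_n}\bang\phi^\bot$.
\begin{enumerate}
\item Use the backward direction of Lemma~\ref{lemma:notbang} at base $\bd$. It suffices to show that for all $\be\supseteq\bd$ and $S$ structural in $\bd$, $\Vdash_\be^S\phi$ implies $\Vdash_\be^{S,L,K_1,\dots,K_n}\bot$.
\item Given such $\be$ and $S$, and assuming $S$ is structural in $\bb$, the hypothesis (with $\bc=\be$) gives $\bang\psi_1,\dots,\bang\psi_n\Vdash_\be^{S,L}\bot$.
\item By monotonicity, $\Vdash_\be^{K_i}\bang\psi_i$. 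Then \ref{eq:supp-inf} yields $\Vdash_\be^{S,L,K_1,\dots,K_n}\bot$.
\end{enumerate}

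\textbf{Main obstacle.} The difficulty is the mismatch in step 2 of the right-to-left direction. Lemma~\ref{lemma:notbang} applied at $\bd$ ranges over $S$ structural in $\bd$, which may include atoms that are not structural in $\bb$. The hypothesis only covers $S$ structural in $\bb$.

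My first attempt is to read ``structural'' in the clauses \ref{eq:supp-bang}, \ref{eq:supp-quest} and Lemma~\ref{lemma:notbang} as relative to the base currently under consideration. The corollary's right-hand side is then naturally read with $S$ structural in $\bc$, and the two quantifications match exactly.

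If that reading is unavailable, I would instead re-derive the backward direction directly from the proof of Lemma~\ref{lemma:notbang}. That proof only instantiates $S$ inside the \ref{eq:supp-bang} clause at an extension, and the hypothesis can be passed through in the same way. Only the notion of structurality fixed by the original base is ever used there, so the argument goes through verbatim with the context $\bang\psi_1,\dots,\bang\psi_n$ carried along.
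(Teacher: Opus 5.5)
\textbf{Left to right is fine; right to left has a gap that your second fallback cannot close.} Unfolding \ref{eq:supp-inf} around Lemma~\ref{lemma:notbang} is exactly how the corollary is meant to follow; the paper gives no separate proof. The problem is the other direction. To use $\Vdash^K_{\mathcal X}\bang\phi$ at an extension $\mathcal X\supseteq\bb$, you must establish the antecedent of \ref{eq:supp-bang} for every $S$ structural in $\mathcal X$. A hypothesis that only covers $S$ structural in $\bb$ says nothing about atoms that become structural only in $\mathcal X$. So the claim that ``only the notion of structurality fixed by the original base is ever used'' is false. The paper's own proof of Lemma~\ref{lemma:notbang}~($\Leftarrow$) makes exactly this unjustified step (``for all structural $S$''), so appealing to it does not help. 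In fact no argument can close the gap: with the quantifier read literally, the backward direction is false.

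\textbf{Counterexample.} Take $n=0$, or $\psi_1=\top$ if $n\geq 1$ is required. Let the atoms be $r,\bar r,q,s,\bot$, and let $\bb$ have the rules $\frac{N;\bar r\vdash\bot}{N\vdash r}$ (for every $N$), $\frac{r\vdash r\quad \bar r\vdash\bar r}{r,\bar r\vdash\bot}$ and $\frac{\vdash r\quad q\vdash q}{q\vdash\bot}$.
\begin{itemize}
\item \emph{The right-hand side holds.} Nothing is structural in $\bb$, so for $\phi=r$ and $L=q$ the right-hand side concerns only $S=\varnothing$. If $\Vdash^\varnothing_\bc r$, then \ref{eq:supp-at} with $K=\bar r$ gives $\bar r\vdash_\bc\bot$. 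Hence $\vdash_\bc r$, hence $q\vdash_\bc\bot$, i.e.\ $\Vdash^q_\bc\bot$.
\item \emph{The left-hand side fails.} Let $\mathcal X\supseteq\bb$ add the Wk and Ctr rules for $s$ and the rule $\frac{s\vdash s}{s\vdash r}$. Then $\Vdash^s_{\mathcal X} r$, so $\Vdash^s_{\mathcal X}\bang r$ by Lemma~\ref{lemma:phitobangphi}.
\item \emph{But $s,q\nvdash_{\mathcal X}\bot$.} Interpret the atoms in the phase space on the meet-semilattice $\{1,a,b,z\}$ (with $a\wedge b=z$ and $\bot=\{z\}$) via $r,s\mapsto\{a,z\}$, $\bar r\mapsto\{b,z\}$, $q\mapsto\{1,a,b,z\}$. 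Here $\{a,z\}=\{b,z\}^\bot$, $\{a,z\}\otimes\{a,z\}=\{a,z\}$, and the unit fact is the whole monoid. So every rule of $\mathcal X$ preserves $\brl L\brr\subseteq\brl p\brr$; the rule with premise $\vdash r$ never fires. Yet $\brl s,q\brr=\{a,z\}\not\subseteq\{z\}$.
\end{itemize}
Hence $\Vdash^{s,q}_{\mathcal X}\bot$ fails, so $\bang r\Vdash^q_\bb\bot$, i.e.\ $\Vdash^q_\bb(\bang r)^\bot$, fails.

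\textbf{Your first reading is required, and under it your proof is complete.} Letting $S$ range over atoms structural in $\bc$, in both Lemma~\ref{lemma:notbang} and the corollary, is a necessary correction of the statement, not a convenience. Under that reading the lemma's ($\Leftarrow$) goes through, because an atom structural in $\mathcal X$ is structural in every $\bd\supseteq\mathcal X$. Its ($\Rightarrow$) still follows from Lemma~\ref{lemma:phitobangphi}. Both halves of your derivation are then complete; the forward half carries over verbatim, since $S$ structural in $\bc$ stays structural in $\bd\supseteq\bc$.
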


\begin{lemma} \label{lmm:raa}
    If $\phi^\bot \Vdash^L_\bb \bot$ then $\Vdash^L_\bb \phi$.
\end{lemma}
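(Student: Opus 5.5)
We prove Lemma~\ref{lmm:raa} by induction on the structure of $\phi$, following \cite[Lemma 4.1]{DBLP:journals/corr/abs-2504-08349}. The multiplicative-additive cases are exactly as there, so only the new cases $\phi = \bang\alpha$ and $\phi = \quest\alpha$ need to be handled. Throughout we use Monotonicity (Lemma~\ref{lemma:monotonicity}) and the fact that $\Gamma \Vdash^K_\bb \psi$ is equivalent to $\Gamma, K \Vdash^\varnothing_\bb \psi$. We also use that negation is treated via \ref{eq:supp-imply}: $\Vdash^L_\bb \phi^\bot$ iff $\phi \Vdash^L_\bb \bot$. Hence the hypothesis $\phi^\bot \Vdash^L_\bb \bot$ says that whenever $\phi \Vdash^K_\bc \bot$ for some $\bc \supseteq \bb$, then $\Vdash^{L,K}_\bc \bot$.

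\textbf{Case $\phi = \quest\alpha$.} Fix $\bc \supseteq \bb$ and $S$ structural in $\bb$ with $\alpha \Vdash^S_\bc \bot$, i.e.\ $\Vdash^S_\bc \alpha^\bot$. We must show $\Vdash^{S,L}_\bc \bot$. The plan is to show $\quest\alpha \Vdash^S_\bc \bot$ and then apply the hypothesis in $\bc$. So take $\bd \supseteq \bc$ and $K$ with $\Vdash^K_\bd \quest\alpha$. Apply \ref{eq:supp-quest} to this, with the structural multiset $S$ and the extension $\bd$; the premise $\alpha \Vdash^S_\bd \bot$ holds by monotonicity. This yields $\Vdash^{S,K}_\bd \bot$, and \ref{eq:supp-inf} gives $\quest\alpha \Vdash^S_\bc \bot$.

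One subtlety here is that $S$ must be structural in the base at which \ref{eq:supp-quest} is evaluated, namely $\bd$. Structurality is witnessed by rules being present, and rules persist under extension. So structural in $\bb$ implies structural in every extension, and this is fine.

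\textbf{Case $\phi = \bang\alpha$.} Fix $\bc \supseteq \bb$, $\bd \supseteq \bc$, a multiset $K$, and $S$ structural, and assume that $\Vdash^S_\bd \alpha$ implies $\Vdash^{S,K}_\bd \bot$. We must show $\Vdash^{L,K}_\bc \bot$. Reading the assumption over all $\bd \supseteq \bc$ and all structural $S$, it is exactly the right-hand side of Lemma~\ref{lemma:notbang} (relativised to $\bc$, with $K$ in place of $L$). Hence $\Vdash^K_\bc (\bang\alpha)^\bot$, i.e.\ $\bang\alpha \Vdash^K_\bc \bot$. The hypothesis $(\bang\alpha)^\bot \Vdash^L_\bb \bot$ then gives $\Vdash^{L,K}_\bc \bot$ via \ref{eq:supp-inf} and monotonicity, and we conclude $\Vdash^L_\bb \bang\alpha$.

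\textbf{Main obstacle.} The delicate step is the match between the quantifiers in \ref{eq:supp-bang} and in Lemma~\ref{lemma:notbang}. In \ref{eq:supp-bang} the implication is assumed for a fixed $\bd$ and $S$, whereas Lemma~\ref{lemma:notbang} needs it for all extensions. Note that Lemma~\ref{lemma:notbang} is stated for $\bang\phi^\bot$, read as $(\bang\phi)^\bot$.

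If the instantiation does not line up directly, the fallback is to avoid Lemma~\ref{lemma:notbang} and argue directly. Take $\mathcal{X} \supseteq \bc$ with $\Vdash^M_{\mathcal{X}} \bang\alpha$. Unfold \ref{eq:supp-bang} for this support, using the given $\bd$ and $S$; this requires choosing $\mathcal{X}$ compatibly, e.g.\ passing to a common extension, or noting that the clause lets us pick $\bd$ above $\mathcal{X}$. This yields $\Vdash^{M,K}_{\mathcal{X}} \bot$, hence $\bang\alpha \Vdash^K_\bc \bot$. The rest of the argument then proceeds as above.
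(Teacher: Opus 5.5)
Your $\quest$ case is essentially the paper's argument. Your main $\bang$ argument is too: the paper's proof inlines exactly the $(\Leftarrow)$ direction of Lemma~\ref{lemma:notbang}. It takes $\mathcal{X}\supseteq\bc$ with $\Vdash^M_{\mathcal{X}}\bang\alpha$, unfolds \ref{eq:supp-bang} at $\mathcal{X}$ to get $\Vdash^{K,M}_{\mathcal{X}}\bot$, concludes $\bang\alpha\Vdash^K_\bc\bot$ by \ref{eq:supp-inf}, and then applies the hypothesis. Citing that lemma is therefore a legitimate shortcut, and it is non-circular because Lemma~\ref{lemma:notbang} does not use the present lemma.

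What you should fix is the hedging about quantifier scope, which rests on a misreading. In \ref{eq:supp-bang} only $\bc$ and $K$ are fixed. The antecedent is the whole statement ``for all $\bd\supseteq\bc$ and all structural $S$, $\Vdash^S_\bd\alpha$ implies $\Vdash^{S,K}_\bd\bot$''. This is how the paper uses the clause, e.g.\ in Lemma~\ref{lemma:phitobangphi} and Lemma~\ref{lemma:notbang}.

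Read with $\bd$ and $S$ fixed outside the antecedent, the clause would be degenerate. Choose $\bd$ with an axiom concluding $\bot$ and take $S=K=\varnothing$. The antecedent is then true, so $\Vdash^L_\bb\bang\phi$ would force $\Vdash^L_\bb\bot$, and e.g.\ $\bang\top$ would not be valid.

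So your opening sentence should fix only $\bc$ and $K$, and then the ``main obstacle'' disappears. Every $\bd'\supseteq\mathcal{X}$ is already an extension of $\bc$, so no compatible choice of $\mathcal{X}$ or common extension is needed. Delete the fallback rather than keeping it in reserve. Under the fixed reading it would not work: an implication $\Vdash^S_\bd\alpha\Rightarrow\Vdash^{S,K}_\bd\bot$ holding at one base is not inherited by a common extension of $\bd$ and $\mathcal{X}$, since the antecedent may become true there while the consequent does not.

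There is one subtlety you did not flag, which the paper also passes over silently. Invoking Lemma~\ref{lemma:notbang} at $\bc$, or equivalently unfolding \ref{eq:supp-bang} at $\mathcal{X}$, asks for the implication for all $S$ structural in $\bc$ (resp.\ $\mathcal{X}$). Your assumption only covers $S$ structural in $\bb$. In your $\quest$ case, persistence of structurality under extension worked in your favour; here it points the wrong way. So this step tacitly reads ``structural'' relative to the base at which the antecedent is evaluated.
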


\begin{proof}
Inductive cases for MALL can be found in~\cite{DBLP:journals/corr/abs-2504-08349}.
    \begin{description}
        \item[$\phi = \bang \alpha:$] Assume $(\bang \alpha)^\bot \Vdash^L_\bb \bot$. Now, assume, for all $\bd \supseteq \bc \supseteq \bb$, all $S$ structural in $\bb$ and an arbitrary $K$ that $\Vdash^S_\bd \alpha$ implies $\Vdash_\bd^{S,K} \bot$. Further assume, for an arbitrary $\mathcal{X} \supseteq \bc$ and an arbitrary $M$ that $\Vdash^M_{\mathcal{X}} \bang \alpha$. Then $\Vdash^{K,M}_\mathcal{X} \bot$ by~\ref{eq:supp-bang}. Since $\mathcal{X} \supseteq \bc$  such that $\Vdash^M_{\mathcal{X}} \bang \alpha$, obtain $\bang \alpha \Vdash^K_\bc \bot$ by~\ref{eq:supp-inf}. This is equivalent to $\Vdash^K_\bc (\bang \alpha)^\bot$, and since $(\bang \alpha)^\bot \Vdash^L_\bb \bot$, obtain $\Vdash_\bc^{L,K} \bot$ by~\ref{eq:supp-inf}. Finally, since $\bd \supseteq \bc \supseteq \bb$ was such that $\Vdash^S_\bd \alpha$ implies $\Vdash_\bd^{S,K} \bot$, we have that $\Vdash^L_\bb \bang \alpha$ by~\ref{eq:supp-bang}.

        \item[$\phi = \quest \alpha:$] Assume $(\quest \alpha)^\bot \Vdash^L_\bb \bot$. Now, assume, for an arbitrary $\bc \supseteq \bb$ and an arbitrary $S$ structural in $\bb$ that $\alpha \Vdash^S_\bc \bot$. Further assume, for an arbitrary $\bd \supseteq \bc$ and $K$ that $\Vdash^K_\bd \quest \alpha$. Then $\Vdash^{S,K}_\bd \bot$ by~\ref{eq:supp-quest}. Since $\bd \supseteq \bc$ such that $\Vdash^K_\bd \quest \alpha$, obtain $\quest \alpha \Vdash^S_\bc \bot$ by~\ref{eq:supp-inf}. This is equivalent to $\Vdash^S_\bc (\quest \alpha)^\bot$, and since $(\quest \alpha)^\bot \Vdash^L_\bb \bot$, obtain $\Vdash_\bc^{L,S} \bot$ by~\ref{eq:supp-inf}. Finally, since $\bc \supseteq \bb$ such that $\alpha \Vdash^S_\bc \bot$, we have that $\Vdash^L_\bb \quest \alpha$ by~\ref{eq:supp-quest}.
    \end{description}
\end{proof}

Note that the following Lemmas~\ref{lemma:generictensor}--\ref{lemma:genericzero}, with inductive cases for MALL covered in~\cite{DBLP:journals/corr/abs-2504-08349}, hold for $\chi = \bang \alpha$ and $\chi = \quest \alpha$. In fact, we give a shorter proof for Lemma~\ref{lemma:generictensor} (by making use of Lemma~\ref{lmm:raa}) that can be easily adapted for the rest.

\begin{lemma} \label{lemma:generictensor}
    If $\Vdash^{L}_{\mathcal{B}} \phi \otimes \psi$ and $\phi, \psi \Vdash^{K}_{\mathcal{B}} \chi$ then $\Vdash^{L, K}_{\mathcal{B}} \chi$.
\end{lemma}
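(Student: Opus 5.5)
The plan is to reduce the arbitrary conclusion $\chi$ to $\bot$ by means of Lemma~\ref{lmm:raa}, and then use the $\bot$-shaped clause~\ref{eq:supp-tensor} directly. This avoids an induction on the structure of $\chi$, which the original MALL proof in~\cite{DBLP:journals/corr/abs-2504-08349} needs.

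First I check that the case $\chi = \bot$ is immediate. If $\phi,\psi \Vdash^K_\bb \bot$, then, taking $\bc = \bb$ in the clause~\ref{eq:supp-tensor} for $\Vdash^L_\bb \phi\otimes\psi$, we obtain $\Vdash^{L,K}_\bb \bot$ directly.

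For general $\chi$, I aim to show $\chi^\bot \Vdash^{L,K}_\bb \bot$, after which Lemma~\ref{lmm:raa} yields $\Vdash^{L,K}_\bb \chi$. By~\ref{eq:supp-inf}, fix an arbitrary $\bc \supseteq \bb$ and $M$ with $\Vdash^M_\bc \chi^\bot$; the goal becomes $\Vdash^{L,K,M}_\bc \bot$. By~\ref{eq:supp-imply}, $\Vdash^M_\bc\chi^\bot$ means $\chi \Vdash^M_\bc \bot$. By monotonicity (Lemma~\ref{lemma:monotonicity}) we also have $\phi,\psi\Vdash^K_\bc \chi$, and composing the two via~\ref{eq:supp-inf} gives $\phi,\psi \Vdash^{K,M}_\bc \bot$. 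Concretely, for any $\bd\supseteq\bc$ and $K_1,K_2$ with $\Vdash^{K_1}_\bd\phi$ and $\Vdash^{K_2}_\bd\psi$, we get $\Vdash^{K,K_1,K_2}_\bd\chi$, and then $\Vdash^{K,M,K_1,K_2}_\bd\bot$ using monotonicity of $\chi\Vdash^M_\bc\bot$ up to $\bd$. Finally, monotonicity also gives $\Vdash^L_\bc\phi\otimes\psi$, so the $\bot$-case (clause~\ref{eq:supp-tensor} with $\bc$ as its own extension) delivers $\Vdash^{L,K,M}_\bc\bot$, as required.

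The main thing to watch is bookkeeping rather than a genuine obstacle. First, the relative multisets must be combined correctly under~\ref{eq:supp-inf}; this uses the cited equivalence $\Gamma\Vdash^K_\bb\phi \iff \Gamma,K\Vdash^\varnothing_\bb\phi$ implicitly when merging indices. Second, Lemma~\ref{lmm:raa} must be available for all $\chi$, including $\bang\alpha$ and $\quest\alpha$, which is exactly what the preceding lemma provides. The same template, replacing clause~\ref{eq:supp-tensor} by the relevant $\bot$-clause for $\parr$, $\oplus$, $1$ or $0$, adapts to the sibling lemmas.
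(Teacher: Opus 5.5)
Your proposal is correct and follows essentially the same route as the paper. Both reduce the goal to $\chi^\bot \Vdash^{L,K}_{\mathcal{B}} \bot$ via Lemma~\ref{lmm:raa}, compose $\phi,\psi \Vdash^K_{\mathcal{B}} \chi$ with $\chi \Vdash^M_{\mathcal{C}} \bot$ through~\ref{eq:supp-inf} to get $\phi,\psi \Vdash^{K,M}_{\mathcal{C}} \bot$, and discharge this with clause~\ref{eq:supp-tensor}. Your extra appeals to monotonicity and to the index-merging equivalence are harmless but unnecessary, since~\ref{eq:supp-inf} and~\ref{eq:supp-tensor} already quantify over extensions.
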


\begin{proof}
    Assume, for an arbitrary $\bc \supseteq \bb$ and an arbitrary $M$ that $\Vdash_\bc^M \chi^\bot$. This is equivalent to $\chi \Vdash_\bc^M \bot$. Further assume, for an arbitrary $\bd \supseteq \bc$ and arbitrary $N,P$, that $\Vdash^N_\bd \phi$ and  $\Vdash^P_\bd \psi$. Since $\phi, \psi \Vdash^{K}_{\bb} \chi$, obtain $\Vdash^{K,N,P}_{\bd} \chi$ by~\ref{eq:supp-inf}. Now, since $\chi \Vdash_\bc^M \bot$, obtain $\Vdash^{K,N,P,M}_{\bd} \bot$ by~\ref{eq:supp-inf} again. Since $\bd \supseteq \bc$ such that $\Vdash^N_\bd \phi$ and  $\Vdash^P_\bd \psi$, we have that $\phi,\psi \Vdash_\bc^{K,M} \bot$ by~\ref{eq:supp-inf}. Now, since $\Vdash^{L}_{\mathcal{B}} \phi \otimes \psi$, obtain $\Vdash_\bc^{L,K,M} \bot$ by~\ref{eq:supp-tensor}. Finally, since $\bc \supseteq \bb$ such that $\Vdash_\bc^M \chi^\bot$, obtain $\chi^\bot \Vdash_\bb^{L,K} \bot$ by~\ref{eq:supp-inf}. Hence, $\Vdash_\bb^{L,K} \chi$ by Lemma~\ref{lmm:raa}.
\end{proof}

\begin{lemma} \label{lemma:genericimplication}
    If $\Vdash^{L}_{\mathcal{B}} \phi^\psi$ and $\Vdash^{K}_{\mathcal{B}} \phi$ and $\psi \Vdash^{M}_{\mathcal{B}} \chi$ then $\Vdash^{L, K, M}_{\mathcal{B}} \chi$.
\end{lemma}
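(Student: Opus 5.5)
We read $\phi^\psi$ as the linear implication $\phi \multimap \psi$. Since implication is not primitive in this paper, we take it to be $\phi^\bot \parr \psi$. Making this reading explicit is the first step. We then follow the strategy of the short proof of Lemma~\ref{lemma:generictensor}: first establish $\chi^\bot \Vdash^{L,K,M}_\bb \bot$, and then conclude $\Vdash^{L,K,M}_\bb \chi$ by Lemma~\ref{lmm:raa}.

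To obtain $\chi^\bot \Vdash^{L,K,M}_\bb \bot$ via~\ref{eq:supp-inf}, fix an arbitrary $\bc \supseteq \bb$ and $N$ with $\Vdash^N_\bc \chi^\bot$. By~\ref{eq:supp-imply} this is $\chi \Vdash^N_\bc \bot$. By Lemma~\ref{lemma:monotonicity} all three hypotheses transfer to $\bc$, so we have $\Vdash^L_\bc \phi^\bot \parr \psi$. We then apply~\ref{eq:supp-parr} in $\bc$ itself, which is its own extension, choosing the two multisets to be $K$ and $M,N$. This requires two auxiliary facts.

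First, $\phi^\bot \Vdash^K_\bc \bot$. For any $\bd \supseteq \bc$ and $P$ with $\Vdash^P_\bd \phi^\bot$, we have $\phi \Vdash^P_\bd \bot$ by~\ref{eq:supp-imply}. Together with $\Vdash^K_\bd \phi$ (by monotonicity), \ref{eq:supp-inf} gives $\Vdash^{K,P}_\bd \bot$.

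Second, $\psi \Vdash^{M,N}_\bc \bot$. For any $\bd \supseteq \bc$ and $P$ with $\Vdash^P_\bd \psi$, the hypothesis $\psi \Vdash^M_\bb \chi$ and~\ref{eq:supp-inf} give $\Vdash^{M,P}_\bd \chi$. Then $\chi \Vdash^N_\bc \bot$ and~\ref{eq:supp-inf} again give $\Vdash^{M,N,P}_\bd \bot$.

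With both facts in hand,~\ref{eq:supp-parr} yields $\Vdash^{L,K,M,N}_\bc \bot$. Since $\bc$ and $N$ were arbitrary, \ref{eq:supp-inf} gives $\chi^\bot \Vdash^{L,K,M}_\bb \bot$, and Lemma~\ref{lmm:raa} finishes the proof.

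There is no deep obstacle here; the work is mostly bookkeeping. The step needing most care is the correct distribution of the context multisets in the $\parr$ clause, where $N$ must be attached to the $\psi$-branch. We also rely on treating $\Vdash^N \chi^\bot$ and $\chi \Vdash^N \bot$ as interchangeable through~\ref{eq:supp-imply}. Finally, the argument uses Lemma~\ref{lmm:raa} for an arbitrary $\chi$, including $\chi = \bang\alpha, \quest\alpha$, and this is exactly the case covered in the appendix. Should $\phi^\psi$ instead be intended with the components in the other order, the same argument applies after swapping the roles of the two $\parr$-branches.
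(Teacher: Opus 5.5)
Your proof is correct and is exactly the adaptation of the RAA-based proof of Lemma~\ref{lemma:generictensor} that the paper indicates for this lemma: assume $\Vdash^N_\bc \chi^\bot$, discharge the $\parr$-clause for $\phi^\bot \parr \psi$, obtain $\chi^\bot \Vdash^{L,K,M}_\bb \bot$, and finish with Lemma~\ref{lmm:raa}. Drop your closing aside, though: read as $\psi \multimap \phi$ the statement is false (take $\phi = \top$ and $\psi = \chi$ an atom), so the reading $\phi \multimap \psi$ is forced. Writing it as $\psi \parr \phi^\bot$ instead changes nothing.
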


\begin{lemma} \label{lemma:genericone}
    If $\Vdash^{L}_{\mathcal{B}} 1$ and $\Vdash^{K}_{\mathcal{B}} \chi$ then $\Vdash^{L, K}_{\mathcal{B}} \chi$.
\end{lemma}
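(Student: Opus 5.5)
We follow the pattern the paper uses for Lemma~\ref{lemma:generictensor}: rather than unfolding the $(1)$ clause for an arbitrary $\chi$, we reduce the goal $\Vdash^{L,K}_\bb \chi$ to showing $\chi^\bot \Vdash^{L,K}_\bb \bot$. The conclusion then follows from Lemma~\ref{lmm:raa}, which holds for every formula, including $\bang\alpha$ and $\quest\alpha$. This way the exponential cases need no separate treatment.

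To establish $\chi^\bot \Vdash^{L,K}_\bb \bot$ via~\ref{eq:supp-inf}, fix an arbitrary $\bc \supseteq \bb$ and an arbitrary $M$ with $\Vdash^M_\bc \chi^\bot$. By~\eqref{eq:supp-imply}, this is $\chi \Vdash^M_\bc \bot$. We then aim to show $\Vdash^{L,K,M}_\bc \bot$ by applying the $(1)$ clause to the hypothesis $\Vdash^L_\bb 1$, moved to $\bc$ by Lemma~\ref{lemma:monotonicity}. For this, the clause requires the premise $\Vdash^{K,M}_\bc \bot$.

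That premise follows from the other hypothesis. Lemma~\ref{lemma:monotonicity} gives $\Vdash^K_\bc \chi$, and combining it with $\chi \Vdash^M_\bc \bot$ by~\ref{eq:supp-inf} (taking $\bc$ as its own extension) yields $\Vdash^{K,M}_\bc \bot$. Now instantiate the $(1)$ clause for $\Vdash^L_\bc 1$ at the extension $\bc$ and the multiset $K,M$. This gives $\Vdash^{L,K,M}_\bc \bot$. Since $\bc$ and $M$ were arbitrary, \ref{eq:supp-inf} yields $\chi^\bot \Vdash^{L,K}_\bb \bot$, and Lemma~\ref{lmm:raa} gives $\Vdash^{L,K}_\bb \chi$.

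The only delicate point is bookkeeping. We must check that the $(1)$ clause quantifies over all extensions of $\bb$, so that using it at $\bc$ (rather than at $\bb$) is legitimate, and that the multisets combine by multiset union in the right order. Otherwise the argument is a direct adaptation of the proof of Lemma~\ref{lemma:generictensor}. The real work is hidden in Lemma~\ref{lmm:raa} for the exponential cases, which the paper has already established.
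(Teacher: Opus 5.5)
Your proposal is correct and takes the same approach as the paper. The paper gives no separate proof of this lemma; it only says that its proof of Lemma~\ref{lemma:generictensor} via Lemma~\ref{lmm:raa} adapts to the other generic lemmas. Your argument is exactly that adaptation: you reduce the goal to $\chi^\bot \Vdash^{L,K}_\bb \bot$, obtain $\Vdash^{K,M}_\bc \bot$ from the hypothesis on $\chi$, and then use the $(1)$ clause at $\bc$ with the multiset $K,M$.
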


\begin{lemma} \label{lemma:genericplus}
    If $\Vdash^{L}_{\mathcal{B}} \phi \oplus \psi$ and $\phi \Vdash^{K}_{\mathcal{B}} \chi$ and $\psi \Vdash^{K}_{\mathcal{B}} \chi$ then $\Vdash^{L, K}_{\mathcal{B}} \chi$.
\end{lemma}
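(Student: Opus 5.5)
We follow the template the paper sets out for Lemma~\ref{lemma:generictensor}: rather than unfolding the definition of $\chi$ case by case, we reduce the goal $\Vdash^{L,K}_\bb \chi$ to $\chi^\bot \Vdash^{L,K}_\bb \bot$ and then close with Lemma~\ref{lmm:raa}. That lemma now covers $\chi = \bang\alpha$ and $\chi = \quest\alpha$ as well, so the argument is uniform in $\chi$ and needs no induction on its shape.

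By~\ref{eq:supp-inf}, it suffices to fix an arbitrary $\bc \supseteq \bb$ and $M$ with $\Vdash^M_\bc \chi^\bot$, and to show $\Vdash^{L,K,M}_\bc \bot$. By~\ref{eq:supp-imply}, the assumption $\Vdash^M_\bc \chi^\bot$ is the same as $\chi \Vdash^M_\bc \bot$. We intend to use the hypothesis $\Vdash^L_\bb \phi\oplus\psi$, by monotonicity (Lemma~\ref{lemma:monotonicity}) at $\bc$, with the \emph{single} multiset $K,M$ in the~\ref{eq:supp-plus} clause. For this we must establish both $\phi \Vdash^{K,M}_\bc \bot$ and $\psi \Vdash^{K,M}_\bc \bot$.

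For the $\phi$ side, take any $\bd \supseteq \bc$ and $N$ with $\Vdash^N_\bd \phi$. From $\phi \Vdash^K_\bb \chi$, monotonicity and~\ref{eq:supp-inf} give $\Vdash^{K,N}_\bd \chi$. Then $\chi \Vdash^M_\bc \bot$, again with monotonicity and~\ref{eq:supp-inf}, yields $\Vdash^{K,N,M}_\bd \bot$. Since $\bd$ and $N$ were arbitrary, \ref{eq:supp-inf} gives $\phi \Vdash^{K,M}_\bc \bot$. The $\psi$ side is word-for-word the same, using $\psi \Vdash^K_\bb \chi$. Now~\ref{eq:supp-plus} applied at $\bc$ (an extension of itself) with multiset $K,M$ gives $\Vdash^{L,K,M}_\bc \bot$, as required.

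Since $\bc$ and $M$ were arbitrary, \ref{eq:supp-inf} yields $\chi^\bot \Vdash^{L,K}_\bb \bot$, and Lemma~\ref{lmm:raa} concludes $\Vdash^{L,K}_\bb \chi$. The main point of care is that the~\ref{eq:supp-plus} clause requires the \emph{same} context multiset for both disjuncts. This works only because both premises $\phi\Vdash^K_\bb\chi$ and $\psi\Vdash^K_\bb\chi$ share $K$, and the refutation of $\chi$ contributes the same $M$ on both sides. We must also keep the extensions ordered correctly ($\bd \supseteq \bc \supseteq \bb$) so that monotonicity can be invoked at each step.
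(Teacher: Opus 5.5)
Your proof is correct and is exactly the adaptation of the paper's proof of Lemma~\ref{lemma:generictensor} that the paper has in mind. You refute $\chi$ via an arbitrary $\Vdash^M_\bc \chi^\bot$, build $\phi \Vdash^{K,M}_\bc \bot$ and $\psi \Vdash^{K,M}_\bc \bot$, apply the $(\oplus)$ clause with the shared multiset $K,M$, and conclude with Lemma~\ref{lmm:raa}; the appeals to monotonicity are harmless but unnecessary, since the clauses already quantify over all extensions and extension is transitive.
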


\begin{lemma} \label{lemma:genericzero}
    If $\Vdash^{L}_{\mathcal{B}} 0$ then $\Vdash^{L, K}_{\mathcal{B}} \chi$.
\end{lemma}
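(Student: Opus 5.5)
The plan is to reuse the template of the proof of Lemma~\ref{lemma:generictensor}: aim for $\chi^\bot \Vdash^{L,K}_\bb \bot$ and then conclude $\Vdash^{L,K}_\bb \chi$ by Lemma~\ref{lmm:raa}. Working through negation lets us replace the (unknown) shape of $\chi$ by $\bot$, where the \ref{eq:supp-0} clause applies directly. No induction on $\chi$ is needed, since Lemma~\ref{lmm:raa} already covers every formula, including the new cases $\bang\alpha$ and $\quest\alpha$.

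\textbf{Steps.}
\begin{enumerate}
\item Using \ref{eq:supp-inf}, fix an arbitrary $\bc \supseteq \bb$ and an arbitrary $M$ with $\Vdash^M_\bc \chi^\bot$. By \ref{eq:supp-imply}, this is the same as $\chi \Vdash^M_\bc \bot$. Since the goal $\chi^\bot \Vdash^{L,K}_\bb \bot$ unfolds to $\Vdash^{L,K,M}_\bc \bot$, this is what we must show.
\item From $\Vdash^L_\bb 0$ obtain $\Vdash^L_\bc 0$ by Monotonicity (Lemma~\ref{lemma:monotonicity}).
\item The \ref{eq:supp-0} clause quantifies over all multisets, so instantiate it with the multiset $K,M$. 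This gives $\Vdash^{L,K,M}_\bc \bot$ directly.
\end{enumerate}
Since $\bc$ and $M$ were arbitrary, \ref{eq:supp-inf} yields $\chi^\bot \Vdash^{L,K}_\bb \bot$, and Lemma~\ref{lmm:raa} gives $\Vdash^{L,K}_\bb \chi$.

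Notice that the hypothesis $\chi \Vdash^M_\bc \bot$ is never actually used; it is only the variable over which we generalise.

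\textbf{Main obstacle.} The delicate point is Step~2: the clause \ref{eq:supp-0} is stated for a fixed base $\bb$, not over all extensions. So we must be sure that monotonicity applies to $0$ before instantiating at $\bc$.
\begin{itemize}
\item This is covered by Lemma~\ref{lemma:monotonicity}.
\item Alternatively, it can be checked directly: $\Vdash^{L,K'}_\bb \bot$ for all $K'$ transfers to $\bc$ by monotonicity of support for $\bot$, which is atomic.
\end{itemize}
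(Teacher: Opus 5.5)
Your proof is correct and takes essentially the paper's approach. The paper proves Lemma~\ref{lemma:genericzero} by adapting its short proof of Lemma~\ref{lemma:generictensor}, which is what you do: you reduce to $\chi^\bot \Vdash^{L,K}_\bb \bot$ via \ref{eq:supp-inf}, get $\Vdash^{L,K,M}_\bc \bot$ directly from the \ref{eq:supp-0} clause after monotonicity, and conclude with Lemma~\ref{lmm:raa}.
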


\begin{theorem}[Soundness] \label{thm:soundness}
    If $\Gamma \vdash_{LL} \phi$ then $\Gamma \Vdash \phi$.
\end{theorem}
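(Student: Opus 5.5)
We will proceed by induction on the height of a derivation of $\Gamma \vdash_{LL} \phi$, in a one-sided (or two-sided, with $\bot$-translation) sequent calculus for classical linear logic. We prove the stronger local statement: for every base $\bb$, if $\Gamma \vdash_{LL} \phi$ then $\Gamma \Vdash_\bb^\varnothing \phi$. Following~\cite{DBLP:journals/corr/abs-2504-08349}, each rule is shown to be sound as a closure property of support, stated for arbitrary atomic index multisets. These are the primed rules $(\otimes I)'$, $(\parr I)'$, etc., already established there for the multiplicative-additive fragment. The axiom, cut, and all MALL rules are thus taken from~\cite[Theorem 4.2]{DBLP:journals/corr/abs-2504-08349}. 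This uses Lemma~\ref{lemma:monotonicity}, Lemma~\ref{lmm:raa}, and the generic elimination Lemmas~\ref{lemma:generictensor}--\ref{lemma:genericzero}, which hold for all $\chi$ including exponential formulas. It remains to verify the exponential rules: dereliction, weakening, contraction, and promotion. We state each as an $(\cdot)'$ closure property.

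\textbf{Dereliction.} We show $\phi \Vdash_\bb^\varnothing \quest\phi$, i.e.\ $\Vdash^L_\bc \phi$ implies $\Vdash^L_\bc \quest\phi$. Given $\bd\supseteq\bc$ and structural $S$ with $\phi \Vdash^S_\bd \bot$, we obtain $\Vdash^{S,L}_\bd \bot$ by \ref{eq:supp-inf} and monotonicity. Dually, $\bang\phi \Vdash \phi$ follows by instantiating \ref{eq:supp-bang} with $S=\varnothing$, which is trivially structural. We take $\bd=\bc$, combine with Lemma~\ref{lmm:raa}, and use $\phi^\bot$ as the continuation.

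\textbf{Weakening and contraction.} In two-sided form these read: from $\Gamma \Vdash \bot$ infer $\Gamma,\bang\psi \Vdash \bot$, and from $\Gamma,\bang\psi,\bang\psi \Vdash \bot$ infer $\Gamma,\bang\psi\Vdash\bot$. Suppose $\Vdash^M_\bc \bang\psi$. To use \ref{eq:supp-bang}, we must show that $\Vdash^T_\bd \psi$ with $T$ structural implies $\Vdash^{T,K}_\bd\bot$, where $K$ collects the context's indices. Here the structural atoms do the work. Given $\Vdash^T_\bd\psi$, we obtain $\Vdash^T_\bd \bang\psi$ by Lemma~\ref{lemma:phitobangphi}. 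For contraction, we feed $\bang\psi$ twice, getting $\Vdash^{T,T,K}_\bd\bot$. We must then contract $T,T$ to $T$ using the Ctr rules on each $t\in T$ (axiom $t\vdash t$ plus discharge), and similarly Wk for weakening. We first prove an auxiliary lemma: if $S$ is structural in $\bb$ and $\Vdash^{S,S,K}_\bc\bot$ then $\Vdash^{S,K}_\bc\bot$, and likewise $\Vdash^{K}_\bc\bot$ implies $\Vdash^{S,K}_\bc\bot$. This requires pushing Wk/Ctr from atomic derivability to support of $\bot$. Support of $\bot$ is atomic (clause \ref{eq:supp-at}), so it reduces to derivability after cutting with $K'$ such that $\bot,K'\vdash\bot$. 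The trick is that Ctr/Wk are rules of $\bb$, hence of every extension. We then apply \ref{eq:supp-bang} to conclude.

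\textbf{Promotion.} From $\bang\Gamma \Vdash \phi$ infer $\bang\Gamma\Vdash\bang\phi$, or one-sidedly with $\quest$-contexts; we use Corollary~\ref{cor:notbang} to handle the $\quest$ form. Assume $\Vdash^{K_i}_\bc \bang\psi_i$. We must show $\Vdash^{K_1,\dots,K_n}_\bc \bang\phi$. We unfold \ref{eq:supp-bang} for each $\bang\psi_i$ in turn, nesting extensions. This reduces the goal to the case where each $\bang\psi_i$ is replaced by a structural $S_i$ with $\Vdash^{S_i}_\bd \psi_i$, hence $\Vdash^{S_i}_\bd\bang\psi_i$ by Lemma~\ref{lemma:phitobangphi}. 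Then the premise gives $\Vdash^{S_1,\dots,S_n}_\bd\phi$, with $S_1,\dots,S_n$ structural, and Lemma~\ref{lemma:phitobangphi} yields $\bang\phi$.

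I expect the main obstacle to be this nested unfolding. The continuation in \ref{eq:supp-bang} has a two-level quantification ($\bc$ then $\bd$) and requires structurality \emph{in the original base}. So we must check that the structural atoms produced at inner stages remain structural relative to the base at which the outer clause is evaluated. Since extensions only add rules, structurality is upward-persistent, but the direction needed here is delicate. I will handle this by always choosing $S$ structural in the outermost base and using monotonicity. The contraction auxiliary lemma is the other technical point.
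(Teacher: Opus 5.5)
\textbf{Gap: the $\quest$ side is never treated.} Your $\bang$-side cases are sound and essentially the paper's ($\bang$E)', ($\quest$I)', (Wk)', (Ctr)' and ($\bang$I)'. Three of your choices differ slightly and are fine. Your auxiliary lemma makes explicit the step the paper calls ``structurality of $S$''. Your contraction applies \ref{eq:supp-bang} directly rather than going through Lemma~\ref{lemma:notbang}. Keeping all witnesses structural in the outermost base is the right way to handle the ``structural in which base'' issue.

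The problem is that nowhere do you use a hypothesis $\Vdash^L_\bb \quest\phi$, i.e.\ the elimination direction of \ref{eq:supp-quest}, and a soundness proof cannot avoid it. In the paper's system it is the separate rule ($\quest$E): from $\Gamma\Vdash\quest\phi$, $\Delta_i\Vdash\bang\psi_i$ and $\phi,\bang\psi_1,\dots,\bang\psi_n\Vdash\chi$ with $\chi\in\{\bot,\quest\delta\}$, infer $\Gamma,\Delta_1,\dots,\Delta_n\Vdash\chi$. In a two-sided calculus the same content is $\quest$-left together with right weakening/contraction on $\quest$. In your one-sided calculus it is hidden in promotion with a $\quest$-context and in $\quest$-weakening/contraction.

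Your remark that Corollary~\ref{cor:notbang} ``handles the $\quest$ form'' does not work. Negation is a primitive connective here, with its own clause. So after your $\bot$-translation a $\quest$-context becomes hypotheses $(\quest\gamma)^\bot$. These are not the formulas $\bang(\gamma^\bot)$ that Corollary~\ref{cor:notbang}, Lemma~\ref{lemma:phitobangphi} and your Wk/Ctr arguments are about.

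The missing idea is that \ref{eq:supp-quest} lets $(\quest\gamma)^\bot$ be discharged by structural witnesses, in both directions:
\begin{itemize}
\item To use $\Vdash^M_\bc(\quest\gamma)^\bot$ against a continuation $N$, prove $\Vdash^N_\bc\quest\gamma$ via \ref{eq:supp-quest}. This reduces to structural $S$ with $\gamma\Vdash^S_\bd\bot$.
\item Conversely, such a structural $S$ supports $(\quest\gamma)^\bot$, by the elimination direction of \ref{eq:supp-quest}.
\end{itemize}
That elimination direction has no free continuation: the index must be structural. This is exactly why the paper restricts $\chi$ to $\bot$ or $\quest\delta$. For $\chi=\quest\delta$ it first opens \ref{eq:supp-quest} for the conclusion, with a structural $T$ such that $\delta\Vdash^T\bot$. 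That way the index $T,S_1,\dots,S_n$ at which $\Vdash^L\quest\phi$ is applied is structural. You need either this case explicitly, or a proof that $(\quest\gamma)^\bot\dashv\vdash_\bb\bang(\gamma^\bot)$ in every base, after which you can reduce to your $\bang$ arguments.

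Two smaller points.
\begin{itemize}
\item The antecedent of \ref{eq:supp-bang} is universally quantified over $\bd$ and structural $S$; this is how Lemma~\ref{lemma:phitobangphi} reads it. So you cannot ``instantiate $S=\varnothing$, $\bd=\bc$''. What works is to pick $K$ with $\Vdash^K_\bc\phi^\bot$, verify the antecedent for all $\bd,S$ by \ref{eq:supp-inf}, and finish with Lemma~\ref{lmm:raa}.
\item A sequent-calculus induction needs De Morgan intersupports for primitive negation. It also needs the left rules translated via Lemmas~\ref{lemma:generictensor}--\ref{lemma:genericzero}. The primed rules of \cite[Theorem 4.2]{DBLP:journals/corr/abs-2504-08349} are natural-deduction rules, so they cannot simply be ``taken from'' there.
\end{itemize}
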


\begin{proof}
    Given that $\vdash$ is defined inductively, it suffices to prove the following:
    \begin{description}
        \item[($\bang$I)] If $\Gamma_1 \Vdash \bang\psi_1, \dots, \Gamma_n \Vdash \bang\psi_n$ and $\bang \psi_1,\dots,\bang \psi_n \Vdash \phi$ then $\Gamma_1,\dots,\Gamma_n \Vdash \bang\phi$.
        \item[($\bang$E)] If $\Gamma \Vdash \bang \phi$ then $\Gamma \Vdash \phi$.
        \item[(Wk)'] If $\Gamma \Vdash \bang \phi$ and $\Delta \Vdash \psi$ then $\Gamma,\Delta \Vdash \psi$.
        \item[(Ctr)'] If $\Gamma \Vdash \bang \phi$ and $\Delta, \bang \phi, \bang \phi \Vdash \psi$ then $\Gamma,\Delta \Vdash \psi$.
        \item[($\quest$I)] If $\Gamma \Vdash \phi$ then $\Gamma \Vdash \quest \phi$.
        \item[($\quest$E)] If $\Gamma \Vdash \quest \phi$ and $\Delta_1 \Vdash \bang\psi_1, \dots, \Delta_n \Vdash \bang\psi_n$ and $\phi,\bang \psi_1,\dots,\bang \psi_n \Vdash \chi$ then $\Gamma, \Delta_1,\dots,\Delta_n \Vdash \chi$, provided that $\chi$ is of the form $\quest \delta$ or $\bot$.
    \end{description}

    Inductive cases for MALL, i.e.~statements corresponding to the natural deduction rules of MALL $(\otimes I)'$--$(0E)'$, are proven with the help of Lemmas~\ref{lemma:generictensor}--\ref{lemma:genericzero} and can be found in~\cite[Theorem 4.2]{DBLP:journals/corr/abs-2504-08349}.

    We shall abuse notation and write $\Vdash_{\bb}^L \Gamma$ for $\Vdash_{\bb}^{L_1} \gamma_1,\dots,\Vdash_{\bb}^{L_n} \gamma_n$, where $\Gamma = \{\gamma_1,\dots,\gamma_n \}$ and $L = L_1 \cup\dots \cup L_n$. Now, assume, for an arbitrary $\bb$ and an arbitrary $L$, that $\Vdash_{\bb}^L \Gamma$.

    \begin{description}
        \item[($\bang$I)'.] Assume, for all $\bd \supseteq \bc \supseteq \bb$, all $S$ structural in $\bb$ and an arbitrary $M$, that $\Vdash_\bd^S \phi$ implies $\Vdash_\bd^{S,M} \bot$. Further assume, for an arbitrary $\mathcal{X} \supseteq \bb$ and structural $S_1,\dots,S_n$, that $\Vdash_\mathcal{X}^{S_1} \bang \psi_1,\dots,\Vdash_\mathcal{X}^{S_n} \bang \psi_n$. Since also $\bang \psi_1,\dots,\bang \psi_n \Vdash_\bb^\varnothing \phi$, obtain $\Vdash_\mathcal{X}^{S_1,\dots,S_n} \phi$ by~\ref{eq:supp-inf}. Since $\mathcal{X}$ is its own extension and $\Vdash_\mathcal{X}^{S_1,\dots,S_n} \phi$, obtain $\Vdash_{\mathcal{X}}^{S_1,\dots,S_n,M} \bot$ by our original assumption (take $\bc = \mathcal{X}$). Hence obtain $\bang \psi_1,\dots,\bang \psi_n \Vdash_\bb^M \bot$ by Corollary~\ref{cor:notbang}. Since also $\Vdash_\bb^{L_1} \bang \psi_1,\dots,\Vdash_\bb^{L_n} \bang \psi_n$, obtain $\Vdash_\bb^{M, L_1,\dots,L_n} \bot$ by~\ref{eq:supp-inf}. Since also $\bd \supseteq \bc \supseteq \bb$ was such that $\Vdash_\bd^S \phi$ implies $\Vdash_\bd^{S,M} \bot$, obtain $\Vdash_\bb^{L_1,\dots,L_n} \bang \phi$ by~\ref{eq:supp-bang}, as required.
        
        \item[($\bang$E)'.] Assume, for an arbitrary $\bc \supseteq \bb$ and $K$, that $\Vdash_\bc^K \phi^\bot$. Then, for all $\bd \supseteq \bc$ and all structural $S$, $\Vdash_\bd^S \phi$ implies $\Vdash_\bd^{S,K} \bot$ by~\ref{eq:supp-inf}. Since also $\Vdash_\bb^L \bang \phi$, obtain $\Vdash_\bc^{L,K} \bot$ by~\ref{eq:supp-bang}. Since $\bc \supseteq \bb$ such that $\Vdash_\bc^K \phi^\bot$, obtain $\phi^\bot \Vdash_\bb^L \bot$ by~\ref{eq:supp-inf}. Then, by Lemma~\ref{lmm:raa}, $\Vdash_\bb^L \phi$ as expected.
        
        \item[(Weakening)'.] Assume, for an arbitrary $\bc \supseteq \bb$ and $K$, that $\Vdash_\bc^K \psi^\bot$ (which is equivalent to $\psi \Vdash_\bc^K \bot$). Since also $\Vdash_\bb^M \psi$, obtain $\Vdash_\bc^{M,K} \bot$ by~\ref{eq:supp-inf}. Further assume, for an arbitrary $\bd \supseteq \bc$ and structural $S$, that $\Vdash_\bd^S \phi$. By monotonicity, also $\Vdash_\bd^{M,K} \bot$ and, by structurality of S, $\Vdash_\bd^{S,M,K} \bot$. Since $\Vdash_\bd^S \phi$ implied $\Vdash_\bd^{S,M,K} \bot$, and also $\Vdash_\bb^L \bang \phi$, obtain $\Vdash_\bc^{L,M,K} \bot$ by~\ref{eq:supp-bang}. Since $\bc \supseteq \bb$ such that $\Vdash_\bc^K \psi^\bot$, obtain $\psi^\bot \Vdash_\bb^{L,M} \bot$ by~\ref{eq:supp-inf}. Then, by Lemma~\ref{lmm:raa}, $\Vdash_\bb^{L,M} \psi$ as expected.
        
        \item[(Contraction)'.] Assume, for an arbitrary $\bc \supseteq \bb$ and $K$, that $\Vdash_\bc^K \psi^\bot$. Since also $\bang \phi, \bang \phi \Vdash_\bb^M \psi$, obtain $\bang \phi, \bang \phi \Vdash_\bc^{M,K} \bot$ by~\ref{eq:supp-inf}. Since $\Vdash_\bb^L \bang \phi$, obtain $\bang \phi \Vdash_\bc^{L,M,K} \bot$ by~\ref{eq:supp-inf}. Assume, for an arbitrary $\bd \supseteq \bc$ and structural $S$, that $\Vdash_\bd^S \phi$. Then, by Lemma~\ref{lemma:notbang}, $\Vdash_\bd^{S,L,M,K} \bot$. Then, by~\ref{eq:supp-inf}, $\bang \phi \Vdash_\bd^{S,M,K} \bot$. By Lemma~\ref{lemma:notbang} again, $\Vdash_\bd^{S,S,M,K} \bot$; hence $\Vdash_\bd^{S,M,K} \bot$ by structurality of $S$. Since $\bd \supseteq \bc$ such that $\Vdash_\bd^S \phi$, obtain $\bang \phi \Vdash_\bc^{M,K} \bot$ by Lemma~\ref{lemma:notbang}. Since also $\Vdash_\bb^L \bang \phi$, obtain $\Vdash_\bc^{L,M,K} \bot$ by~\ref{eq:supp-inf}. Since $\bc \supseteq \bb$ such that $\Vdash_\bc^K \psi^\bot$, obtain $\psi^\bot \Vdash_\bb^{L,M} \bot$ by~\ref{eq:supp-inf}. Then, by RAA, $\Vdash_\bb^{L,M} \psi$ as expected.
        
        \item[($\quest$I)'.] Assume, for an arbitrary $\bc \supseteq \bb$ and a structural $S$, that $\phi \Vdash_\bc^S \bot$. Since also $\Vdash_\bb^L \phi$, obtain $\Vdash_\bc^{L,S} \bot$ by~\ref{eq:supp-inf}. Since $\bc \supseteq \bb$ was such that $\phi \Vdash_\bc^S \bot$, obtain $\Vdash_\bb^L \quest \phi$ by~\ref{eq:supp-quest}, as desired.
        
        \item[($\quest$E)'.] $\chi = \bot$: Assume, for an arbitrary $\bc \supseteq \bb$ and structural $S_1,\dots,S_n$, that $\Vdash_\bc^{S_1} \bang \psi_1,\dots,\Vdash_\bc^{S_n} \bang \psi_n$. Since also $\phi, \bang \psi_1,\dots,\bang \psi_n \Vdash_\bb^\varnothing \chi$, obtain $\phi \Vdash_\bc^{S_1,\dots,S_n} \chi$ by~\ref{eq:supp-inf}. Since also $\Vdash_\bb^L \quest \phi$, obtain $\Vdash_\bc^{L, S_1,\dots,S_n} \bot$ by~\ref{eq:supp-quest}. Hence obtain $\bang \psi_1,\dots,\bang \psi_n \Vdash_\bb^L \bot$ by Corollary~\ref{cor:notbang}. Since also $\Vdash_\bb^{K_1} \bang \psi_1,\dots,\Vdash_\bb^{K_n} \bang \psi_n$, obtain $\Vdash_\bb^{L, K_1,\dots,K_n} \bot$ by~\ref{eq:supp-inf}, as required. 
        
        $\chi = \quest \delta$: Assume, for an arbitrary $\bc \supseteq \bb$ and structural $T$, that $\delta \Vdash_\bc^T \bot$. Further assume, for an arbitrary $\bd \supseteq \bc$, $M$ and structural $S_1,\dots,S_n$, that $\Vdash_\bd^{S_1} \bang \psi_1,\dots,\Vdash_\bd^{S_n} \bang \psi_n$ and $\Vdash^M_\bd \phi$. Since also $\phi, \bang \psi_1,\dots,\bang \psi_n \Vdash_\bb^\varnothing \chi$, obtain $\Vdash_\bd^{S_1,\dots,S_n, M} \quest \delta$ by~\ref{eq:supp-inf}. Since $\delta \Vdash_\bc^T \bot$, obtain $\Vdash_\bd^{T, S_1,\dots,S_n, M} \bot$ by~\ref{eq:supp-quest}. By~\ref{eq:supp-inf} again, obtain $\phi \Vdash_\bd^{T, S_1,\dots,S_n} \bot$.  Since also $\Vdash_\bb^L \quest \phi$, obtain $\Vdash_\bd^{L, T, S_1,\dots,S_n} \bot$ by~\ref{eq:supp-quest}. Hence obtain $\bang \psi_1,\dots,\bang \psi_n \Vdash_\bc^{L,T} \bot$ by Corollary~\ref{cor:notbang}. Since also $\Vdash_\bb^{K_1} \bang \psi_1,\dots,\Vdash_\bb^{K_n} \bang \psi_n$, obtain $\Vdash_\bc^{L, T, K_1,\dots,K_n} \bot$ by~\ref{eq:supp-inf}. Since $\bc \supseteq \bb$ was such that $\delta \Vdash_\bc^T \bot$, obtain $\Vdash_\bb^{L, K_1,\dots,K_n} \quest \delta$ by~\ref{eq:supp-quest}, as required.
    \end{description}
\end{proof}

\begin{definition}[Simulation base]
Let $\mathcal{U}$ be a \emph{simulation base} as in~\cite[Definition 5.2]{DBLP:journals/corr/abs-2504-08349} with the addition of the following rules, where $s_i$ are structural, $\delta \in \{\quest \alpha, \bot \}$ for some $\alpha$ and $\chi \in \{\bang \beta, \neg \quest \beta \}$ for some $\beta$.
    \begin{center}
    \noindent\begin{minipage}{0.51\textwidth}\scriptsize
        \begin{prooftree}
        \def\ScoreOverhang{0.3pt}
            \AxiomC{$L_1 \vdash s_1$}
            \AxiomC{$\ldots$}
            \AxiomC{$L_n \vdash s_n$}
            \AxiomC{$\varnothing;s_1,\ldots, s_n \vdash p^{\phi}$}
            \RightLabel{$\bang$I}
            \QuaternaryInfC{$L_1, \ldots,L_n \vdash p^{\bang\phi}$}
        \end{prooftree}
        \begin{prooftree}
        \def\ScoreOverhang{0.3pt}
            \AxiomC{$K \vdash p^{\quest\phi}$}
            \AxiomC{$L_1 \vdash s_1 \; \ldots \; L_n \vdash s_n$}
            \AxiomC{$\varnothing; s_1,\ldots,s_n, p^\phi \vdash p^{\delta}$}
            \RightLabel{$\quest$E}
            \TrinaryInfC{$K,L_1, \ldots,L_n \vdash p^{\delta}$}
        \end{prooftree}
    \end{minipage}%
    \noindent\begin{minipage}{0.22\textwidth}\scriptsize
        \begin{prooftree}
        \def\ScoreOverhang{0.5pt}
            \AxiomC{$L \vdash p^{\bang \phi}$}
            \RightLabel{$\bang$E}
            \UnaryInfC{$L \vdash p^{\phi}$}
        \end{prooftree}
        \begin{prooftree}
        \def\ScoreOverhang{0.5pt}
            \AxiomC{$L \vdash p^{\phi}$}
            \RightLabel{$\quest$I}
            \UnaryInfC{$L \vdash p^{\quest \phi}$}
        \end{prooftree}
    \end{minipage}
    \noindent\begin{minipage}{0.22\textwidth}\scriptsize
        \begin{prooftree}
        \def\ScoreOverhang{0.5pt}
            \AxiomC{$L \vdash p^{\chi}$}
            \AxiomC{$K \vdash p^\psi$}
            \RightLabel{Wk}
            \BinaryInfC{$L, K \vdash p^\psi$}
        \end{prooftree}
        \begin{prooftree}
        \def\ScoreOverhang{0.5pt}
            \AxiomC{$L \vdash p^{\chi}$}
            \AxiomC{$K;p^{\chi},p^{\chi} \vdash p^\psi$}
            \RightLabel{Ctr}
            \BinaryInfC{$L, K \vdash p^\psi$}
        \end{prooftree}
    \end{minipage}
    \noindent\begin{minipage}{\textwidth}\scriptsize
        \begin{prooftree}
        \def\ScoreOverhang{0.5pt}
            \AxiomC{$\Gamma_{\At}, p^{\neg \phi} \vdash \bot$}
            \RightLabel{Raa}
            \UnaryInfC{$\Gamma_{\At} \vdash p^{\phi}$}
        \end{prooftree}
    \end{minipage}
    \end{center}
\end{definition}

\begin{lemma} \label{lemma:struct}
    Let $\mathcal{U}$ be a simulation base and $\phi$ a formula. An atom in $\mathcal{U}$ is structural if and only if it is of the shape $p^{\bang \phi}, p^{\neg \quest \phi}$.
\end{lemma}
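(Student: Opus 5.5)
This follows the pattern of Lemma~\ref{lmm:structatinb}: structurality in the sense of Definition~\ref{def:structuralatoms} is a purely syntactic condition on $\Rule_{\mathcal{U}}$. An atom $q$ is structural exactly when $\mathcal{U}$ contains every instance of the weakening rule with conclusion $L,K \vdash \bot$ from $L \vdash q$ and $K \vdash \bot$, and every instance of the contraction rule with conclusion $L,K \vdash \bot$ from $L \vdash q$ and $K;q,q \vdash \bot$. So I will inspect which rule schemata of $\mathcal{U}$ have these shapes. I will treat the two directions separately, recalling that the atoms of $\mathcal{U}$ are the $p^{\psi}$ for formulas $\psi$, with $\bot$ identified with $p^{\bot}$.

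\emph{If direction.} Let $\chi \in \{\bang \phi, \neg\quest\phi\}$. The added rules Wk and Ctr are stated for all $L,K$ and all targets $p^{\psi}$. Instantiating $\psi := \bot$, i.e.\ taking $p^{\psi}$ to be the atom $\bot$, gives exactly the rules required by Definition~\ref{def:structuralatoms} for the atom $p^{\chi}$. Hence $p^{\bang\phi}$ and $p^{\neg\quest\phi}$ are structural. The only check needed is that the side condition on $\chi$ in the definition of $\mathcal{U}$ allows these instances.

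\emph{Only if direction.} Suppose $q = p^{\psi}$ is structural. Then in particular the weakening instance with premises $L \vdash p^{\psi}$ and $K \vdash \bot$ and conclusion $L,K\vdash\bot$ must be a member of $\Rule_{\mathcal{U}}$, for all $L,K$. I will go through the rule schemata of $\mathcal{U}$: those of \cite[Definition 5.2]{DBLP:journals/corr/abs-2504-08349}, the closure rules Ax and Subs, and the new rules $\bang$I, $\bang$E, $\quest$I, $\quest$E, Wk, Ctr, Raa. For each, I will check whether it can have this shape: two premises, no discharged hypotheses, the second premise concluding $\bot$ with an arbitrary context, and a conclusion $\bot$ whose context is the union of the premise contexts.

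\begin{itemize}
\item The MALL elimination-style rules of the simulation base (e.g.\ $\otimes$E, $\oplus$E) all discharge hypotheses in their minor premises, so they never have this shape.
\item Introduction rules conclude $p^{\psi'}$ for compound $\psi'$, never $\bot$.
\item $\quest$E with $n=1$ has the right number of premises, but its last premise discharges $s_1, p^\phi$, so it is excluded.
\item Subs discharges $p$, so it is excluded.
\end{itemize}

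What remains is the new Wk schema, and it forces $\psi = \chi \in \{\bang\phi, \neg\quest\phi\}$. The Ctr condition is therefore not even needed for this direction.

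The main obstacle is this exhaustive case check against the rules of \cite[Definition 5.2]{DBLP:journals/corr/abs-2504-08349}, which are not reproduced here. The delicate point is ensuring that no MALL rule, such as the $\bot$- or $1$-elimination for the simulation atoms, accidentally yields a weakening instance for some $p^{\psi}$. A $1$E rule with premises $L \vdash p^{1}$ and $K \vdash p^{\psi'}$ and conclusion $L,K \vdash p^{\psi'}$ would instantiate to weakening for $p^1$. In that case I would argue that it nonetheless fails the contraction requirement, since no rule of $\mathcal{U}$ contracts $p^{1}$. This is why the proof should check both structural rules rather than weakening alone.
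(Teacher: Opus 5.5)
Your proposal is correct and follows the paper's own argument: for the \emph{if} direction you instantiate the Wk and Ctr schemata of $\mathcal{U}$ at target $\bot$, and for the \emph{only if} direction you inspect which rules of $\mathcal{U}$ have the shape of the structural rules. Your closing caveat is the right one and sharper than the paper's one-line claim, since weakening-shaped instances also come from $1$E (for $p^{1}$) and from $\multimap$E with antecedent and consequent $\bot$ (for $p^{\neg\bot}$), so drop the remark that Ctr is not needed and let the contraction check, whose only instances in $\mathcal{U}$ come from the Ctr schema, carry the \emph{only if} direction.
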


\begin{proof}
    Recall that $\mathcal{U}$ contains the following rules, where $\chi \in \{\bang \phi, \neg \quest \phi\}$ for some $\phi$:

    \begin{center}
        \begin{bprooftree}
        \def\ScoreOverhang{0.5pt}
            \AxiomC{$L \vdash p^{\chi}$}
            \AxiomC{$K \vdash p^\beta$}
            \RightLabel{\small Wk}
            \BinaryInfC{$L, K \vdash p^\beta$}
        \end{bprooftree}
        \quad
        \begin{bprooftree}
        \def\ScoreOverhang{0.5pt}
            \AxiomC{$L \vdash p^{\chi}$}
            \AxiomC{$K;p^{\chi},p^{\chi} \vdash p^\beta$}
            \RightLabel{\small Ctr}
            \BinaryInfC{$L, K \vdash p^\beta$}
        \end{bprooftree}
    \end{center}
    
    \noindent$(\Rightarrow)$: Consider a structural atom $q$ in $\mathcal{U}$. By Definition~\ref{def:structuralatoms}, $\mathcal{U}$ must contain the rules of weakening and contraction for $q$. The only rules of such shape in $\mathcal{U}$ are the rules above with $p^\beta = \bot$, hence $q$ must be of the shape $p^{\bang \phi}, p^{\neg \quest \phi}$.
    
    \noindent$(\Leftarrow)$: Consider $p^{\psi}$, where $\psi \in \{\bang \phi, \neg \quest \phi\}$ for some $\phi$. Now, by Definition~\ref{def:structuralatoms} and setting $p^\beta = \bot$ in Wk and Ctr, we conclude that $p^{\psi}$ is structural.
\end{proof}

\begin{lemma}[Completeness lemma] \label{lemma:completeness}
    Let $\mathcal{U}$ be a simulation base and $\phi$ a formula. Then, for all $\mathcal{B} \supseteq \mathcal{U}$ and all $L$, $\Vdash^{L}_{\mathcal{B}} \phi$ if and only if $L \vdash_{\mathcal{B}} p^{\phi}$.
\end{lemma}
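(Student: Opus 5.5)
The proof goes by induction on the structure of $\phi$, simultaneously for all $\mathcal{B} \supseteq \mathcal{U}$ and all $L$. This follows the template of \cite[Lemma 5.3]{DBLP:journals/corr/abs-2504-08349}. The atomic case and the MALL cases (${}^\bot$, $\otimes$, $\parr$, $\with$, $\oplus$ and the units) are inherited verbatim from that lemma. This is legitimate because adding the exponential rules to $\mathcal{U}$ only enlarges the base, and those arguments use only the presence of the corresponding introduction/elimination rules for $p^{\phi}$ together with the reductio rule. So the work lies in the two new cases $\phi = \bang\alpha$ and $\phi = \quest\alpha$. For both, the inductive hypothesis says that $\Vdash^K_{\bc}\alpha$ holds iff $K \vdash_{\bc} p^\alpha$, for every $\bc \supseteq \mathcal{U}$. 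Since $\alpha^\bot$ is also a subformula in the relevant sense, the same holds for $p^{\neg\alpha}$.

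\textbf{Case $\quest\alpha$.} For ($\Leftarrow$), suppose $L \vdash_\bb p^{\quest\alpha}$, and let $\bc \supseteq \bb$ and $S$ structural in $\bb$ satisfy $\alpha \Vdash^S_\bc \bot$. By Lemma~\ref{lemma:struct}, each $s\in S$ is some $p^{\bang\beta}$ or $p^{\neg\quest\beta}$, so $s \vdash s$ supplies the premises $L_i \vdash s_i$ of ($\quest$E). From $\alpha \Vdash^S_\bc\bot$ and the inductive hypothesis applied in the extension of $\bc$ where needed, we get $S, p^\alpha \vdash_{\bc} \bot$. Because the third premise of ($\quest$E) discharges $s_1,\dots,s_n,p^\alpha$, this derivation is available inside $\bc$ and yields $L,S \vdash_\bc \bot$, i.e.\ $\Vdash^{S,L}_\bc \bot$. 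For ($\Rightarrow$), apply \ref{eq:supp-quest} with $S = \{p^{\neg\quest\alpha}\}$, which is structural by Lemma~\ref{lemma:struct}. We need $\alpha \Vdash^{p^{\neg\quest\alpha}}_\bc \bot$. This follows from the simulation rules, using ($\quest$I) to turn $p^\alpha$ into $p^{\quest\alpha}$ and then the $\neg$-elimination on $p^{\neg\quest\alpha}, p^{\quest\alpha}$. We obtain $L, p^{\neg\quest\alpha} \vdash_\bb \bot$, and Raa gives $L \vdash_\bb p^{\quest\alpha}$.

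\textbf{Case $\bang\alpha$.} For ($\Rightarrow$), use \ref{eq:supp-bang} with $\bc = \bb$ and $K = \{p^{\neg\bang\alpha}\}$. To discharge the hypothesis, suppose $\bd \supseteq \bb$, $S$ structural and $\Vdash^S_\bd \alpha$. The inductive hypothesis gives $S \vdash_\bd p^\alpha$, and ($\bang$I) with $s_i \vdash s_i$ gives $S \vdash_\bd p^{\bang\alpha}$. This is a rule application, with the discharged premise realised by Subs-closure. The negation rule then gives $S, p^{\neg\bang\alpha} \vdash_\bd \bot$, hence $\Vdash^{S,K}_\bd\bot$ by the atomic case. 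We conclude $L,p^{\neg\bang\alpha}\vdash_\bb \bot$, and Raa finishes. For ($\Leftarrow$), suppose $L \vdash_\bb p^{\bang\alpha}$ and assume the \ref{eq:supp-bang} hypothesis for $\bc,K$. Take $\bd = \bc$ and $S = \{p^{\bang\alpha}\}$, which is structural by Lemma~\ref{lemma:struct}. Then ($\bang$E) and the inductive hypothesis give $\Vdash^{p^{\bang\alpha}}_\bc \alpha$, hence $p^{\bang\alpha}, K \vdash_\bc \bot$. Subs with $L \vdash p^{\bang\alpha}$ then yields $L,K\vdash_\bc\bot$.

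\textbf{Main obstacle.} The delicate point is the interplay between ``structural in $\bb$'' in the clauses and the structural atoms of the simulation base. The quantification is over $S$ structural in the \emph{given} $\bb$, and an extension $\bb \supseteq \mathcal{U}$ may contain additional weakening and contraction rules, making further atoms structural beyond those of Lemma~\ref{lemma:struct}. In that situation ($\quest$E) and ($\bang$I) only speak about the designated $s_i$. I would first try to handle this by noting that the clauses are only ever instantiated with the canonical atoms $p^{\bang\beta}$, $p^{\neg\quest\beta}$ in the ($\Rightarrow$) directions. In the ($\Leftarrow$) direction of $\quest$, for an arbitrary structural $S$, I would not use ($\quest$E) directly. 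Instead I would reason via Wk and Ctr on $S$ in $\bb$ to move $S$ into the context. If that fails, I would fall back on reading the side-condition ``$s_i$ structural'' in the simulation rules as ``structural in the current base'', so that the rule instances range over exactly the same $S$ as the semantic clauses.
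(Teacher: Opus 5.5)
Your proposal is essentially the paper's proof. Each case uses the same witnesses and rules: $\bc=\bb$, $K=\{p^{\neg\bang\alpha}\}$, ($\bang$I) and Raa for $\bang\alpha$ ($\Rightarrow$); $\bd=\bc$, $S=\{p^{\bang\alpha}\}$, ($\bang$E) and Subs for $\bang\alpha$ ($\Leftarrow$); $S=\{p^{\neg\quest\alpha}\}$, ($\quest$I) and Raa for $\quest\alpha$ ($\Rightarrow$); and ($\quest$E) on the axioms $s_i\vdash s_i$ for $\quest\alpha$ ($\Leftarrow$), with the MALL cases taken over from the earlier lemma.

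The obstacle you raise is genuine and the paper does not address it. Its proof applies ($\bang$I) and ($\quest$E) to an arbitrary $S$ structural in $\bb$, so it tacitly needs your fallback reading of the side condition ("structural in the current base"). Your first remedy would not work:

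\begin{itemize}
\item Lemma~\ref{lemma:struct} describes the structural atoms of $\mathcal{U}$ only, not those of an extension $\bb\supseteq\mathcal{U}$. So your appeal to it in the $\quest\alpha$ ($\Leftarrow$) case is unjustified.
\item In the $\bang\alpha$ ($\Rightarrow$) case the arbitrary $S$ occurs in the antecedent you must establish, so you cannot restrict attention to canonical atoms there. Wk and Ctr, which only conclude $\bot$, give no substitute for promotion.
\end{itemize}
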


\begin{proof}
    We prove the result by induction on the complexity of $\phi$. The induction hypothesis is such that $$\Vdash^{K}_{\mathcal{B}} \chi \text{ if and only if } K \vdash_{\mathcal{B}} p^{\chi}$$ holds true for any $K$, any subformula $\chi$ of $\phi$ and any base in place of $\mathcal{B}$. Inductive cases for MALL can be found in~\cite{DBLP:journals/corr/abs-2504-08349}.

    \begin{description}
        \item[($\bang$).] $\phi = \bang \alpha$.

        \noindent$(\Rightarrow)$: Assume $\Vdash^{L}_{\mathcal{B}}  \bang \alpha$. Further assume, for an arbitrary $\bc \supseteq \bb$ and an arbitrary structural $S$, that $\Vdash_\bc^S \alpha$. Induction hypothesis yields $S \vdash_\bc p^\alpha$. Now, the following is a deduction in $\mathcal{C}$, where $S = \{s_1,\dots,s_n\}$:
        
        \begin{center}
            \begin{bprooftree}
            \def\ScoreOverhang{0.5pt}
                \AxiomC{}
                \UnaryInfC{$s_1 \vdash s_1$}
                \AxiomC{$\dots$}
                \AxiomC{}
                \UnaryInfC{$s_n \vdash s_n$}
                \AxiomC{$\varnothing; s_1,\dots,s_n \vdash p^\alpha$}
                \RightLabel{\small $\bang\text{I}$}
                \QuaternaryInfC{$s_1,\dots,s_n \vdash p^{\bang \alpha}$}
                \AxiomC{}
                \RightLabel{\small Ax}
                \UnaryInfC{$p^{\neg \bang \alpha} \vdash p^{\neg \bang \alpha}$}
                \RightLabel{\small $\multimap$E}
                \BinaryInfC{$s_1,\dots,s_n, p^{\neg \bang \alpha} \vdash \bot$}
            \end{bprooftree} 
        \end{center}
        
        This deduction shows $S, p^{\neg \bang \alpha} \vdash_\bc \bot$, hence $\Vdash_\bc^{S, p^{\neg \bang \alpha}} \bot$. Since $\bc \supseteq \bb$ was such that $\Vdash_\bc^S \alpha$, and also $\Vdash^{L}_{\mathcal{B}} \bang \alpha$, obtain $\Vdash_\bb^{L, p^{\neg \bang \alpha}} \bot$ by~\ref{eq:supp-bang}. Then, $L, p^{\neg \bang \alpha} \vdash_\bb \bot$. By Raa, $L \vdash_\bb p^{\bang \alpha}$, as desired. \\
        
        \noindent$(\Leftarrow)$: Assume $L \vdash_\bb p^{\bang \alpha}$. Further assume, for an arbitrary $\bc \supseteq \bb$, all $\bd \supseteq \bc$, all structural $S$ and an arbitrary $K$, that $\Vdash_\bd^S \alpha$ implies $\Vdash_\bd^{S,K} \bot$. The following is a deduction in $\mathcal{B}$:
        
        \begin{center}
            \begin{bprooftree}
            \def\ScoreOverhang{0.5pt}
                \AxiomC{}
                \RightLabel{\small Ax}
                \UnaryInfC{$p^{\bang \alpha} \vdash p^{\bang \alpha}$}
                \RightLabel{\small $\bang\text{E}$}
                \UnaryInfC{$p^{\bang \alpha} \vdash p^{\alpha}$}
            \end{bprooftree} 
        \end{center}

        The deduction yields $p^{\bang \alpha} \vdash_\bb p^{\alpha}$, hence, by induction hypothesis, $\Vdash_\bb^{p^{\bang \alpha}} \alpha$. By monotonicity, $\Vdash_\bc^{p^{\bang \alpha}} \alpha$ and, since $p^{\bang \alpha}$ is structural (Lemma~\ref{lemma:struct}) and $\bc$ is its own extension, obtain $\Vdash_\bc^{K, p^{\bang \alpha}} \bot$ from our initial assumption. Hence, $K, p^{\bang \alpha} \vdash_\bc \bot$. Since also $L \vdash_\bb p^{\bang \alpha}$, we obtain $K, L \vdash_\bc \bot$ by composing the two deductions. Hence, $\Vdash_\bc^{K, L} \bot$. Now, since $\bc \supseteq \bb$ was arbitrary and $\bd \supseteq \bc$ was such that $\Vdash_\bd^S \alpha$ implies $\Vdash_\bd^{S,K} \bot$, obtain $\Vdash_\bb^L \bang \alpha$ by~\ref{eq:supp-bang}, as required.

        \item[($\quest$).] $\phi = \quest \alpha$.

        \noindent$(\Rightarrow)$: Assume $\Vdash^{L}_{\mathcal{B}} \quest \alpha$. Further assume, for an arbitrary $\bc \supseteq \bb$ and an arbitrary $K$, that $\Vdash_\bc^K \alpha$. Induction hypothesis yields $K \vdash_\bc p^\alpha$. Now, the following is a deduction in $\mathcal{C}$:

        \begin{center}
            \begin{bprooftree}
            \def\ScoreOverhang{0.5pt}
                \AxiomC{$K \vdash p^{\alpha}$}
                \RightLabel{\small $\quest\text{I}$}
                \UnaryInfC{$K \vdash p^{\quest \alpha}$}
                \AxiomC{}
                \RightLabel{\small Ax}
                \UnaryInfC{$p^{\neg \quest \alpha} \vdash p^{\neg \quest \alpha}$}
                \RightLabel{\small $\multimap$E}
                \BinaryInfC{$K, p^{\neg \quest \alpha} \vdash \bot$}
            \end{bprooftree} 
        \end{center}
        
        This deduction shows $K, p^{\neg \quest \alpha} \vdash_\bc \bot$, hence $\Vdash_\bc^{K, p^{\neg \quest \alpha}} \bot$. Since $\bc \supseteq \bb$ was such that $\Vdash_\bc^K \alpha$ for an arbitrary $K$, obtain $\alpha \Vdash_\bb^{p^{\neg \quest \alpha}} \bot$ by~\ref{eq:supp-inf}. Now, since $\Vdash^{L}_{\mathcal{B}} \quest \alpha$, and $p^{\neg \quest \alpha}$ is structural (Lemma~\ref{lemma:struct}), obtain $\Vdash_\bb^{L,p^{\neg \quest \alpha}} \bot$ by~\ref{eq:supp-quest}. Thus, $L, p^{\neg \quest \alpha} \vdash_\bb \bot$. By Raa, $L \vdash_\bb p^{\quest \alpha}$, as desired. \\
        
        \noindent$(\Leftarrow)$: Assume $L \vdash_\bb p^{\quest \alpha}$. Further assume, for an arbitrary $\bc \supseteq \bb$ and an arbitrary structural $S$, that $\alpha \Vdash_\bc^S \bot$. Notice that $p^\alpha \vdash_\bb p^\alpha$, hence, by induction hypothesis, $\Vdash_\bb^{p^\alpha} \alpha$. Since also $\alpha \Vdash_\bc^S \bot$, obtain $\Vdash_\bc^{S,p^\alpha} \bot$ by~\ref{eq:supp-inf}. Thus, $S,p^\alpha \vdash_\bc \bot$. Now, the following is a deduction in $\mathcal{C}$, where $S = \{s_1,\dots,s_n\}$:
        
        \begin{prooftree}
        \def\ScoreOverhang{0.5pt}
            \AxiomC{$L \vdash p^{\quest \alpha}$}
            \AxiomC{}
            \UnaryInfC{$s_1 \vdash s_1$}
            \AxiomC{$\dots$}
            \AxiomC{}
            \UnaryInfC{$s_n \vdash s_n$}
            \AxiomC{$\varnothing; s_1,\dots,s_n, p^\alpha \vdash \bot$}
            \RightLabel{\small $\quest$E}
            \QuinaryInfC{$L, s_1,\dots,s_n \vdash \bot$}
        \end{prooftree}
        
        This deduction shows $L, S \vdash_\bc \bot$, hence, $\Vdash_\bc^{L,S} \bot$. Now, since $\bc \supseteq \bb$ was such that $\alpha \Vdash_\bc^S \bot$ for an arbitrary $S$, obtain $\Vdash_\bb^L \quest \alpha$ by~\ref{eq:supp-quest}, as required.
    \end{description}
\end{proof}

\begin{theorem}[Completeness] \label{thm:completeness}
    If $\Gamma \Vdash \phi$ then $\Gamma \vdash_{LL} \phi$.
\end{theorem}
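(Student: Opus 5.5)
We follow the standard Sandqvist-style completeness argument, lifted to the simulation base $\mathcal{U}$ extended with the exponential rules. Fix $\Gamma = \{\gamma_1,\dots,\gamma_n\}$ and $\phi$ with $\Gamma \Vdash \phi$, i.e.~$\Gamma \Vdash^\varnothing_\bb \phi$ for every base $\bb$. We take $\mathcal{U}$ to be the simulation base built over the set of subformulas of $\Gamma \cup \{\phi\}$, with a fresh atom $p^\chi$ for each such subformula $\chi$ and its negation, chosen so that the premises of the rules $\bang$I, $\quest$E, Wk, Ctr and Raa range over exactly these atoms. In particular, Lemma~\ref{lemma:struct} identifies the structural atoms as the $p^{\bang\alpha}$ and $p^{\neg\quest\alpha}$.

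First, instantiate the hypothesis at $\bb = \mathcal{U}$, so that $\Gamma \Vdash^\varnothing_{\mathcal{U}} \phi$. Next, for each $i$ we have $p^{\gamma_i} \vdash_{\mathcal{U}} p^{\gamma_i}$ by Ax. Lemma~\ref{lemma:completeness} then gives $\Vdash^{p^{\gamma_i}}_{\mathcal{U}} \gamma_i$. Applying~\ref{eq:supp-inf} with $\mathcal{C} = \mathcal{U}$ and $K_i = \{p^{\gamma_i}\}$ yields $\Vdash^{p^{\gamma_1},\dots,p^{\gamma_n}}_{\mathcal{U}} \phi$. Lemma~\ref{lemma:completeness} applied once more gives the atomic derivation $p^{\gamma_1},\dots,p^{\gamma_n} \vdash_{\mathcal{U}} p^{\phi}$.

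The remaining step is to read this atomic derivation back as an LL derivation. Every rule of $\mathcal{U}$ is a relabelled natural deduction rule of LL: the MALL rules come from~\cite{DBLP:journals/corr/abs-2504-08349}, and the new rules are $\bang$I (promotion with structural context), $\bang$E, $\quest$I, $\quest$E (restricted to $\delta \in \{\quest\alpha,\bot\}$), Wk, Ctr and Raa. Define the translation $p^\chi \mapsto \chi$. We show by induction on derivations in $\mathcal{U}$ that $L \vdash_{\mathcal{U}} p^\chi$ implies $L^\flat \vdash_{LL} \chi$, where $L^\flat$ is $L$ with every atom translated. Ax and Subs are identity and cut, which LL admits. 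Each simulation rule maps to its LL counterpart. Applying this to the derivation above gives $\Gamma \vdash_{LL} \phi$.

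The main obstacle is the $\bang$I and $\quest$E cases of this induction. The base rule discharges arbitrary structural atoms $s_1,\dots,s_n$. By Lemma~\ref{lemma:struct} these translate to $\bang\beta$ or $\neg\quest\beta$, and we must check that $\neg\quest\beta$ is LL-equivalent to $\bang\neg\beta$. That equivalence makes the translated context genuinely of the form $\bang\Delta$, so the LL promotion and $\quest$-elimination side conditions hold. We also have to confirm that restricting $\delta$ to $\quest\alpha$ or $\bot$ matches the side condition in ($\quest$E) of Theorem~\ref{thm:soundness}. The Wk and Ctr rules with $\chi = \neg\quest\beta$ are handled through the same equivalence.
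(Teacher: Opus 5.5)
Your proposal is correct and follows the same route as the paper, whose proof simply defers to the Sandqvist-style argument of the MALL paper with the extended completeness lemma. That argument instantiates validity at the simulation base $\mathcal{U}$, uses the completeness lemma together with (Inf) to obtain $p^{\gamma_1},\dots,p^{\gamma_n} \vdash_{\mathcal{U}} p^{\phi}$, and reads this atomic derivation back as an LL derivation. Your explicit handling of the exponential cases of that read-back fills in what the paper leaves implicit: you identify the structural atoms via the structural-atom lemma and use $(\quest\beta)^\bot \dashv\vdash \bang(\beta^\bot)$ to meet the side conditions of promotion, $\quest$-elimination, weakening and contraction.
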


\begin{proof}
    The proof is exactly the same as in~\cite{DBLP:journals/corr/abs-2504-08349}, but with the extended version of the completeness lemma above.
\end{proof}

\section{Omitted proofs from Section~\ref{sec:corr}} \label{app:proofs}

\begin{proof}[Proof of Lemma~\ref{lmm:intersupport}]
    Denote $\bb \coloneqq P_\mathbb{B}$. We prove the statement by induction on the structure of $\phi$.
    \begin{description}
        \item[$\phi = p:$] Immediate by construction of $\bb$ since $\brl p \brr = p$.
        
        \item[$\phi = \alpha^\bot:$] Suppose for an arbitrary $\bc \supseteq \bb$ and $L$ that $\Vdash_\bc^L \alpha^\bot$. Then $\alpha \Vdash_\bc^L \bot$ and, by inductive hypothesis, $\brl \alpha \brr \Vdash_\bc^L \bot$. Since $\brl \alpha^{\bot\bot} \brr \subseteq \brl \alpha \brr$, $\brl \alpha^{\bot\bot} \brr \Vdash_\bc^L \bot$ and since \textit{reductio ad absurdum} is present for every atom in $\bb$, obtain $\Vdash_\bc^L \brl \alpha^{\bot} \brr$. Hence, by~\ref{eq:supp-inf}, $\alpha^\bot \Vdash_\bb^\varnothing \brl \alpha^\bot \brr$.

        Now, assume for an arbitrary $\bc \supseteq \bb$ and $L$ that $\Vdash_\bc^L \brl \alpha^\bot \brr$. Since $\brl \alpha \brr \otimes_M \brl \alpha^\bot \brr \subseteq \bot$, we have that $\brl \alpha \brr,\brl \alpha^\bot \brr \vdash_\bb \bot$, hence $\brl \alpha \brr,\brl \alpha^\bot \brr \Vdash_\bb^\varnothing \bot$. By inductive hypothesis, $\alpha,\brl \alpha^\bot \brr \Vdash_\bb^\varnothing \bot$, and since $\Vdash_\bc^L \brl \alpha^\bot \brr$, obtain $\alpha \Vdash_\bb^L \bot$,~i.e.~$\Vdash_\bb^L \alpha^\bot$. Hence $\brl \alpha^\bot \brr \Vdash_\bb^\varnothing \alpha^\bot$ by~\ref{eq:supp-inf}.
        
        \item[$\phi = \alpha \otimes \beta:$] Suppose for an arbitrary $\bc \supseteq \bb$ and $L$ that $\Vdash_\bc^L \alpha \otimes \beta$. Since $\alpha \Vdash_\bb^\varnothing \brl \alpha \brr$ and $\beta \Vdash_\bb^\varnothing \brl \beta \brr$ by inductive hypothesis, obtain $\alpha,\beta \Vdash_\bb^\varnothing \brl \alpha \brr \otimes \brl \beta \brr$ by~$(\otimes I)'$~\cite[Theorem 4.2]{DBLP:journals/corr/abs-2504-08349}. Since also $\Vdash_\bc^L \alpha \otimes \beta$, obtain $\Vdash_\bc^L \brl \alpha \brr \otimes \brl \beta \brr$ by Lemma~\ref{lemma:generictensor}. Since $\brl \alpha \brr \otimes_M \brl \beta \brr \subseteq \brl \alpha \otimes \beta \brr$, we have that $\brl \alpha \brr, \brl \beta \brr \Vdash_\bb^\varnothing \brl \alpha \otimes \beta \brr$. By Lemma~\ref{lemma:generictensor}, obtain $\Vdash_\bc^L \brl \alpha \otimes \beta \brr$. Hence, by~\ref{eq:supp-inf}, $\alpha \otimes \beta \Vdash_\bb^\varnothing \brl \alpha \otimes \beta \brr$.

        Now, assume for an arbitrary $\bc \supseteq \bb$ and $L$ that $\Vdash_\bc^L \brl \alpha \otimes \beta \brr$. Further, for an arbitrary $\bd \supseteq \bc$ and $K$, assume that $\alpha,\beta \Vdash_\bd^K \bot$. By inductive hypothesis, $\brl \alpha \brr,\brl \beta \brr \Vdash_\bd^K \bot$. Since $\brl \alpha \otimes \beta \brr \subseteq \brl \alpha \brr \otimes_M \brl \beta \brr$, we have that $\brl \alpha \otimes \beta \brr \Vdash_\bb^\varnothing \brl \alpha \brr \otimes \brl \beta \brr$. By~\ref{eq:supp-inf}, obtain $\Vdash_\bc^L \brl \alpha \brr \otimes \brl \beta \brr$. Since also $\brl \alpha \brr,\brl \beta \brr \Vdash_\bd^K \bot$, obtain $\Vdash_\bd^{K,L} \bot$ by~\ref{eq:supp-tensor}. Since $\bd \supseteq \bc$ was such that $\alpha,\beta \Vdash_\bd^K \bot$, obtain $\Vdash_\bc^L \alpha \otimes \beta$ by~\ref{eq:supp-tensor}. By~\ref{eq:supp-inf} again, $\brl \alpha \otimes \beta \brr \Vdash_\bb^\varnothing \alpha \otimes \beta$.
        
        \item[$\phi = 1:$] Suppose for an arbitrary $\bc \supseteq \bb$ and $L$ that $\Vdash_\bc^L 1$, which is equivalent to $\Vdash_\bc^L \bot^\bot$. By inductive hypothesis, $\Vdash_\bc^L \brl \bot^\bot \brr$. Now, by soundness of phase semantics $\brl \bot^\bot \brr \subseteq \brl 1 \brr$, hence $\brl \bot^\bot \brr \Vdash_\bb^\varnothing \brl 1 \brr$. By~\ref{eq:supp-inf}, $\Vdash_\bb^L \brl 1 \brr$, and by~\ref{eq:supp-inf} again, $1 \Vdash_\bb^\varnothing \brl 1 \brr$.

        Now, assume for an arbitrary $\bc \supseteq \bb$ and $L$ that $\Vdash_\bc^L \brl 1 \brr$. Further, for an arbitrary $\bd \supseteq \bc$ and $K$, assume that $\Vdash_\bd^K \bot$. By soundness of phase semantics $\brl 1 \brr \subseteq \brl \bot^\bot \brr$, hence $\brl 1 \brr \Vdash_\bb^\varnothing \brl \bot^\bot \brr$. By~\ref{eq:supp-inf}, $\Vdash_\bb^L \brl \bot^\bot \brr$. By inductive hypothesis, $\Vdash_\bb^L \bot^\bot$. By~\ref{eq:supp-inf} again, $\Vdash_\bd^{K,L} \bot$, and since $\bd \supseteq \bc$ was such that $\Vdash_\bd^K \bot$, obtain $\Vdash_\bc^L 1$ by~\ref{eq:supp-1}. Hence $\brl 1 \brr \Vdash_\bb^\varnothing 1$ by~\ref{eq:supp-inf}.

        \item[$\phi = \alpha \parr \beta:$] Suppose for an arbitrary $\bc \supseteq \bb$ and $L$ that $\Vdash_\bc^L \alpha \parr \beta$. Since $\brl \alpha \brr \otimes_M \brl \alpha^\bot \brr \subseteq \bot$ and $\brl \beta \brr \otimes_M \brl \beta^\bot \brr \subseteq \bot$, conclude $\brl \alpha \brr, \brl \alpha^\bot \brr \myv \bot$ and $\brl \beta \brr, \brl \beta^\bot \brr \myv \bot$, analogously written as $\brl \alpha \brr \Vdash_\bb^{\brl \alpha^\bot \brr} \bot$ and $\brl \beta \brr \Vdash_\bb^{\brl \beta^\bot \brr} \bot$. By inductive hypothesis then $\alpha \Vdash_\bb^{\brl \alpha^\bot \brr} \bot$ and $\beta \Vdash_\bb^{\brl \beta^\bot \brr} \bot$. Since $\Vdash_\bc^L \alpha \parr \beta$, by~\ref{eq:supp-parr}, obtain $\Vdash_\bc^{L,\brl \alpha^\bot \brr,\brl \beta^\bot \brr} \bot$,~i.e.~$\brl \alpha^\bot \brr,\brl \beta^\bot \brr \Vdash_\bc^{L} \bot$. Since $\brl (\alpha \parr \beta)^\bot \brr \subseteq \brl \alpha^\bot \brr \otimes_M \brl \beta^\bot \brr$, we have that $\brl (\alpha \parr \beta)^\bot \brr \myv \brl \alpha^\bot \brr \otimes \brl \beta^\bot \brr$. Hence, by~$(\otimes E)'$~\cite[Theorem 4.2]{DBLP:journals/corr/abs-2504-08349}, $\brl (\alpha \parr \beta)^\bot \brr \Vdash_\bc^{L} \bot$, and by \textit{reductio ad absurdum}, $\Vdash_\bc^{L} \brl \alpha \parr \beta \brr$. Thus, by~\ref{eq:supp-inf}, $\alpha \parr \beta \Vdash_\bb^\varnothing \brl \alpha \parr \beta \brr$.

        Now, assume for an arbitrary $\bc \supseteq \bb$ and $L$ that $\Vdash_\bc^L \brl \alpha \parr \beta \brr$. Further, for an arbitrary $\bd \supseteq \bc$ and atomic multisets $K,M$, assume that $\alpha \Vdash_\bd^K \bot$ and $\beta \Vdash_\bd^M \bot$. By inductive hypothesis, $\Vdash^K_\bd \brl \alpha^\bot \brr$ and $\Vdash^M_\bd \brl \beta^\bot \brr$. Since $\brl \alpha^\bot \brr \otimes_M \brl \beta^\bot \brr \subseteq \brl (\alpha \parr \beta)^\bot \brr$, we have that $\brl \alpha^\bot \brr, \brl \beta^\bot \brr \myv \brl (\alpha \parr \beta)^\bot \brr$. By~\ref{eq:supp-inf}, $\Vdash_\bd^{K,M} \brl (\alpha \parr \beta)^\bot \brr$. Since $\brl \alpha \parr \beta \brr \otimes_M \brl (\alpha \parr \beta)^\bot \brr \subseteq \bot$, hence $\brl \alpha \parr \beta \brr, \brl (\alpha \parr \beta)^\bot \brr \myv \bot$, obtain $\Vdash_\bd^{L,K,M} \bot$ by~\ref{eq:supp-inf}. Since $\bd \supseteq \bc$ was such that $\alpha \Vdash_\bd^K \bot$ and $\beta \Vdash_\bd^M \bot$, obtain $\Vdash_\bc^L \alpha \parr \beta$ by~\ref{eq:supp-parr}. Hence $\brl \alpha \parr \beta \brr \Vdash_\bb^\varnothing \alpha \parr \beta$ by~\ref{eq:supp-inf}.
        
        \item[$\phi = \alpha \with \beta:$] Suppose for an arbitrary $\bc \supseteq \bb$ and $L$ that $\Vdash_\bc^L \alpha \with \beta$. Then $\Vdash_\bc^L \alpha$ and $\Vdash_\bc^L \beta$ by~\ref{eq:supp-and}. Hence $\Vdash_\bc^L \brl \alpha \brr$ and $\Vdash_\bc^L \brl \beta \brr$ by inductive hypothesis, hence $L \vdash_\bc \brl \alpha \brr$ and $L \vdash_\bc \brl \beta \brr$ by Lemma~\ref{lmm:deriveqsupport}. Since $\brl \alpha \brr \cap \brl \beta \brr \subseteq \brl \alpha \with \beta \brr$, the following is a rule in $\bb$:
        \begin{prooftree}
        \def\ScoreOverhang{0.5pt}
            \AxiomC{$L \vdash \brl \alpha \brr $} 
            \AxiomC{$L \vdash \brl \beta \brr $} 
            \AxiomC{$\brl (\alpha \with \beta)^\bot \brr; \brl \alpha \with \beta \brr \vdash \bot$} 
            \TrinaryInfC{$L,\brl (\alpha \with \beta)^\bot \brr \vdash \bot$}
        \end{prooftree}
        Hence $L,\brl (\alpha \with \beta)^\bot \brr \vdash_\bc \bot$. By \textit{reductio ad absurdum} conclude $L \vdash_\bc \brl \alpha \with \beta \brr$, hence $\Vdash^L_\bc \brl \alpha \with \beta \brr$. By~\ref{eq:supp-inf}, $\alpha \with \beta \Vdash_\bb^\varnothing \brl \alpha \with \beta \brr$.

        Now, assume for an arbitrary $\bc \supseteq \bb$ and $L$ that $\Vdash_\bc^L \brl \alpha \with \beta \brr$. Since $\brl \alpha \with \beta \brr \subseteq \brl \alpha \brr$ and $\brl \alpha \with \beta \brr \subseteq \brl \beta \brr$, we can conclude $\brl \alpha \with \beta \brr \Vdash^\varnothing_\bb \brl \alpha \brr$ and $\brl \alpha \with \beta \brr \Vdash^\varnothing_\bb \brl \beta \brr$. Since $\Vdash_\bc^L \brl \alpha \with \beta \brr$, obtain $\Vdash^L_\bc \brl \alpha \brr$ and $\Vdash^L_\bc \brl \beta \brr$ by~\ref{eq:supp-inf}. Hence $\Vdash^L_\bc \alpha$ and $\Vdash^L_\bc \beta$ by inductive hypothesis, hence $\Vdash^L_\bc \alpha \with \beta$ by~\ref{eq:supp-and}. By~\ref{eq:supp-inf} then, $\brl \alpha \with \beta \brr \Vdash_\bb^\varnothing \alpha \with \beta$.
        
        \item[$\phi = \alpha \oplus \beta:$] Suppose for an arbitrary $\bc \supseteq \bb$ and $L$ that $\Vdash_\bc^L \alpha \oplus \beta$. Since $\brl \alpha \brr \subseteq \brl \alpha \oplus \beta \brr$ and $\brl \beta \brr \subseteq \brl \alpha \oplus \beta \brr$, conclude $\brl \alpha \brr \Vdash_\bb^\varnothing \brl \alpha \oplus \beta \brr$ and $\brl \beta \brr \Vdash_\bb^\varnothing \brl \alpha \oplus \beta \brr$. By inductive hypothesis then $\alpha \Vdash_\bb^\varnothing \brl \alpha \oplus \beta \brr$ and $\beta \Vdash_\bb^\varnothing \brl \alpha \oplus \beta \brr$. Since $\Vdash_\bc^L \alpha \oplus \beta$, by Lemma~\ref{lemma:genericplus}, obtain $\Vdash_\bc^L \brl \alpha \oplus \beta \brr$. Hence $\alpha \oplus \beta \Vdash_\bb^\varnothing \brl \alpha \oplus \beta \brr$ by~\ref{eq:supp-inf}.

        Now, assume for an arbitrary $\bc \supseteq \bb$ and $L$ that $\Vdash_\bc^L \brl \alpha \oplus \beta \brr$. Further, for an arbitrary $\bd \supseteq \bc$ and $K$, assume that $\alpha \Vdash_\bd^K \bot$ and $\beta \Vdash_\bd^K \bot$. By inductive hypothesis and Lemma~\ref{lmm:deriveqsupport}, $K \vdash_\bd \brl \alpha^\bot \brr$ and $K \vdash_\bd \brl \beta^\bot \brr$. Since $\brl \alpha^\bot \brr \cap \brl \beta^\bot \brr \subseteq \brl (\alpha \oplus \beta)^\bot \brr$, the following is a rule in $\bb$:
        \begin{prooftree}
        \def\ScoreOverhang{0.5pt}
            \AxiomC{$K \vdash \brl \alpha^\bot \brr $} 
            \AxiomC{$K \vdash \brl \beta^\bot \brr $} 
            \AxiomC{$\brl \alpha \oplus \beta \brr ; \brl (\alpha \oplus \beta)^\bot \brr \vdash \bot$} 
            \TrinaryInfC{$K,\brl \alpha \oplus \beta \brr \vdash \bot$}
        \end{prooftree}
        Hence $K,\brl \alpha \oplus \beta \brr \vdash_\bd \bot$. Since also $\Vdash_\bc^L \brl \alpha \oplus \beta \brr$, by~\ref{eq:supp-at}, obtain $K,L \vdash_\bd \bot$,~i.e.~$\Vdash^{K,L}_\bd \bot$. Since $\bd \supseteq \bc$ was such that $\alpha \Vdash_\bd^K \bot$ and $\beta \Vdash_\bd^K \bot$, obtain $\Vdash_\bc^L \alpha \oplus \beta$ by~\ref{eq:supp-plus}. By~\ref{eq:supp-inf} again, $\brl \alpha \oplus \beta \brr \Vdash_\bb^\varnothing \alpha \oplus \beta$.

        \item[$\phi = \top:$] Suppose for an arbitrary $\bc \supseteq \bb$ and $L$ that $\Vdash_\bc^L \top$. By soundness of phase semantics, $\brl K \brr \subseteq \brl \top \brr$ for all $\brl K \brr \in P$, then, in particular, $L \myv \brl \top \brr$,~i.e.~$\Vdash_\bb^L \brl \top \brr$. By~\ref{eq:supp-inf}, $\top \Vdash_\bb^\varnothing \brl \top \brr$.
        
        Now, assume for an arbitrary $\bc \supseteq \bb$ and $L$ that $\Vdash_\bc^L \brl \top \brr$. Since $\Vdash_\bb^K \top$ holds for all $\bc \supseteq \bb$ and $K$, in particular $\Vdash_\bc^L \top$. Hence $\brl \top \brr \Vdash_\bb^\varnothing \top$ by~\ref{eq:supp-inf}.

        \item[$\phi = 0:$] Suppose for an arbitrary $\bc \supseteq \bb$ and $L$ that $\Vdash_\bc^L 0$. By~\ref{eq:supp-0}, $\Vdash_\bc^{L,K} \bot$ for all $K$. Choose $K = \brl 0^\bot \brr$. Then $\Vdash_\bc^{L,\brl 0^\bot \brr} \bot$,~i.e.~$\brl 0^\bot \brr \Vdash_\bc^{L} \bot$, and by \textit{reductio ad absurdum}, $\Vdash_\bc^L \brl 0 \brr$. Hence $0 \Vdash_\bb^\varnothing \brl 0 \brr$ by~\ref{eq:supp-inf}.

        Now, assume for an arbitrary $\bc \supseteq \bb$ and $L$ that $\Vdash_\bc^L \brl 0 \brr$. By soundness of phase semantics, $\brl 0 \brr \otimes_M \brl K \brr \subseteq \bot$ for all $K$. Hence, $K, \brl 0 \brr, \myv \bot$,~i.e.~$\brl 0 \brr \Vdash_\bb^K \bot$. By~\ref{eq:supp-inf}, $\Vdash_\bc^{K,L} \bot$ for all $K$, hence $\Vdash_\bc^L 0$ by~\ref{eq:supp-0}. Hence $\brl 0 \brr \Vdash_\bb^\varnothing 0$ by~\ref{eq:supp-inf}.
        
        \item[$\phi = \bang \alpha:$] Suppose for an arbitrary $\bc \supseteq \bb$ and $L$ that $\Vdash_\bc^L \bang \alpha$. Further assume, for an arbitrary $\bd \supseteq \bc$ and an arbitrary structural $S$, that $\Vdash^S_\bd \alpha$. By inductive hypothesis, $\Vdash^S_\bd \brl \alpha \brr$. Since there is a rule in $\bb$ concluding $S \vdash \brl \bang \alpha \brr$  from $S \vdash \brl \alpha \brr$, we obtain $\Vdash^S_\bd \brl \bang \alpha \brr$. By soundness of phase semantics, $\brl \bang \alpha \brr \otimes_M \brl (\bang \alpha)^\bot \brr \subseteq \bot$, hence $\brl \bang \alpha \brr, \brl (\bang \alpha)^\bot \brr \myv \bot$. By~\ref{eq:supp-inf} then, $\brl (\bang \alpha)^\bot \brr \Vdash_\bd^S \bot$,~i.e.~$\Vdash_\bd^{S,\brl (\bang \alpha)^\bot \brr} \bot$. Since $\Vdash_\bc^L \bang \alpha$, conclude $\Vdash_\bc^{L,\brl (\bang \alpha)^\bot \brr} \bot$ by~\ref{eq:supp-bang}. Thus, $\brl (\bang \alpha)^\bot \brr \Vdash_\bc^{L} \bot$. By \textit{reductio ad absurdum}, $\Vdash_\bc^{L} \brl \bang \alpha \brr $, and by~\ref{eq:supp-inf}, $\bang \alpha \Vdash_\bb^\varnothing \brl \bang \alpha \brr$.

        Now, assume for an arbitrary $\bc \supseteq \bb$ and $L$ that $\Vdash_\bc^L \brl \bang \alpha \brr$. Further assume, for an arbitrary $\bd \supseteq \bc$, all $\be \supseteq \bd$, all structural $S$ and an arbitrary $K$, that $\Vdash_\be^S \alpha$ implies $\Vdash_\be^{S,K} \bot$. Since $\brl \bang \alpha \brr \subseteq \brl \alpha \brr$, we have that $\brl \bang \alpha \brr \Vdash_\bb^\varnothing \brl \alpha \brr$,~i.e.~$\Vdash_\bb^{\brl \bang \alpha \brr} \brl \alpha \brr$. By inductive hypothesis, $\Vdash_\bb^{\brl \bang \alpha \brr} \alpha$. Since $\brl \bang \alpha \brr$ is structural in $\bb$ (see Lemma~\ref{lmm:structatinb}), $\bd$ is its own extension and by monotonicity, obtain $\Vdash_\bd^{\brl \bang \alpha \brr, K} \bot$. This is equivalent to $\brl \bang \alpha \brr \Vdash_\bd^{K} \bot$, and since $\Vdash_\bc^L \brl \bang \alpha \brr$, obtain $\Vdash_\bd^{K,L} \bot$ by~\ref{eq:supp-inf}. Now, since $\bd \supseteq \bc$ was arbitrary and $\be \supseteq \bd$ was such that $\Vdash_\be^S \alpha$ implies $\Vdash_\be^{S,K} \bot$, obtain $\Vdash_\bc^L \bang \alpha$ by~\ref{eq:supp-bang}. By~\ref{eq:supp-inf} then, $\brl \bang \alpha \brr \Vdash_\bb^\varnothing \bang \alpha$.

        \item[$\phi = \quest \alpha:$] Suppose for an arbitrary $\bc \supseteq \bb$ and $L$ that $\Vdash_\bc^L \quest \alpha$. Notice that $\brl \bang(\alpha^\bot) \brr \subseteq \brl \alpha^\bot \brr$, hence $\brl \bang(\alpha^\bot) \brr \myv \brl \alpha^\bot \brr$, By inductive hypothesis, $\brl \bang(\alpha^\bot) \brr \myv \alpha^\bot$,~i.e.~$\Vdash_\bb^{\brl \bang(\alpha^\bot) \brr} \alpha^\bot$. Since $\brl \bang (\alpha^\bot) \brr$ is structural in $\bb$ (see Lemma~\ref{lmm:structatinb}), obtain $\Vdash_\bc ^{L,\brl \bang(\alpha^\bot) \brr} \bot$,~i.e.~$\brl \bang(\alpha^\bot) \brr \Vdash_\bc ^{L} \bot$ by~\ref{eq:supp-quest}. By soundness of phase semantics, $\brl (\quest \alpha)^\bot \brr \subseteq \brl \bang(\alpha^\bot) \brr$, hence $\brl (\quest \alpha)^\bot \brr \myv \brl \bang(\alpha^\bot) \brr$. By~\ref{eq:supp-inf}, $\brl (\quest \alpha)^\bot \brr \Vdash_\bc ^{L} \bot$, and by \textit{reductio ad absurdum}, $\Vdash_\bc ^{L} \brl \quest \alpha \brr$. By~\ref{eq:supp-inf} then, $\quest \alpha \Vdash_\bb^\varnothing \brl \quest \alpha \brr$.
        
        Now, assume for an arbitrary $\bc \supseteq \bb$ and $L$ that $\Vdash_\bc^L \brl \quest \alpha \brr$. Further assume, for an arbitrary $\bd \supseteq \bc$ and structural $S$, that $\Vdash_\bd^S \alpha^\bot$. By inductive hypothesis, $\Vdash_\bd^S \brl \alpha^\bot \brr$. Since there is a rule in $\bb$ concluding $S \vdash \brl \bang (\alpha^\bot) \brr$  from $S \vdash \brl \alpha^\bot \brr$, we obtain $\Vdash_\bd^S \brl \bang (\alpha^\bot) \brr$. By soundness of phase semantics, $\brl \bang(\alpha^\bot) \brr \subseteq \brl (\quest \alpha)^\bot \brr$, hence $\brl \bang(\alpha^\bot) \brr \myv \brl (\quest \alpha)^\bot \brr$. By~\ref{eq:supp-inf}, $\Vdash_\bd^S \brl (\quest \alpha)^\bot \brr$, and since $\brl \quest \alpha \brr, \brl (\quest \alpha)^\bot \brr \myv \bot$, by \ref{eq:supp-inf} again, $\Vdash_\bd^{S,L} \bot$. Since $\bd \supseteq \bc$ was such that $\Vdash_\bd^S \alpha^\bot$, conclude $\Vdash_\bc^L \quest \alpha$ by~\ref{eq:supp-quest}. By~\ref{eq:supp-inf} then, $\brl \quest \alpha \brr \Vdash_\bb^\varnothing \quest \alpha$.
    \end{description}
\end{proof}

\begin{proof}[Proof of Lemma~\ref{lmm:validphasespace}]
    We prove the statement by induction on the structure of $\phi$.
    \begin{description}
        \item[$\phi = p:$] $p^* = \{\Gamma \; | \; \Gamma \Vdash_\bb^\varnothing p \}$ by definition.
        
        \item[$\phi = \alpha^\bot:$] Suppose $\Gamma \in (\alpha^\bot)^* = ((\alpha)^*)^\bot = \{\Gamma \; | \; \Gamma \Vdash_\bb^\varnothing \alpha \}^\bot$ by inductive hypothesis. Then, $\forall \Delta \in \{\Gamma \; | \; \Gamma \Vdash_\bb^\varnothing \alpha \}, \Gamma,\Delta \Vdash_\bb^\varnothing \bot$. Since $\alpha \Vdash^\varnothing_\bb \alpha$, choose $\Delta = \{\alpha\}$,~i.e.~$\Gamma,\alpha \Vdash_\bb^\varnothing \bot$, which is equivalent to $\Gamma \Vdash_\bb^\varnothing \alpha^\bot$. Hence, $\Gamma \in \{\Gamma \; | \; \Gamma \Vdash_\bb^\varnothing \alpha^\bot \}$.
        
        Conversely, suppose that $\Gamma \in \{\Gamma \; | \; \Gamma \Vdash_\bb^\varnothing \alpha^\bot \}$ and assume that $\Delta \in \{\Gamma \; | \; \Gamma \Vdash_\bb^\varnothing \alpha \}$. Then $\Gamma \myv \alpha^\bot$ and $\Delta \myv \alpha$. Hence, $\Gamma, \alpha \myv \bot$ and by~\ref{eq:supp-inf}, $\Gamma, \Delta \myv \bot$. Thus, $\Gamma \in \{\Gamma \; | \; \Gamma \Vdash_\bb^\varnothing \alpha \}^\bot = ((\alpha)^*)^\bot$ by inductive hypothesis. Hence, $\Gamma \in (\alpha^\bot)^*$.
        
        \item[$\phi = \alpha \otimes \beta:$] Suppose $\Gamma \in (\alpha \otimes \beta)^* = (\alpha)^* \otimes_M (\beta)^* = \{\Gamma \; | \; \Gamma \Vdash_\bb^\varnothing \alpha \} \otimes_M  \{\Gamma \; | \; \Gamma \Vdash_\bb^\varnothing \beta \} = (\{\Gamma \; | \; \Gamma \Vdash_\bb^\varnothing \alpha \} \uplus \{\Gamma \; | \; \Gamma \Vdash_\bb^\varnothing \beta \})^{\bot\bot}$ by inductive hypothesis. Hence, $\Gamma, \Delta \myv \bot$ for all $\Delta$ s.t. $\Sigma,\Theta,\Delta \myv \bot$ whenever $\Sigma \myv \alpha$ and $\Theta \myv \beta$. By~$(\otimes I)'$~\cite[Theorem 4.2]{DBLP:journals/corr/abs-2504-08349}, $\alpha,\beta \myv \alpha \otimes \beta$, hence $\alpha, \beta, (\alpha \otimes \beta)^\bot \myv \bot$, thus $\Sigma, \Theta, (\alpha \otimes \beta)^\bot \myv \bot$. Now, choose $\Delta = \{(\alpha \otimes \beta)^\bot\}$, so that we can conclude $\Gamma, (\alpha \otimes \beta)^\bot \myv \bot$. Hence, $\Gamma \myv \alpha \otimes \beta$ by Lemma~\ref{lmm:raa}, hence $\Gamma \in \{\Gamma \; | \; \Gamma \Vdash_\bb^\varnothing \alpha \otimes \beta \}$.
        
        Conversely, suppose that $\Gamma \in \{\Gamma \; | \; \Gamma \Vdash_\bb^\varnothing \alpha \otimes \beta \}$ and assume that $\Delta \in (\{\Gamma \; | \; \Gamma \Vdash_\bb^\varnothing \alpha \} \uplus \{\Gamma \; | \; \Gamma \Vdash_\bb^\varnothing \beta \})^\bot$,~i.e.~that $\Delta, \Sigma,\Theta \myv \bot$ whenever $\Sigma \myv \alpha$ and $\Theta \myv \beta$. Choose $\Sigma = \{\alpha\}$ and $\Theta = \beta$, hence $\alpha, \beta,\Delta \myv \bot$. Assume, for an arbitrary $\bc \supseteq \bb$ and atomic multisets $L,K$, that $\Vdash_\bc^L \Gamma$ and $\Vdash_\bc^K \Delta$. Hence, $\Vdash_\bc^L \alpha \otimes \beta$ and $\alpha,\beta \Vdash_\bc^K \bot$. By~\ref{eq:supp-tensor}, obtain $\Vdash_\bc^{L,K} \bot$, hence, by~\ref{eq:supp-inf}, $\Gamma,\Delta \myv \bot$. Thus, $\Gamma \in (\{\Gamma \; | \; \Gamma \Vdash_\bb^\varnothing \alpha \} \uplus \{\Gamma \; | \; \Gamma \Vdash_\bb^\varnothing \beta \})^{\bot\bot} = \{\Gamma \; | \; \Gamma \Vdash_\bb^\varnothing \alpha \} \otimes_M  \{\Gamma \; | \; \Gamma \Vdash_\bb^\varnothing \beta \} = (\alpha)^* \otimes_M (\beta)^*$ by inductive hypothesis. Hence, $\Gamma \in (\alpha \otimes \beta)^*$.

        \item[$\phi = \alpha \parr \beta:$] Suppose $\Gamma \in (\alpha \parr \beta)^* = (\alpha)^* \parr_M (\beta)^* = \{\Gamma \; | \; \Gamma \Vdash_\bb^\varnothing \alpha \} \parr_M  \{\Gamma \; | \; \Gamma \Vdash_\bb^\varnothing \beta \} = (\{\Gamma \; | \; \Gamma \Vdash_\bb^\varnothing \alpha \}^\bot \uplus \{\Gamma \; | \; \Gamma \Vdash_\bb^\varnothing \beta \}^\bot)^{\bot} = (\{\Gamma \; | \; \Gamma \Vdash_\bb^\varnothing \alpha^\bot\} \uplus \{\Gamma \; | \; \Gamma \Vdash_\bb^\varnothing \beta^\bot \} \brr)^{\bot}$ by inductive hypothesis. Hence, $\Gamma,\Delta,\Sigma \myv \bot$ whenever $\Delta \myv \alpha^\bot$ and $\Sigma \myv \beta^\bot$. Choose $\Delta = \{\alpha^\bot\}$ and $\Sigma = \{\beta^\bot\}$. Hence, $\Gamma, \alpha^\bot, \beta^\bot \myv \bot$. By~$(\parr I)'$~\cite[Theorem 4.2]{DBLP:journals/corr/abs-2504-08349}, $\Gamma \myv \alpha \parr \beta$, hence $\Gamma \in \{\Gamma \; | \; \Gamma \Vdash_\bb^\varnothing \alpha \parr \beta\}$.

        Conversely, suppose that $\Gamma \in \{\Gamma \; | \; \Gamma \Vdash_\bb^\varnothing \alpha \parr \beta \}$ and assume that $\Delta \in \{\Gamma \; | \; \Gamma \Vdash_\bb^\varnothing \alpha^\bot \}$ and $\Sigma \in \{\Gamma \; | \; \Gamma \Vdash_\bb^\varnothing \beta^\bot \}$,~i.e.~that $\Delta \myv \alpha^\bot$ and $\Sigma \myv \beta^\bot$. Now, assume, for an arbitrary $\bc \supseteq \bb$ and atomic multisets $L,K,M$, that $\Vdash_\bc^L \Gamma$, $\Vdash_\bc^K \Delta$ and $\Vdash_\bc^M \Sigma$. Hence, $\Vdash_\bc^L \alpha \parr \beta$ and $\Vdash_\bc^K \alpha^\bot$ and $\Vdash_\bc^M \beta^\bot$. By~\ref{eq:supp-parr}, conclude $\Vdash_\bc^{L,K,M} \bot$. Hence, by~\ref{eq:supp-inf}, $\Gamma,\Delta,\Sigma \myv \bot$, thus $\Gamma \in (\{\Gamma \; | \; \Gamma \Vdash_\bb^\varnothing \alpha^\bot\} \uplus \{\Gamma \; | \; \Gamma \Vdash_\bb^\varnothing \beta^\bot \}r)^{\bot}$. Hence, $\Gamma \in (\{\Gamma \; | \; \Gamma \Vdash_\bb^\varnothing \alpha \}^\bot \uplus \{\Gamma \; | \; \Gamma \Vdash_\bb^\varnothing \beta \}^\bot)^{\bot} = \{\Gamma \; | \; \Gamma \Vdash_\bb^\varnothing \alpha \} \parr_M  \{\Gamma \; | \; \Gamma \Vdash_\bb^\varnothing \beta \} = (\alpha)^* \parr_M (\beta)^*$ by inductive hypothesis,~i.e.~$\Gamma \in (\alpha \parr \beta)^*$.

        \item[$\phi = 1:$] Suppose $\Gamma \in 1^* = \bot^\bot$. Then, $\forall\Delta \in \bot$, $\Gamma,\Delta \myv \bot$. Choose $\Delta = \{\bot\}$, hence $\Gamma,\bot \myv \bot$. By~$(\multimap I)'$~\cite[Theorem 4.2]{DBLP:journals/corr/abs-2504-08349} and~\ref{eq:supp-1}, $\Gamma \myv 1$, hence $\Gamma \in \{\Gamma \; | \; \Gamma \Vdash_\bb^\varnothing 1 \}$.

        Conversely, suppose that $\Gamma \in \{\Gamma \; | \; \Gamma \Vdash_\bb^\varnothing 1 \}$,~i.e.~that $\Gamma \myv 1$, and assume that $\Delta \in \bot$. Now, assume, for an arbitrary $\bc \supseteq \bb$ and $L,K$, that $\Vdash_\bc^L \Gamma$ and $\Vdash_\bc^K \Delta$. Hence, $\Vdash_\bc^L 1$ and $\Vdash_\bc^K \bot$. By~\ref{eq:supp-1}, conclude $\Vdash_\bc^{L,K} \bot$, hence by~\ref{eq:supp-inf}, $\Gamma,\Delta\myv \bot$. Thus, $\Gamma \in \bot^\bot = 1^*$.

        \item[$\phi = \alpha \with \beta:$] Suppose $\Gamma \in (\alpha \with \beta)^* = (\alpha)^* \with_M (\beta)^* = \{\Gamma \; | \; \Gamma \Vdash_\bb^\varnothing \alpha \} \with_M \{\Gamma \; | \; \Gamma \Vdash_\bb^\varnothing \beta \} = \{\Gamma \; | \; \Gamma \Vdash_\bb^\varnothing \alpha \} \cap \{\Gamma \; | \; \Gamma \Vdash_\bb^\varnothing \beta \}$ by inductive hypothesis. That means that $\Gamma \in \{\Gamma \; | \; \Gamma \Vdash_\bb^\varnothing \alpha \}$ and $\Gamma \in \{\Gamma \; | \; \Gamma \Vdash_\bb^\varnothing \beta \}$,~i.e.~$\Gamma \myv \alpha$ and $\Gamma \myv \beta$. By~\ref{eq:supp-and}, $\Gamma \myv \alpha \with \beta$, hence $\Gamma \in \{\Gamma \; | \; \Gamma \Vdash_\bb^\varnothing \alpha \with \beta\}$.

        Conversely, suppose that $\Gamma \in \{\Gamma \; | \; \Gamma \Vdash_\bb^\varnothing \alpha \with \beta \}$,~i.e.~that $\Gamma \myv \alpha \with \beta$. By~\ref{eq:supp-and}, $\Gamma \myv \alpha$ and $\Gamma \myv \beta$, hence $\Gamma \in \{\Gamma \; | \; \Gamma \Vdash_\bb^\varnothing \alpha \}$ and $\Gamma \in \{\Gamma \; | \; \Gamma \Vdash_\bb^\varnothing \beta \}$, so $\Gamma \in \{\Gamma \; | \; \Gamma \Vdash_\bb^\varnothing \alpha \} \cap \{\Gamma \; | \; \Gamma \Vdash_\bb^\varnothing \beta \} = \{\Gamma \; | \; \Gamma \Vdash_\bb^\varnothing \alpha \} \with_M \{\Gamma \; | \; \Gamma \Vdash_\bb^\varnothing \beta \} = (\alpha)^* \with_M (\beta)^* $ by inductive hypothesis. Hence, $\Gamma \in (\alpha \with \beta)^*$.

        \item[$\phi = \alpha \oplus \beta:$] Suppose $\Gamma \in (\alpha \oplus \beta)^* = (\alpha)^* \oplus_M (\beta)^* =\{\Gamma \; | \; \Gamma \Vdash_\bb^\varnothing \alpha \} \oplus_M \{\Gamma \; | \; \Gamma \Vdash_\bb^\varnothing \beta \} = (\{\Gamma \; | \; \Gamma \Vdash_\bb^\varnothing \alpha \} \cup \{\Gamma \; | \; \Gamma \Vdash_\bb^\varnothing \beta \})^{\bot\bot}$ by inductive hypothesis. Hence, $\Gamma, \Delta \myv \bot$ for all $\Delta$ s.t. $\Sigma,\Delta \myv \bot$ whenever $\Sigma \myv \alpha$ or $\Sigma \myv \beta$. We know that $\alpha \myv \alpha \oplus \beta$ and $\beta \myv \alpha \oplus \beta$ by~$(\oplus I)'$~\cite[Theorem 4.2]{DBLP:journals/corr/abs-2504-08349}. Hence, $\Sigma \myv \alpha \oplus \beta$, thus $\Sigma, (\alpha \oplus \beta)^\bot \myv \bot$. Choose $\Delta = \{(\alpha \oplus \beta)^\bot\}$, so that we can conclude $\Gamma, (\alpha \oplus \beta)^\bot \myv \bot$. By Lemma~\ref{lmm:raa}, $\Gamma \myv \alpha \oplus \beta$, hence $\Gamma \in \{\Gamma \; | \; \Gamma \Vdash_\bb^\varnothing \alpha \oplus \beta \}$.

        Conversely, suppose that $\Gamma \in \{\Gamma \; | \; \Gamma \Vdash_\bb^\varnothing \alpha \oplus \beta\}$,~i.e.~$\Gamma \myv \alpha \oplus \beta$, and assume that $\Delta \in (\{\Gamma \; | \; \Gamma \Vdash_\bb^\varnothing \alpha \} \cup \{\Gamma \; | \; \Gamma \Vdash_\bb^\varnothing \beta \})^\bot$,~i.e.~that $\Delta, \Sigma \myv \bot$ whenever $\Sigma \myv \alpha$ or $\Sigma \myv \beta$. For $\Sigma = \alpha$ obtain $\Delta, \alpha \myv \bot$, and for $\Sigma = \beta$ obtain $\Delta, \beta \myv \bot$. Assume, for an arbitrary $\bc \supseteq \bb$ and atomic multisets $L,K$, that $\Vdash_\bc^L \Gamma$ and $\Vdash_\bc^K \Delta$. Hence, $\Vdash_\bc^L \alpha \oplus \beta$ and $\alpha \Vdash_\bc^K \bot$ and $\beta \Vdash_\bc^K \bot$. By~\ref{eq:supp-plus}, obtain $\Vdash_\bc^{L,K} \bot$, hence, by~\ref{eq:supp-inf}, $\Gamma,\Delta \myv \bot$. Thus, $\Gamma \in (\{\Gamma \; | \; \Gamma \Vdash_\bb^\varnothing \alpha \} \cup \{\Gamma \; | \; \Gamma \Vdash_\bb^\varnothing \beta \})^{\bot\bot} = \{\Gamma \; | \; \Gamma \Vdash_\bb^\varnothing \alpha \} \oplus_M  \{\Gamma \; | \; \Gamma \Vdash_\bb^\varnothing \beta \} = (\alpha)^* \oplus_M (\beta)^*$ by inductive hypothesis. Hence, $\Gamma \in (\alpha \oplus \beta)^*$.
        
        \item[$\phi = \top:$] Suppose $\Gamma \in \top^* = M$,~i.e.~$\Gamma$ is any multiset. But by~\ref{eq:supp-top}, $\Gamma \myv \top$ always holds, hence $\Gamma \in \{\Gamma \; | \; \Gamma \Vdash_\bb^\varnothing \top \}$.

        Conversely, suppose that $\Gamma \in \{\Gamma \; | \; \Gamma \Vdash_\bb^\varnothing \top \}$,~i.e.~that $\Gamma \myv \top$, which always holds by~\ref{eq:supp-top}, hence $\Gamma$ can be any multiset,~i.e.~$\Gamma \in M = \top^*$.

        \item[$\phi = 0:$] Suppose $\Gamma \in 0^* = (\top^*)^\bot = \{\Gamma \; | \; \Gamma \Vdash_\bb^\varnothing \top \}^\bot$ by inductive hypothesis. Hence, $\Gamma, \Delta \myv \bot$ for all $\Delta$ s.t. $\Delta \myv \top$,~i.e.~for all $\Delta \in M$. Choose $\Delta = \{\top\}$, hence $\Gamma, \top \myv \bot$, and by~$(\multimap I)'$~\cite[Theorem 4.2]{DBLP:journals/corr/abs-2504-08349},~\ref{eq:supp-top} and~\ref{eq:supp-0}, obtain $\Gamma \myv 0$, hence $\Gamma \in \{\Gamma \; | \; \Gamma \Vdash_\bb^\varnothing 0 \}$.

        Conversely, suppose that $\Gamma \in \{\Gamma \; | \; \Gamma \Vdash_\bb^\varnothing 0 \}$,~i.e.~that $\Gamma \myv 0$ and assume that $\Delta \in \{\Gamma \; | \; \Gamma \Vdash_\bb^\varnothing \top \}$. Further assume, for an arbitrary $\bc \supseteq \bb$ and $L,K$ that $\Vdash_\bc^L \Gamma$ and $\Vdash_\bc^K \Delta$. Then $\Vdash_\bc^L 0$, hence $\Vdash_\bc^{L,M} \bot$ for all $M$ by~\ref{eq:supp-0}. In particular, $\Vdash_\bc^{L,K} \bot$. By~\ref{eq:supp-inf}, $\Gamma,\Delta \myv \bot$, hence $\Gamma \in \{\Gamma \; | \; \Gamma \Vdash_\bb^\varnothing \top \}^\bot = (\top^*)^\bot$ by inductive hypothesis. Hence, $\Gamma \in 0^*$.

        \item[$\phi = \bang \alpha:$] Suppose $\Gamma \in (\bang \alpha)^* = \bang_M (\alpha)^* = \bang_M \{\Gamma \; | \; \Gamma \Vdash_\bb^\varnothing \alpha \} = (\{\Gamma \; | \; \Gamma \Vdash_\bb^\varnothing \alpha \} \cap I)^{\bot\bot}$ by inductive hypothesis. Hence, $\Gamma, \Delta \myv \bot$ for all $\Delta$ s.t. $\Sigma,\Delta \myv \bot$ whenever $\Sigma \myv \alpha$ and $\Sigma \in I$. Notice that, if $\Sigma \in I$, then every $\phi \in \Sigma$ is of the shape $\bang \psi$ for some $\psi$. Hence, if $\Sigma \myv \alpha$, then $\Sigma \myv \bang \alpha$ by~$(\bang I)'$ (Theorem~\ref{thm:soundness}). Further notice that $\bang \alpha, (\bang \alpha)^\bot \myv \bot$, thus $\Sigma, (\bang \alpha)^\bot \myv \bot$. Now, choose $\Delta = \{(\bang \alpha)^\bot\}$, so that we can conclude $\Gamma,(\bang \alpha)^\bot \myv \bot$. By Lemma~\ref{lmm:raa}, $\Gamma \myv \bang \alpha$, hence $\Gamma \in \{\Gamma \; | \; \Gamma \Vdash_\bb^\varnothing \bang \alpha \}$.

        Conversely, suppose that $\Gamma \in \{\Gamma \; | \; \Gamma \Vdash_\bb^\varnothing \bang \alpha \}$,~i.e.~$\Gamma \myv \bang \alpha$, and assume that $\Delta \in (\{\Gamma \; | \; \Gamma \Vdash_\bb^\varnothing \alpha \} \cap I)^\bot$,~i.e.~that $\Delta, \Sigma \myv \bot$ whenever $\Sigma \myv \alpha$ and $\Sigma \in I$. Now, choose $\Sigma = \{\bang \alpha\}$. Then $\Delta, \bang \alpha \myv \bot$, and since $\Gamma \myv \bang \alpha$, obtain $\Delta,\Gamma \myv \bot$. Hence, $\Gamma \in (\{\Gamma \; | \; \Gamma \Vdash_\bb^\varnothing \alpha \} \cap I)^{\bot\bot} = \bang_M \{\Gamma \; | \; \Gamma \Vdash_\bb^\varnothing \alpha \}r = \bang_M (\alpha)^*$ by inductive hypothesis. Thus, $\Gamma \in (\bang \alpha)^*$.

        \item[$\phi = \quest \alpha:$] Suppose $\Gamma \in (\quest \alpha)^* = \quest_M (\alpha)^* = \quest_M \{\Gamma \; | \; \Gamma \Vdash_\bb^\varnothing \alpha \} = (\{\Gamma \; | \; \Gamma \Vdash_\bb^\varnothing \alpha \}^\bot \cap I)^{\bot} = (\{\Gamma \; | \; \Gamma \Vdash_\bb^\varnothing \alpha^\bot \} \cap I)^{\bot}$ by inductive hypothesis. Hence, $\Gamma,\Delta \myv \bot$ whenever $\Delta \myv \alpha^\bot$ and $\Delta \in I$. Choose $\Delta = \{\bang (\alpha^\bot)\}$, then $\Gamma, \bang (\alpha^\bot) \myv \bot$. By soundness of B-eS, $\Gamma \myv \quest \alpha$, hence $\Gamma \in \{\Gamma \; | \; \Gamma \Vdash_\bb^\varnothing \quest \alpha \}$.

        Conversely, suppose that $\Gamma \in \{\Gamma \; | \; \Gamma \Vdash_\bb^\varnothing \quest \alpha \}$,~i.e.~$\Gamma \myv \quest \alpha$, and assume that $\Delta \in \{\Gamma \; | \; \Gamma \Vdash_\bb^\varnothing \alpha^\bot \}$,~i.e.~$\Delta \myv \alpha^\bot$, and that $\Delta \in I$. Notice that, if $\Delta \in I$, then every $\phi \in \Delta$ is of the shape $\bang \psi$ for some $\psi$. Hence, if $\Delta \myv \alpha^\bot$, then $\Delta \myv \bang (\alpha^\bot)$ by~$(\bang I)'$ (Theorem~\ref{thm:soundness}). By soundness of B-eS, $\bang (\alpha^\bot) \myv \quest \alpha^\bot$, hence $\Delta \myv \quest \alpha^\bot$. Since also $\Gamma \myv \quest \alpha$, obtain $\Gamma,\Delta \myv \bot$. Hence $\Gamma \in (\{\Gamma \; | \; \Gamma \Vdash_\bb^\varnothing \alpha^\bot \} \cap I)^{\bot} = (\{\Gamma \; | \; \Gamma \Vdash_\bb^\varnothing \alpha \}^\bot \cap I)^{\bot} = \quest_M \{\Gamma \; | \; \Gamma \Vdash_\bb^\varnothing \alpha \} = \quest_M (\alpha)^*$ by inductive hypothesis. Thus, $\Gamma \in (\quest \alpha)^*$.
    \end{description}
\end{proof}

\end{document}